\newcommand{\revision}[1]{{#1}}
\DeclareMathOperator*{\argmax}{arg\,max}
\DeclarePairedDelimiter\abs{\lvert}{\rvert}
\newtheorem{Theorem}{Theorem}
\newtheorem{Lemma}{Lemma}
\newtheorem{Corollary}{Corollary}
\newtheorem{Definition}{Definition}
\newtheorem{Example}{Example}
\newcommand{\EE}{\mathbb{E}}
\newcommand{\NN}{\mathbb{N}}
\newcommand{\ZZ}{\mathbb{Z}}
\newcommand{\calC}{\mathcal{C}}
\newcommand{\calP}{\mathcal{P}}
\newcommand{\calS}{\mathcal{S}}
\newcommand{\calT}{\mathcal{T}}
\newcommand{\BR}{\mathsf{BR}}
\newcommand{\LP}{\mathbf{\textsf{LP}}}
\newcommand{\xx}{\mathbf{x}}
\newcommand{\cc}{\mathbf{c}}
\journal{Artificial Intelligence}
\begin{document}

\begin{frontmatter}



\title{Defending a City from Multi-Drone Attacks:\\A Sequential Stackelberg Security Games Approach}

\author[biu]{Dolev Mutzari\corref{cor}}
\ead{dolevmu@gmail.com}
\cortext[cor]{Corresponding author}
\author[neiu]{Tonmoay Deb}
\ead{tonmoay.deb@northwestern.edu}
\author[unical]{Cristian Molinaro}
\ead{cmolinaro@dimes.unical.it}
\author[unical]{Andrea Pugliese}
\ead{andrea.pugliese@unical.it}
\author[neiu]{V. S. Subrahmanian}
\ead{vss@northwestern.edu}
\author[biu]{Sarit Kraus}
\ead{sarit@cs.biu.ac.il}


\affiliation[biu]{organization={Department of Computer Science, Bar Ilan University},
              country={Israel}}

\affiliation[neiu]{organization={Department of Computer Science, Northwestern University},
              state={IL},
              country={USA}}

\affiliation[unical]{organization={DIMES Department, University of Calabria},
              country={Italy}}

\begin{abstract}
To counter an imminent multi-drone attack on a city, defenders have deployed drones across the city. These drones must intercept/eliminate the threat, thus reducing potential damage from the attack. We model this as a Sequential Stackelberg Security Game, where the defender first commits to a mixed sequential defense strategy, and the attacker then best responds. We develop an efficient algorithm called S2D2, which outputs a defense strategy.  We demonstrate the efficacy of S2D2 in extensive experiments on data from 80 real cities, improving the performance of the defender in comparison to greedy heuristics based on prior works. We prove that under some reasonable assumptions about the city structure, S2D2 outputs an approximate Strong Stackelberg Equilibrium (SSE) with a convenient structure.
\end{abstract}



\begin{keyword}
Multi-Drone Attacks \sep Security Games \sep Sequential Games
\end{keyword}

\end{frontmatter}

\section{Introduction}
There has been a lot of recent concern about multi-drone attacks~\cite{brust2017defending,he2020effective,jurn2021anti,guitton2021fighting,brust2021swarm,chen2022countering,castrillo2022review,li2023optimization}, especially in highly populated urban areas where not all countermeasures can be used~\cite{castrillo2022review}.
Drones can target centers of government and severely damage critical infrastructure (e.g., utilities). It has been proposed~\cite{brust2017defending,brust2021swarm,guitton2021fighting,castrillo2022review} that the city can be defended with drones to counter the attacks and reduce damage to life and property. \revision{As drones are cheap, accessible, and can maneuver above city buildings, effective defense should be equally affordable, and free from ground-based constraints.}

Therefore, in this work we focus on defending against multi-drone attacks on large-scale cities, using defense drones. It is clear that certain locations in the city are more attractive to attack for the attacker and hence more critical for the defender to protect. The goal is therefore to minimize damage rather than to catch the attacker drones as fast as possible. Finally, while aerial drones can be relatively easy to purchase, they are subject to battery and payload constraints.

Stackelberg security games (SSGs) offer a framework to optimize the allocation of defense resources against strategic adversaries. \cite{kar2017trends,sinha2018stackelberg} provide an extensive overview of SSG applications successfully deployed to date. An SSG consists of a defender with some defense resources protecting multiple targets against a strategic attacker. The defender commits to a mixed allocation strategy, and the attacker best responds by attacking the target that maximizes her utility.

Many extensions of the original SSG model~\cite{paruchuri2006security} exist today, e.g.\ bounded rationality attackers~\cite{pita2012robust,pita2015bounded,mutzari2022robust}, partial information~\cite{kiekintveld2010robust}, defense schedules \cite{korzhyk2010complexity}, heterogeneous resources~\cite{korzhyk2010complexity}, multiple defenders~\cite{gan2018stackelberg,mutzari2021coalition} and attackers~\cite{solis2015solving}, attackers with multiple resources~\cite{korzhyk2011security}, and repeated SSGs~\cite{kar2015learning}.
Nevertheless, most research is on Stackelberg equilibria in normal-form games: the defender commits to a mixed strategy, the attacker best responds, and the expected utilities are then directly determined. In particular, the attacker has a single opportunity to attack.

To defend a city from multi-drone attacks\footnote{Our framework also applies to land-based attacks by a coordinated set of attackers, targeting a city with simultaneous or sequential attacks by traversing its roads.}, we use \emph{sequential} SSGs, in which the targets are nodes in a graph, which both players' drones traverse. In particular, we model this as an extensive-form game. The attacker's drones are subject to payload and battery capacity constraints.

\subsection{Related Work}
\label{sec:rw}

Defending against \emph{swarm} aerial drone attacks has been studied extensively --- \cite{guitton2021fighting,castrillo2022review} provide a recent overview. In a drone swarm, each drone acts in real-time based on its local observation of the environment, including neighboring drones. Modeling attacker drones as a swarm is limiting since an attacker with sufficient computational and technological resources can conduct coordinated attacks to increase its utility. For similar reasons, while defense using a drone swarm is more scalable with the number of drones, both computationally and from practical perspectives, it is less effective than a coordinated multi-drone defense mechanism.

Past research on drone swarm attacks can be roughly split into three domains: (i) detection mechanisms focusing on identifying an incoming attack, tracking and classifying air-drones~\cite{jurn2021anti,chiper2022drone}, (ii)
\revision{quickly assessing whether a tracked drone is threatening or not~\cite{deb2025drone}}, and (iii) defense mechanisms that seek to counter and protect against \revision{threatening drone} attacks~\cite{brust2017defending,he2020effective,guitton2021fighting,chen2022countering}. The growing body of work on detection mechanisms is complementary to this work, justifying the assumption that attack drones can be monitored.\footnote{In fact, we make the weak assumption that the location of a drone is known only after its first strike takes place.}

Next, we briefly cover the gaps and limitations of defensive mechanisms other than using defense drones. GPS jamming / spoofing (used e.g.\ in~\cite{he2020effective}) cannot tackle drones that use other navigation methods (visual, radar, etc.), and RF jamming is not effective against autonomous malicious drones. Furthermore, anti-jamming/spoofing techniques may undermine their effectiveness. In addition, these methods may jam civilian applications (e.g., mobile phone communications). We refer to~\cite{castrillo2022review} for further discussion and focus on the defensive drone swarm literature.

\cite{brust2017defending,brust2021swarm} and ~\cite{han2023cooperative} study defense using a drone swarm. These works mostly focus on coordinating defensive drones, and the attacker model is limited. First, only a single attacker drone is considered. Second, it is assumed that the attacker drone is nearby, and was detected before causing any damage. This might work for protecting a facility of interest, but spreading them would enable covering much more ground. Third, once it is detected, the defensive drone swarm assumes the attacker drone follows a straight projectile\footnote{\cite{han2023cooperative} adds a brief discussion on other strategies the attacker might choose.} to predict its future location and catch it rapidly. Obstacles might hinder such movement of the attacker, and more importantly, the attacker is interested not only in evading the defensive swarm but also in striking targets, otherwise it would not take off to begin with. ~\cite{de2021decentralized,manoharan2023multi} alleviated the assumption of straight-line movement by learning from simulations using Deep Reinforcement Learning.

The above works fall under multi-pursuer multi-evader differential games \cite{pontryagin1966theory}, where each player decides on a continuous function over time, called \emph{control} that must admit certain constraints. \cite{chen2016multiplayer} pairs the pursuers and evaders thereby reducing the problem into a single pursuer single evader game, and we follow a similar approach. Differential games (DGs) can be roughly divided into two forms: \emph{open-loop} DGs where the controls depend only on time and initial game state and there is no dependence on the current game state, and \emph{closed-loop} DGs where controls may be a function of the continuously evolving state.

In our setting, we want the defender to be closed-loop and utilize recent work on detecting and monitoring attack drones, whereas the attacker should be open-loop as it does not know the defense drone locations. Another well-studied family of evasion games are cops and robbers~\cite{bonato2011game}, traversing a graph. The locations of each cop and robber are typically visible. There are works on invisible robbers~\cite{kehagias2013cops,kehagias2014role,dereniowski2015zero}, but not on invisible cops. \cite{kehagias2013cops} also considers a drunk robber, which effectively does not take the cops' locations into account, but instead takes a random walk, and we are interested in a rational attacker. Moreover, the goal in evasion games (both on graphs and differential games) is to catch the evaders as fast as possible. In particular, they do not take into account rewards and penalties from successful attacks.

Finally, there has been some work on sequential security games (which is the approach taken in this paper) to model the problem at hand. This should not be confused with repeated SSGs, which are one-shot games, played multiple times to enable players to gain information. For instance, \cite{nguyen2020tackling} studies repeated SSGs with unknown attacker type to handle deception, and~\cite{kar2015learning} studies repeated SSGs where the attacker does not know the defense mixed strategy initially.

In sequential SSGs~\cite{nguyen2019tackling},  the defender and attacker simultaneously traverse a graph. The attacker can attack multiple targets on her path. As in classical SSGs, the defender commits to a mixed strategy, and the attacker best responds. Unlike traditional SSGs, the strategy space is huge. 
\cite{nguyen2019tackling} assumes: (i) the attacker has one drone, (ii) drones carry unbounded payload, (iii) a solution is offered only against two sequential strikes, (iv) solutions assume that either defense movement is unrestricted or is prohibited completely. \cite{tomavsek2020using} extended~\cite{nguyen2019tackling} by alleviating (iii), but assumes a zero-sum finite game, where SSE and NE are equivalent~\cite{korzhyk2011stackelberg}.

General sequential SGs were first considered in~\cite{breton1988sequential}. Exact methods~\cite{bosansky2015sequence,cermak2016using} do not scale to our setting as they are at best linear in the game tree. Heuristic algorithms (e.g.,~\cite{vcerny2018incremental}), being generic, perform poorly in our setting. They do not exploit the graph structure of the problem and lack basic tools (e.g., shortest path and TSP solvers).
\cite{karwowski2015new} considers sequential SGs and develops an MCTS-based heuristic algorithm. Nevertheless, this method is not suitable for finding a Strong Stackelberg Equilibrium (SSE). \cite{vasal2022sequential} considered a discrete-time stochastic Stackelberg game where the attacker has a private type that evolves as a controlled Markov process. They compute a Stackelberg equilibrium by solving lower dimensional fixed-point equations for each time $t$. 
Their technique assumes the state to be small.

\subsection{Contributions}
The main contributions we make are summarized below.
\begin{enumerate}
    \item We extend sequential SSGs to handle multiple attack/defense drones with payload/battery constraints.
    \item We propose \emph{Sequential Stackelberg Drone Defense (S2D2)}, an efficient algorithm to output a defense strategy.
    \item We identify conditions for the underlying graph, under which S2D2 outputs an approximate Strong Stackelberg Equilibrium (SSE), along with an upper bound on the error. We also develop an algorithm to check if a given graph admits such a structure.
    \item Though our theoretical results make assumptions to guarantee the existence of approximate SSEs, not all real-world situations satisfy these conditions. Thus:
    \begin{itemize}
        \item We ran extensive experiments on a dataset of 80 famous world cities (1000s to $\sim$250K nodes) using two distributions (Zipf and log-normal) to assign utilities to neighborhoods of the city.
        \item We conducted a detailed case study of 6 cities (one small and two big US cities, a large and a small city in the Middle East, a megacity in Asia) using utilities provided by experts, rather than random assignment. Our experiments compare S2D2 to a heuristic algorithm based on prior works that trades off runtime and defender utility.
        \item We studied the robustness of the computed approximate SSEs by perturbing the utilities and looking at performance variations. Our results showed that slightly perturbing game parameters (e.g., penalties and rewards) led to proportional changes in defender utility.
    \end{itemize}
    We conclude that even when theoretical assumptions do not hold, S2D2 still yields good results.
\end{enumerate}

\revision{
Section~\ref{sec:discussion} contains
a deeper discussion of the rationale behind our model design, including justifications for key choices, alternative approaches with their trade-offs, and other relevant questions. This section also presents non-trivial arguments that further support our modeling decisions.}

\subsection{Organization}
\revision{Section~\ref{sec:birdseye} provides a high-level, birdseye view of the overall S2D2 architecture and decision. In particular, it explains how the different parts of this paper fit together.}
Section~\ref{sec:seq-ssg} presents the problem of interest, modeled as a sequential SSG. \revision{A deeper discussion of the rationale behind our model design, and comparison with alternative approaches is presented in  Section~\ref{sec:discussion}}.
Section~\ref{sec:s2d2-algorithm} then describes our \emph{S2D2} algorithm, which has three steps. First, a ``coarsening'' algorithm (cf.\ Section~\ref{subsec:coarsening}) partitions an input city graph into clusters (``\emph{neighborhoods}'' 
--- clusters of vertices).
Then, an approximate solution is computed (cf.\ Section~\ref{subsec:single-drone-seq-ssg}), assuming both the attacker and the defender have one drone and play in one neighborhood. This algorithm is an extension of~\cite{korzhyk2010complexity}'s method to sequential games, where the attacker strategy space becomes overwhelmingly large.
We then discuss how to use the solution for the single drone game to find an approximate solution for the multi-drone game (cf.\ Section~\ref{subsec:seq-ssg-approx-solution}). This is achieved by generalizing~\cite{korzhyk2011security}'s work on multi-resource attacker SSGs to support non-linear utilities. S2D2 uses this method to decide the allocation of defense drones into neighborhoods.
Section~\ref{sec:theoretical-results} proves that under a set of conditions on a coarsened graph, S2D2 is sure to output an approximate SSE, and Section~\ref{sec:experiments} presents experimental results.
Finally, Section~\ref{sec:conclusions} outlines our conclusions.

\revision{
\section{Birdseye View of S2D2}\label{sec:birdseye}
In this section, we present a birdseye view of the S2D2 system and describe its architecture (cf. Figure~\ref{fig:architecture}). S2D2 contains the following components.

\revision{
\begin{figure}[h!t]
\centering
{ 
  \color{white}
  \setlength{\fboxsep}{4pt} 
  \setlength{\fboxrule}{0.5pt} 
  \fbox{%
    \includegraphics[width=\linewidth]{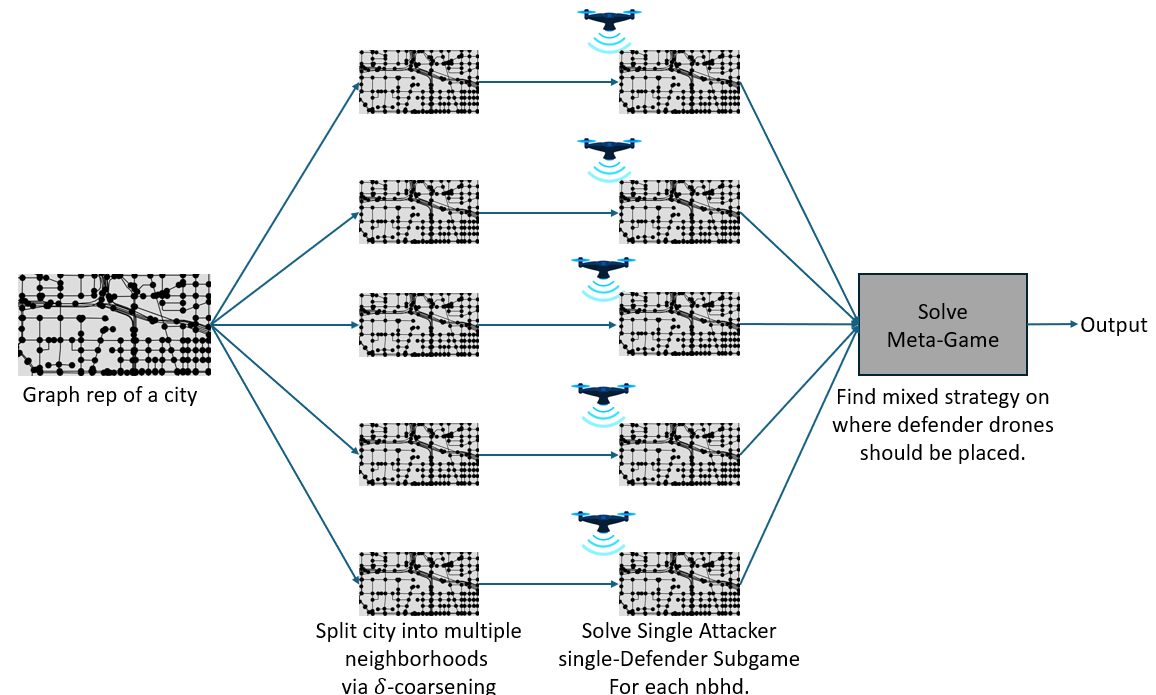}
  }
}
\caption{\revision{S2D2 Architecture.}}
\label{fig:architecture}
\end{figure}
}

\begin{itemize}
\item\textbf{Cities represented as graphs.} We represent cities being protected as a graph. Each node in the graph represents a region on the ground. Adjacent nodes in the graph represent adjacent regions on the ground, i.e., regions that share a common border.
\item\textbf{Coarsening a graph for scalability.} Because cities can be huge (the number of vertices in our
dataset vary from 2.2K to 277K and the number of edges vary from
3.4K to 405K), game-theoretic models will not scale. Because of this, we \emph{coarsen} a graph into \emph{neighborhoods}. A neighborhood consists of a connected set of nodes in the city graph. We will require coarsenings to satisfy some desired properties (discussed further below). An algorithm to find a good coarsening is described in Algorithm~\ref{alg:coarsening} in Section
~\ref{subsec:coarsening}.
\item\textbf{Single-Defender, Single-Attacker Game Per Neighborhood.} Next, we look and ask the question: if a single  defender and a single attacker drone are in a given neighborhood, what strategy would maximize their respective utilities? We solve this problem by building on top of the results of \cite{korzhyk2010complexity}. However, fixing the coarsening first and then solving a single attacker single defender problem could lead to suboptimal solutions. The attacker is not formally restricted to place each drone in a single neighborhood throughout the game, and it may also be suboptimal for the defender to do so. The coarsening algorithm is therefore responsible to correctly capture the attacker and defender incentives, and provide a corresponding coarsening of the graph. We propose the concept of a $\delta$-coarsening that ensures several desirable properties of the coarsening. We then design an algorithm to find a $\delta$-coarsening (Algorithm~\ref{alg:coarsening}).
\item\textbf{Solving the Meta Game.} Once we understand the utilities of the single attacker, single-defender game, one in each neighborhood, we need to determine where the defender must place his/her defender drones.  The third part of the S2D2 algorithm addresses this problem (Algorithm~\ref{alg:meta-game}) using a mixed strategy. This will be discussed further in Section~\ref{sec:multi-drone-milp}.
\end{itemize}
}

\section{Sequential SSGs}
\label{sec:seq-ssg}
\revision{We start by briefly overviewing sequential SSGs in the context of our problem.}
In sequential SSGs, the defender may re-distribute its defense drones after a successful attack. While doing so, the defender knows the attacker drones' location and which targets were destroyed. Meanwhile, the attacker may select and start moving toward other potential targets. The game continues until all attacker drones are either caught, out of battery, or out of payload. The attacker may only attack targets close to her current position. Formally, the game consists of: 
\begin{enumerate}
    \item An undirected graph
    $G = (V, E)$, where:
    \begin{itemize}
        \item $V=\{1,\ldots,m\}$ is a set of $m$ target nodes.
        \item $E$ is a set of undirected edges between targets.
    \end{itemize}
    \item $R^a: V \rightarrow \NN$ and  $P^d: V \rightarrow \ZZ_{<0}$ 
    map each target to the attacker reward and defender penalty,  respectively\footnote{Unlike traditional SSGs, we set attacker penalties and defender rewards to zero ($P^a=R^d=0$) since the attacker is already penalized when caught, as it cannot attack any more targets. Similarly, the defenders are rewarded when they catch the attacker as doing so prevents future strikes.}, from an attack on a given node $v\in V$.
    \item $A,D\in \NN$ are the number of attacker and defender drones, respectively.
    \item The payload $P\in \NN$ each attacker drone is able to carry. This equals the maximal number of attacks each drone can pull-off (if not caught or run out of battery).
    \item The battery capacity $B \in \NN$ each attacker drone has. This equals the maximal total distance it can travel (if not caught). We assume traversing an edge $e\in E$ takes one unit of battery (adding $0$-rewarded/penalized nodes along a long edge if necessary), as well as staying (or loitering) over a node.
\end{enumerate}

\paragraph{Assumptions}
We assume the defender knows  $(A,P,B)$ and the current location of each attacker drone at all times after the first strike by that drone. Defense drones also have a battery capacity $B$. Hence, without loss of generality, the game ends after $B$ steps. The attacker only knows the number of defense drones $D$ at the beginning of the game. Attacker drones do not know the locations of defense drones unless they meet at a node --- this is reasonable as a defender can deploy sensor and other assets in her city. When this occurs, the attacker drone is destroyed. Attacker drones are not informed when other attacker drones are eliminated.

\subsection{Defender and Attacker Strategies} 
The defender knows the location of some attacker drones and can leverage this information. Formally, a pure defender strategy $s_d \in \calS^d$ is a $B$-tuple of functions $(s_1^d,\ldots,s_B^d)$, specifying its strategy at each time-step. The first strategy $s_1^d \in V^D $ specifies the start position of each defense drone. At any step $1<t\le B$, the function $s_t^d$ determines the next step of each drone given the current state of the game, which includes:
\begin{itemize}
    \item Last location of each defense drone $(v_1^d,\ldots,v_D^d) \in V^D$.
    \item Last location of each observed attack drone $(v_1^a,\ldots,v_A^a) \in (V\cup \{ \bot,\dagger\})^A$. We use the special symbol $\bot$ for unknown location (no strike yet), and $\dagger$ for eliminated.
    \item Subset of destroyed targets $I_{t-1} \subseteq V$ (where $I_0=\emptyset$).
\end{itemize}

In a single step, a drone at location $v \in V$  can only reach neighboring locations in graph $G$, i.e.\ $N[v] := \{ v' \in V: \{v,v'\}\in E\} \cup \{v\}$.
The function $s_t^d$ outputs the new location of each defense drone $(\tilde{v}_1^d,\ldots,\tilde{v}_D^d)$ where $\tilde{v}_i^d \in N[v_i^d]$ for each $1\le i \le D$.\footnote{
Note that the game is Markovian: the history of how drones ended up in their last observed location, or the order in which targets have been destroyed, cannot be utilized against a rational attacker.} 
\revision{
Figure~\ref{fig:strategies}(a) provides a visualization (from our S2D2 system) of the defender's strategy overlaid over a map of a city. The locations of defender drones (blue) and attacker drones (red) as well as the destroyed parts of the city are shown as icons. The defender's strategy specifies a function that answers the following question: given a picture like the one depicted, where should the blue drones move to next?}

\revision{
\begin{figure}[h!t]
\centering

{ \color{white}
  \fbox{%
    \setlength{\fboxsep}{0pt}   
    \setlength{\fboxrule}{0.3pt}  
    \begin{tabular}{@{}c@{}c@{}}
        \includegraphics[width=0.48\textwidth]{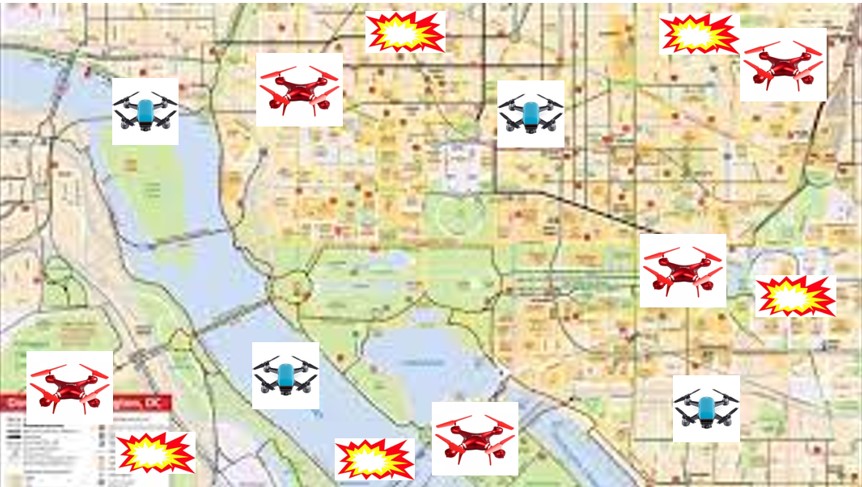} &
        \includegraphics[width=0.48\textwidth]{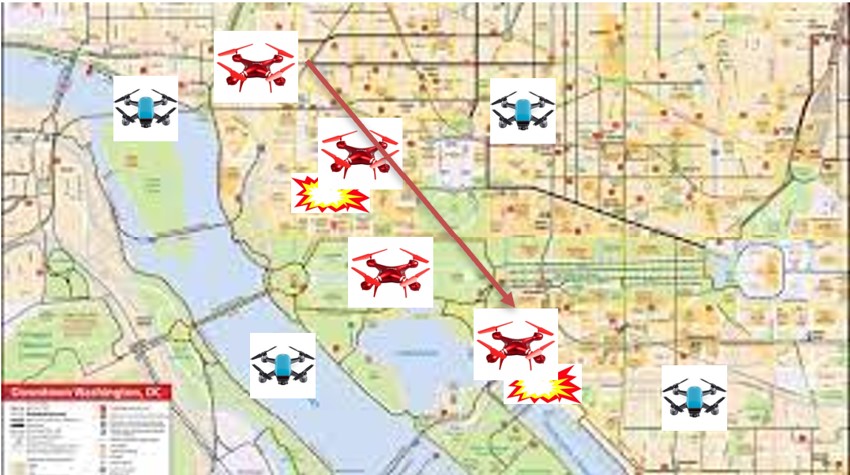} \\
        (a) & (b)
    \end{tabular}
  }
}
\caption{\revision{(a) Visualization of Defender Strategy. (b) Visualization of Attacker
Strategy. }}
\label{fig:strategies}
\end{figure}
}

\revision{
Figure~\ref{fig:strategies}(b) shows the attacker strategy. For each attacker drone (shown in red), a flight path is specified (shown for one red drone in Figure~\ref{fig:strategies}(b) as a red arrow). In addition, the strategy specifies where each attacker drone will actually target with one unit of payload. In Figure~\ref{fig:strategies}(b), we see two locations where payload is used by this attacker, marked by an explosion icon. To keep the figure simple, we do not show these flight paths and payload utilization for the other attacker drones depicted.}
The pure strategies for the attacker are related to $B$-length paths in the graph. We use $\calP_B := \{(v_1,\ldots,v_B) \in V^B \mid \forall 1\le t < B: \{v_t,v_{t+1}\} \in E \revision{~\lor~ v_t=v_{t+1}} \}$ to denote the set of all paths of length $B$ in $G$, and let $\calP_0=\{\emptyset\}$. Recall that traversing each edge requires one battery unit, \revision{as well as hovering over a node ($v_{t+1}=v_t$)}.\footnote{The sequential SSG has a few natural extensions which we may consider in future work. These include: (i) \textit{Heterogeneous drones}: The attacker may have drones of different types, $(B_1,P_1),\ldots,(B_A,P_A)$. (ii) \textit{Distances}: The edges may be weighted as well, by the distance between its endpoints. Adding $d-1$ vertices along an edge with distance $d$ will not yield a reduction. Indeed, one has to define a reward over these new vertices, say $0$. Still, the defender will know the attacker's position in the first step along the split edge. (iii) \textit{Velocities}: Different drones may fly with different velocities. The velocity may also depend on the percentage of loaded payload. (iv) \textit{Defense schedules}: Allocating a defense drone to some target $v$ may also protect its neighbors $N(v)$.}
Furthermore, each attacker drone must decide which targets to attack. Let $\calT_{P,B}=\{I\subseteq \{1,\ldots,B\} \mid \abs*{I} \le P\}$ denote the set containing sets of at most $P$ indices along the path of length $B$ to be attacked. The set of pure strategies of the attacker is therefore $\calS^a = (\calT_{P,B} \times \calP_{B})^A$.

\paragraph{Utility} 
Given an attacker (resp. defender) strategy $s_a \in \calS^a$ (resp. $s_d \in \calS^d$), we can recursively compute utilities at time $t$. Initially, $u_0^a=u_0^d=0$. At time $t>0$, we  compute the position of all surviving drones from the specified strategies and the previous drone locations. We update the utilities $u_t^a = u_{t-1}^a+r_t^a$ and $u_t^d = u_{t-1}^d+p_t^d$ where $r_t^a$ ($p_t^d$) is the sum of rewards (resp. penalties) from successful attacks at step $t$ for the attacker (defender). We then nullify the rewards for targets that were successfully attacked at time step $t$, and eliminate any attacker that is either caught or out of payload. Finally, we set $u^a(s_d,s_a)=u_B^a, u^d(s_d,s_a)=u_B^d$.

\subsection{Mixed Strategies} 
The defender may use a \emph{mixed} strategy. In other words, it may sample its strategy from a distribution $\xx_d \in \Delta(\calS^d)$, where $\Delta(\calS^d)$ is the set of all probability distributions over $\calS^d$. For the special case where $B=1$ (the non-sequential SSG), we can use a compact representation $\calC_D := \{\xx \in [0,1]^m: \sum_{v\in V} x_v \le D \}$ of the set of defender mixed strategies. A vector $\xx\in\calC_D$ is called a \emph{coverage vector}, and it denotes the probability that each node $v\in V$ is covered by some defense drone. Coverage vectors can provably be implemented by a distribution over deterministic allocation strategies, each using at most $D$ resources. This distribution can also be found efficiently, see~\cite{korzhyk2010complexity}, Theorem~1. 

In Stackelberg games, the attacker can conduct surveillance on the defender's (mixed) strategy $\xx_d$ beforehand and best respond to it. Assume now the defender and the attacker play mixed strategies over $\calS^a,\calS^d$, respectively. Given mixed strategies $\xx_d,\xx_a$, the utility of the attacker (and similarly the defender) is given by 
\begin{eqnarray}
\label{eq:u-d}
u^a(\xx_d,\xx_a)&:=&\EE_{(s_d,s_a) \sim \xx_d \times \xx_a}[u^a(s_d,s_a)]\\
&=& \sum_{(s_d,s_a)\in \calS^d \times \calS^a}{\xx_d(s_d)\xx_a(s_a)\cdot u^a(s_d,s_a)} \nonumber 
\end{eqnarray}

\begin{Example}[Sequential SSG: Toy Example]\label{example:seq-ssg-toy}
Consider a toy graph $G=(V,E)$ with $m=41$ vertices and edges depicted in Figure~\ref{fig:seq-ssg-neighborhoods}. 
\begin{figure}[h!t]
\centering
{ \color{white}
\fbox{
\includegraphics[width=0.8\textwidth]{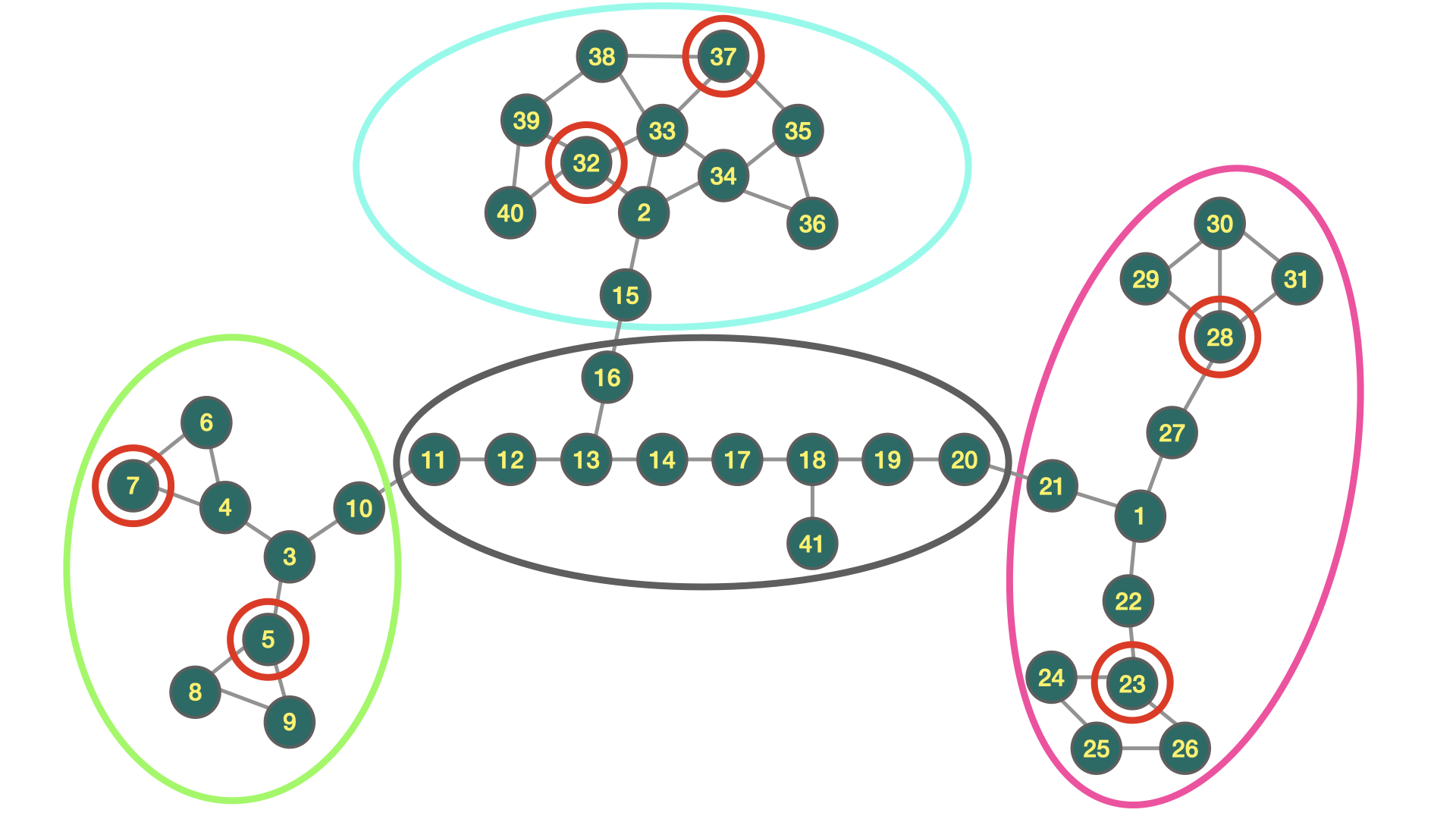}
}}

\caption{\revision{A graph and its coarsening into neighborhoods.}}
\label{fig:seq-ssg-neighborhoods}
\end{figure}
Suppose we set $P^d\equiv -R^a$ in our example, and
the attacker rewards are set to one for targets $v_5,v_7,v_{23},v_{28},v_{32},v_{37}$, and zero for all the rest.
Suppose the defender and the attacker both have $A=D=2$ drones, and that $B=4,P=2$ for attacker drones. 

A defender pure strategy may first place the defense drones on $v_3,v_1$ respectively. Then, given the attacker position, the strategy would let each defense drone follow the closest path towards the closest attacker drone. Denote this strategy by $s_d^1$.
Suppose the attacker plays strategy $s_a$ where her drones are at $v_{37},v_{28}$. The first drone follows path $v_{37} \rightarrow v_{38} \rightarrow v_{39} \rightarrow v_{32}$, and attacks $v_{37}$ and $v_{32}$. The second drone follows $v_{28} \rightarrow v_{27} \rightarrow v_1 \rightarrow v_{22} \rightarrow v_{23}$ and attacks $v_{28}, v_{23}$.
In this case, the defense drone starting at $v_3$ will not do much, but the defense drone starting at $v_1$ will catch the drone that started at $v_{28}$ before $v_{23}$ is attacked. We can verify that when facing pure strategies, the attacker may always successfully attack two meaningful targets using one of her drones. Instead, the defender may use a mixture $\xx_d$ of $3$ strategies, each for instance with probability $1/3$. Suppose $\xx_d(s_d^1)=1/3$, $\xx_d(s_d^2)=1/3$ and $\xx_d(s_d^3)=1/3$.  
In $s_d^2$, the defense drones start from $v_2,v_3$, and in $s_d^3$, they start from $v_1,v_2$. By doing so, there is always a probability ($2/3$ in this case) that a defense drone is ``in the hood''.
\end{Example}

In SSGs, the attacker knows the defender's mixed strategy $\xx_d \in \Delta(\calS^d)$, and then best responds to it with $s_a \in \BR^a(\xx_d)$. Since the utility of the attacker from a mixed strategy is the weighted average of the utilities from each pure strategy, she may always choose a pure strategy that yields the maximal utility. Therefore, w.l.o.g., the attacker's best response set consists of pure strategies only: 

\[\textstyle \BR^a(\xx_d) := \arg\max_{s_a\in \calS^a} u^a(\xx_d, s_a)\] 


When there are multiple targets in $\BR^a(\xx)$, we take the standard approach~\cite{leitmann1978generalized}
and assume that the attacker breaks ties in favor of the defender. The reason is that by reducing the coverage of the desired target by an arbitrarily small amount, the attacker will attack the desired target and the defender will suffer an arbitrarily small utility loss. We therefore define 
\[\textstyle \BR^d(\xx_d) = \arg\max_{s_a \in \BR^a(\xx_d)} u^d(\xx_d,s_a).\]
The set of strategies in $\BR^a(\xx)$ are the ones that are best for the defender. We may then define $u^d(\xx_d):=u^d(\xx_d,s_a), u^a(\xx_d):=u^a(\xx_d,s_a)$ for $s_a\in\BR^d(\xx_d)$. This is well-defined as the value is independent of the choice of $s_a$.

The typical solution concept for SSGs is Strong Stackelberg Equilibrium (SSE).

\begin{Definition}[Strong Stackelberg Equilibrium]
A strategy profile $(\xx_d,s_a)\in\Delta(\calS^d) \times \calS^a$ is a \emph{Strong Stackelberg Equilibrium} iff $$\xx_d\in\argmax_{\xx_d'\in \Delta(\calS^d)}{u^d(\xx_d')} ~\text{and}~ s_a\in\BR^d(\xx_d).$$
\end{Definition}

\noindent Approximate SSE's are defined analogously.

\begin{Definition}[$\epsilon$-approximate SSE]
A strategy profile $(\xx_d,s_a)\!\in\!\Delta(\calS^d) \times \calS^a$ is an \emph{$\epsilon$-approximate SSE ($\epsilon$-SSE)} iff 
\begin{eqnarray}
u^a(\xx_d,s_a) + \epsilon &\ge& \max_{s_a'\in \calS^a}{u^a(\xx_d, s_a')} \label{eq:br-eps-seq} 
~~\text{and}~~ \\
u^d(\xx_d,s_a)+\epsilon &\ge& \max_{\xx_d'\in\Delta(\calS^d)} {u^d(\xx_d',\BR_\epsilon^d(\xx_d'))}, \nonumber
\end{eqnarray}
where $\BR_\epsilon^a(\xx_d')$ consists of all strategies $s_a$ satisfying~(\ref{eq:br-eps-seq}), and $\BR^d_\epsilon(\xx_d')$ $\subseteq \BR_\epsilon^a(\xx_d')$ consists of all strategies  in $\BR_\epsilon^a(\xx_d')$ that maximize defender utility (breaking ties optimistically).
\end{Definition}

\paragraph{Finding SSE Efficiently by Solving Linear Programs}
Equation~(\ref{eq:u-d}) suggests that the defender's utility is linear with respect to the coverage vector $\xx_d$. Furthermore, the defender's strategy space $\Delta(\calS^d)$ is a polytope. This suggests using linear programming. We extend the approach in~\cite{korzhyk2010complexity} for $B=1$ to general sequential games as detailed below. We want to compute: $$\xx_d\in\argmax_{\xx_d'\in\Delta(\calS^d)}{u^d(\xx_d')} = \argmax_{\xx_d'\in\Delta(\calS^d)}{u^d(\xx_d',\BR^d(\xx_d'))}.$$ The only problem is that the $\BR^d(\xx_d)$ is not linear. Our idea is to solve, for each potential $s_a^*$ candidate for $\BR^d(\xx_d)$, the LP (linear program):
\begin{itemize}
    \item Maximize $u^d(\xx_d,s_a^*)$, subject to:
    \begin{enumerate}
        \item $\xx_d \in \calC_D$.
        \item $\forall s_a\in \calS^a$, $u^a(\xx_d,s_a) \le u^a(\xx_d,s_a^*)$.
    \end{enumerate}
\end{itemize}
That is, we add $|\calS^a|$ linear constraints to ensure that $s_a^* \in \BR^a(\xx_d)$, and enumerate over $s_a^*$. At the end, we pick the solution that gives the defender the greatest utility.

\paragraph{Multiple Attack Resources}
In the sequential SSG, we consider multiple attacker drones, that is, multiple attacker resources. In this case~\cite{korzhyk2011security} showed that finding SSE is NP-hard. This also implies that the problem of finding sequential SSGs is NP-hard via a reduction from finding SSE in SSGs with multiple attacker resources. Simply let each attacker drone have a single unit of battery, to make the game effectively a non-sequential SSG. Nevertheless, NP-hard problems like MILPs (Mixed Integer Linear Programs) are well-studied and practical solutions have been developed previously. Indeed, S2D2 involves a reduction to a MILP.

Table~\ref{tab:symbols} summarizes the symbols used in this paper. \revision{A comprehensive discussion regarding our proposed model is provided in Section~\ref{sec:discussion}.}

\begin{table}[h!t]
\centering
\begin{scriptsize}
\begin{tabular}{r|l}
\toprule
\emph{Symbol} & \emph{Meaning}\\
\midrule
& {\bf SSSG Model}: \\
$G = (V, E)$ & City graph, where $V=\{1,\ldots,m\}$ is the set of nodes\\
& and $E$ is the set of undirected edges \\
$(R^a,P^d): V \rightarrow \NN \times \ZZ_{<0}$ & Attacker reward and defender penalty functions\\
$A,D\in \NN$ & Number of attacker and defender drones\\
$P,B\in \NN$ & Payload and battery capacity of attacker drones\\
$(v_1^d,\ldots,v_D^d) \in V^D$ & Locations of defender drones \\
\multirow{2}{*}{$(v_1^a,\ldots,v_A^a) \in (V\cup \{ \bot,\dagger\})^A$} & Locations of attacker drones \\
& ($\bot$ means unknown location, $\dagger$ means eliminated)\\
$s_d = (s_1^d,\ldots,s_B^d) \in \calS^d$ & Pure defender strategy, where $s_t^d$ is the policy at time $t$\\
$s_a\in \calS^a$ & Pure attacker strategy\\
$\xx_d\in\Delta(\calS^d)$ & Mixed defender strategy\\
$u^a(s_d,s_a),u^d(s_d,s_a)$ & Attacker and defender utilities under $s_d$ and $s_a$\\
$u^a(\xx_d,s_a),u^d(\xx_d,s_a)$ & Attacker and defender utility under $\xx_d\in \Delta(\calS^d)$ and $s_a\in\calS^a$\\
$\BR^a(\xx_d) \subseteq \calS^a$ & Set of best attacker responses to defender's mixed strategy $\xx_d\in\Delta(\calS^d)$\\
\midrule
& {\bf Coarsening}: \\
\revision{$\delta$} & \revision{Scale parameter, rewards smaller than $\delta$ are neglected}\\
$\hat{V}=\{\hat{v}_1,\ldots,\hat{v}_k\}$ & Coarsening of $G$, a set of disjoint neighborhoods $\hat{v}_i \subseteq V$\\
$\calS^a_{\hat{V}},\calS^d_{\hat{V}}$ & Set of pure attacker and defender strategies that \\
& respect the coarsening $\hat{V}$ \\
\midrule
& {\bf Single Drone Parameterized Sub-game}: \\
$\lambda$ & \revision{A parameter}, fixing the probability that a defender is present in a neighborhood\\
$u^d(\xx_d,s_a,\lambda),u^a(\xx_d,s_a,\lambda)$ & Attacker and defender utilities in a single drone game at a given neighborhood,\\
& under single drone strategies $\xx_d,s_a$, and defender presence probability $\lambda$\\
\midrule
& {\bf Multi-Drone Meta Game}: \\
$\hat{f}_a: \{1,\ldots,A\} \mapsto \hat{V}$ & Mapping from attacker drones to attacked neighborhoods \\
$\hat{p}_d(\hat{v})$ & Probability that a defender is present in $\hat{v}$\\
$(\hat{f}_a,\hat{s}_a),(\hat{p}_d,\hat{x}_d)$ & Attacker and defender multi-drone strategies \\
\bottomrule
\end{tabular}
\end{scriptsize}
\caption{\revision{Symbols used in the paper.}}
\label{tab:symbols}
\end{table}

\section{The S2D2 Algorithm}\label{sec:s2d2-algorithm}
The S2D2 algorithm generates a mixed defense strategy through three steps:
\begin{enumerate}
    \item \textbf{Coarsening} the graph, which \revision{involves partitioning it into artificial \emph{neighborhoods}}. The goal is to output a partition such that both the attacker and the defender are incentivized to spread their drones across different neighborhoods and stay there throughout the game. The defense (and attack) strategies can then be decomposed into the following two components.
    \item \textbf{Single-Attacker Single-Defender Game per Neighborhood.} 
    For each neighborhood, we solve a Single-Attacker Single-Defender sub-game and compute an approximate SSE. In reality, there is a probability \revision{$\hat{p}_d(\hat{v}) \in [0,1]$} that a defender is present in a neighborhood. \revision{Since this probability is unknown a-priori, it is treated as an unknown variable $\lambda$, provided as an additional input parameter.} S2D2 discretizes the interval $[0,1]$ into evenly spaced intervals and solves the problem for each $\lambda_i \in [0,1]$.
    \item \revision{\textbf{Solving the Meta-Game.} Once we know the defender utilities for each neighborhood, we can solve the problem of assigning a defender drone to each neighborhood.} Basically, each neighborhood is considered as one ``meta''-target. \revision{In this step,} S2D2 \textbf{determines a mixed strategy for allocating defense drones to neighborhoods} via a reduction to a non-sequential SSG between a multi-resource defender and a multi-resource attacker. The utility functions for both the attacker and defender are approximated by piece-wise linear functions, derived from solving the single-defender single-attacker sub-game within each neighborhood, as a function of the defender presence probability $\lambda$.
\end{enumerate}
\revision{The high-level pseudocode of the S2D2 algorithm is provided in 
Algorithm~\ref{alg:s2d2}.

\begin{algorithm}[t!]
\caption{\textsf{S2D2}}\label{alg:s2d2}
\begin{small}
\begin{algorithmic}[1]
\revision{
\REQUIRE
An undirected graph $G = (V, E)$; \\
\quad\ \  numbers of attacker and defender drones $A,D\in \NN$; \\
\quad\ \  attacker drone's  payload $P \in \NN$; \\
\quad\ \ drone's battery capacity $B \in \NN$; \\
\quad\ \ attacker rewards $R^a \in \NN^{|V|}$; \\
\quad\ \ defender penalties $P^d \in \ZZ_{<0}^{|V|}$. \\
\quad\ \ Discretization parameters $\#\lambda,\lambda_c$ for the piece-wise linear approximation of the single-drone utility sub-games within each neighborhood.
\ENSURE  An $\varepsilon$-SSE defense strategy $\xx^d=(\hat{V},\hat{p}_d,\hat{x}_d,\varepsilon)$ and $\varepsilon$, or $\xx^d$ and $\bot$, where: \\
\quad\ \ $\hat{V}$ is a coarsening of $G$ and $(\hat{p}_d,\hat{x}_d) \in \Delta(\calS^d_{\hat{V}})$; \\
\quad\ \ $\hat{p}_d \in \calC_D^{\hat{V}}$ is the allocation strategy of $D$ drones into neighborhoods of $\hat{V}$; \\
\quad\ \ $\hat{x}_d$ is the single-drone defense strategy within each neighborhood of $\hat{V}$;
\STATE Compute a coarsening $(\delta, \hat{V}) \leftarrow \textsf{Coarsening}(G,\ldots)$;
\FOR{\textsf{each} neighborhood $\hat{v}\in\hat{V}$}
    \STATE Compute piece-wise linear approximations of $u^a_{\hat{v}}(\lambda),u^d_{\hat{v}}(\lambda)$, the attacker and defender utilities for the single-drone game in neighborhood $\hat{v}$, where the defender is present with probability $\lambda$:
    \FOR{$\lambda = \frac{0}{\#\lambda},\frac{1}{\#\lambda},\ldots,\lambda_c$}
        \STATE Set $S^a=\textsf{ScanAttackStrategies}(\hat{v},\lambda,\ldots)$ (reduced attack strategy space).
        \STATE Set $S^d=\textsf{ScanDefenseStrategies}(\hat{v},S^a,\ldots)$ (reduced def. strategy space).
        \STATE Compute $u^d_{\hat{v}}(\lambda),u^a_{\hat{v}}(\lambda)$ as in~\cite{korzhyk2010complexity} and corresponding mixed strategy $\hat{x}^d(\hat{v})$, when restricting the attacker and defender strategy space to $S^a,S^d$.
    \ENDFOR
\ENDFOR
\STATE Invoke $\langle \hat{p}^d, \hat{f}^a \rangle \gets \textsf{SolveMetaGame}(\hat{V}, A, D, \{u_{\hat{v}}^d(\lambda), u_{\hat{v}}^a(\lambda)\}_{\hat{v}\in\hat{V}}$, to get the mixed allocation strategy $\hat{p}^d$ of $D$ drones into the neighborhoods of $\hat{V}$, by solving the static, multi-resource SSG with respect to the approximate utility functions $u^d_{\hat{v}}(\lambda),u^a_{\hat{v}}(\lambda)$.
\STATE In case $\delta \neq \bot$, compute $\varepsilon$ as in Theorem~\ref{theorem:approximate-sse}, otherwise set $\varepsilon=\bot$.
\RETURN $\xx^d,\varepsilon$;
}
\end{algorithmic}
\end{small}
\end{algorithm}}

In reality, the attacker may opt to ignore the coarsening found by S2D2. This may happen either since the attacker is not rational, or because a ``good coarsening'' does not exist. In such a case, S2D2 randomly picks, for each defense drone, an attacker drone in its neighborhood, and ignores the rest. In addition, whenever an attacker drone leaves a neighborhood, the defender drone in that neighborhood halts. When the coarsening admits certain properties, we show in Section~\ref{sec:theoretical-results} that this does not result in a major utility loss for the defender.

\revision{\paragraph{Approximations} S2D2 tries to find an $\epsilon$-approximate SSE, balancing the defender's computational resources with the approximation error $\epsilon$. To achieve this, S2D2 introduces a scale parameter $0<\delta<\max_v R^a{v}$, effectively disregarding rewards smaller than $\delta$. As $\delta$ increases, fewer nodes are deemed valuable, allowing S2D2 to focus on smaller subset of nodes to protect. Consequently, while this simplification reduces computational complexity, it also decreases the accuracy of S2D2's view of the game, leading to an expected increase in the approximation error $\epsilon(\delta)$. However, under certain conditions for the underlying graph, $\epsilon(\delta)$ can be bounded, which provides theoretical guarantees for our algorithm. Even when these conditions are not met, empirical results demonstrate that S2D2 performs effectively in practice.}

The next 3 subsections describe the three components listed above.

\subsection{Coarsening the Graph}\label{subsec:coarsening}
A \emph{coarsening} of $G=(V,E)$ is a set $\hat{V}=\{\hat{v}_1,\ldots,\hat{v}_k\}$ such that $\hat{v}_i \subseteq V$ for each $1\le i\le k$ and $\hat{v}_i\cap\hat{v}_j=\emptyset$ for any $i\neq j$. Each subset in $\hat{V}$ is a \emph{neighborhood}. A good coarsening is akin to ``zooming-out'', where nearby nodes are merged into a single neighborhood.

\revision{Ideally, a ``good'' coarsening (Step 1 of the S2D2 algorithm) cannot be found without simultaneously computing the utilities of the defender for that coarsening which is only considered in Step 2 of the S2D2 algorithm. One way to do this is to generate all possible coarsenings, then find the best defender strategy for each coarsening, and then pick the coarsening and defender strategy that yields the best utility for the defender. Unfortunately, this is not practical to compute.
We therefore introduce the concept of a $\delta$-coarsening to ensure that a coarsening is ``good'' and has some desirable properties.}

\revision{The scale parameter $\delta$ controls the granularity of the coarsening. Since S2D2 neglects rewards smaller than $\delta$, increasing $\delta$ reduces the number of nodes the coarsening algorithm considers. A node $v$ is deemed \emph{$\delta$-valuable} if $R^a(v)>\delta$. The coarsening algorithm then clusters these $\delta$-valuable nodes. In each cluster, all the $\delta$-valuable nodes are relatively close, while the clusters themselves remain relatively separated. The resulting coarsening then consists of a set of neighborhoods, each centered around a cluster of $\delta$-valuable nodes (see Figure~\ref{fig:seq-ssg-neighborhoods}).}

S2D2 coarsens via two steps, as depicted in Algorithm~\ref{alg:coarsening}. First, it attempts to detect a ``high-quality'' coarsening, referred to as $\delta$-coarsening. When a $\delta$-coarsening exists, we prove in Section~\ref{sec:theoretical-results} that S2D2 approximates SSE. A $\delta$-coarsening must satisfy four conditions: (i) getting from outside a neighborhood to a $\delta$-valuable node within it takes too much battery; (ii) there are sufficiently many valuable neighborhoods; (iii) a single attacker can collect most $\delta$-valuable rewards in its neighborhood; (iv) the presence of a defender significantly impacts both attacker and defender expected utility.
\revision{When a $\delta$-coarsening exists, the first step aims to minimize $\delta$, and does so efficiently by applying a binary search. Indeed, if any of conditions (i)-(iv) are not met for some $\delta_{\mathit{low}}$, they cannot be met for any $\delta<\delta_{\mathit{low}}$.}

\revision{If $\delta_{\mathit{low}} > \delta_{\mathit{up}}$, a $\delta$-coarsening may not exist at all.} To this end, if S2D2 fails to detect a $\delta$-coarsening in the first step, it proceeds to the second step, where it coarsens the graph using a greedy heuristic. \emph{It is important to note that S2D2 works even when no $\delta$-coarsening exists --- but in this case, the theoretical guarantees do not hold.}
In the following Example~\ref{example:coarsening} we provide an illustrative example of a coarsening.

\begin{Example}[Coarsening]
\label{example:coarsening}
Consider the graph in Example~\ref{example:seq-ssg-toy}, Figure~\ref{fig:seq-ssg-neighborhoods}. The gray neighborhood has no valuable nodes
and so is removed. Next, getting from one neighborhood to a valuable node of another requires going through the grey neighborhood, which takes a prohibitive amount of battery (i). Note that we only consider nodes circled in red when evaluating this condition as other nodes have no reward. Next, note that a single drone can tackle both red nodes within each neighborhood (iii). Unfortunately, the other two conditions (ii) and (iv) are not met with the desired constants required for the theoretical proof to hold. As for (ii), since we present a toy graph as an illustrative example, it only has $3$ neighborhoods (and $4$ are required). Splitting some neighborhoods into two may potentially violate (i). Similarly, for (iv), a defender can always stay put on one red node and block the attacker from successfully attacking both valuable nodes within every neighborhood, yet in this case, it yields a factor of $2$ between the utility from a protected neighborhood and an unprotected one. In more complex games with larger $B,P$ values and larger neighborhoods, the gap could be significantly larger.
\end{Example}

\begin{algorithm}[t!]
\caption{\textsf{Coarsening}}\label{alg:coarsening}
\begin{small}
\begin{algorithmic}[1]
\REQUIRE 
An undirected graph $G = (V, E)$; \\
\quad\ \  numbers of attacker and defender drones $A,D\in \NN$; \\
\quad\ \  attacker drone's  payload $P \in \NN$; \\
\quad\ \ drone's battery capacity $B \in \NN$; \\
\quad\ \ attacker rewards $R^a \in \NN^{|V|}$; \\
\quad\ \ defender penalties $P^d \in \ZZ_{<0}^{|V|}$. \\
\ENSURE 
($\delta$)-Coarsening $\hat{V}$ and $\delta$, or failure.
\STATE $\delta_{\mathit{low}}\leftarrow 1$, $\delta_{\mathit{up}}\leftarrow 1+\max_{v\in V}{R^a(v)}$;
\WHILE{$\delta_{\mathit{low}} < \delta_{\mathit{up}}$}
    \STATE $\delta \leftarrow \lfloor (\delta_{\mathit{low}}+\delta_{\mathit{up}})/2 \rfloor$;
    \STATE $\hat{V} \leftarrow V/\!\approx_\delta$; \label{step:call-coarsen}
    \STATE \textsf{init} table $T$;
    \FOR{\textbf{each} $\hat{v} \in \hat{V}$}
        \STATE $T[\hat{v}] \leftarrow \sum\limits_{v\in \hat{v}.\textsf{top}(P,\textsf{by}=R)}{R^a(v)}$; \revision{\COMMENT{sum of top-$P$ rewards}}
    \ENDFOR
    \STATE $\hat{V} \leftarrow \{\hat{v}\in \hat{V} \mid \frac{4}{3}T[\hat{v}] \ge T.\textsf{max}() \}$; \revision{\COMMENT{Remove poor neighborhoods}} \label{step:remove-poor-neighborhoods} \label{step:set-valuable}
    \IF{$|\hat{V}| < 4\max\{A,D\}$}
        \STATE $\delta_{\mathit{low}} \leftarrow \delta + 1$; \revision{\COMMENT{Not enough neighborhoods}} \label{step:not-enough-neighborhoods}
    \ELSIF{$\exists\hat{v}\in \hat{V}: |\{v\in\hat{v}\mid R^a(v)>\delta\}| > P$}
        \STATE $\delta_{\mathit{low}} \leftarrow \delta + 1$; \revision{\COMMENT{Insufficient attacker payload}} \label{step:attacker-can-collect-payload}
    \ELSIF{$\exists\hat{v}\in \hat{V}: \mathsf{best\mbox{-}path}(\hat{v},\delta)>B$}
        \STATE $\delta_{\mathit{low}} \leftarrow \delta + 1$;
        \revision{\COMMENT{Insufficient attacker battery}} \label{step:attacker-can-collect-battery}
    \ELSIF{$\exists\hat{v}\in\hat{V}: \frac{3}{64|\hat{V}|} u_{1,0}^{\hat{v},a} \le u_{1,1}^{\hat{v},a}$ \textbf{or} $\frac{3}{8|\hat{V}|} |u_{1,0}^{\hat{v},d}| \le |u_{1,1}^{\hat{v},d}| + \delta P$}
        \STATE $\delta_{\mathit{low}} \leftarrow \delta + 1$;
        \revision{\COMMENT{Defender presence is ineffective}} \label{step:defense-presence-meaningful}
    \ELSE
        \STATE $\mathit{sol}\leftarrow(\hat{V},\delta)$;
        \STATE $\delta_{\mathit{up}} \leftarrow \delta$;
    \ENDIF
\ENDWHILE
\IF{$\delta=1+\max_{v\in V}{R^a(v)}$}\label{step:alg-no-coarsening}
    \STATE $\hat{V} \leftarrow \textsf{K-Means}(V,\textsf{num\_clusters} \propto D,\textsf{weights} \propto |P^d|)$;\label{step:k-means}
    \STATE $\mathit{sol} \leftarrow  (\hat{V},\bot)$;
\ENDIF
\RETURN $\mathit{sol}$;
\end{algorithmic}
\end{small}
\end{algorithm}

Consider an SSG $(G,R^a,P^d,A,D,P,B)$ and let $\delta \in \NN$.
Given $v,v'\in V$, we write $v \sim_\delta v'$ iff $R^a(v') > \delta$ and $d(v,v')\leq B$. Intuitively, $v \sim_\delta v'$ means that $v$ and $v'$ must belong to the same neighborhood of a coarsening of $G$ in order to satisfy Condition~\ref{cond:hatg-separated}. Let $\approx_\delta$ denote the reflexive, symmetric, and transitive closure of $\sim_\delta$. Since $\approx$ is an equivalence relation, $V/\!\approx_\delta$ is a partition of $V$ (into equivalence classes). Hence, it is a coarsening that maximizes $|\hat{V}|$ (for Condition~\ref{cond:hatg-many-nodes}) while satisfying Condition~\ref{cond:hatg-separated}. To meet Condition~\ref{cond:hatg-remove-poor-neighborhoods}, we sort the neighborhoods in $V/\!\approx$ by $u_{1,0}^{\hat{v},a}$, and remove poor neighborhoods until Condition~\ref{cond:hatg-remove-poor-neighborhoods} holds.

As $u_{1,0}^{\hat{v},a}$ requires solving an NP-hard problem~\cite{junger1995traveling}, we use TSP (Traveling Salesman Problem)-solvers to get lower bounds, and use $\mathsf{best\mbox{-}path}(\hat{v},\delta)$ to refer to the procedure which looks for a shortest path going through all $\delta$-valuable nodes in $\hat{v}$. Hence, the algorithm may fail to find a $\delta$-coarsening although one exists, and instead return a $\hat{\delta}$-coarsening for some greater $\hat{\delta}$. In turn, the resulting coarsening will only be $\epsilon(\hat{\delta})$-tight. On the other hand, the algorithm is efficient, optimizing on $\delta$ with a simple binary search. Moreover, it returns an upper bound on $\delta$, which translates (by Theorem~\ref{theorem:tight-coarsensing}) to a concrete bound on the loss from respecting the coarsening, instead of playing an SSE defense strategy. Lastly, the algorithm solves the single-attacker single-defender game in each neighborhood, as described in Section~\ref{subsec:single-drone-seq-ssg}, to ensure that defending a neighborhood results with a significant utility change for both players.

Lines~\ref{step:call-coarsen}--\ref{step:set-valuable} of Algorithm~\ref{alg:coarsening} return a partition of $V$ that satisfies (i), i.e.\ incentivizing drones to stay in their starting neighborhoods throughout the game.
Line~\ref{step:remove-poor-neighborhoods} removes ``poor'' neighborhoods that the attacker doesn't care about.
Lines~\ref{step:not-enough-neighborhoods}, \ref{step:attacker-can-collect-payload}, \ref{step:attacker-can-collect-battery}, and~\ref{step:defense-presence-meaningful} check if a $\delta$-coarsening exists by checking the other three conditions (ii), (iii), (iv), respectively. The algorithm performs binary search on $\delta$, to find the smallest one for which a $\delta$-coarsening exists, as the SSE approximation error is linear in $\delta$ (as shown in Section~\ref{sec:theoretical-results}). A formal definition of a $\delta$-coarsening is given in Section~\ref{sec:theoretical-results}.

If the condition in Line~\ref{step:alg-no-coarsening} holds, it means that no $\delta$-coarsening exists. In this case, S2D2 uses weighted K-Means~\cite{kerdprasop2005weighted}, which has three advantages: (i) it is efficient and simple; (ii) it leverages the planar structure of the graph, and the coordinate-based location of each vertex in the graph; (iii) it takes the penalties into account, by setting them as the weights.
The parameter $\delta$ can be viewed as a cut-off, where any node with a smaller reward is considered negligible. Hence, S2D2 heuristically assigns $\delta$ as the $|\hat{V}|P$ most rewarding target, so that each neighborhood has $P$ rewards $>\delta$ on average. The number of neighborhoods $|\hat{V}|$ is set to be proportional to the number of available defense drones $D$. 
We test the performance of this algorithm by conducting experiments on real-world cities in Section~\ref{sec:experiments}. Therefore, in what follows, we will seek defense strategies that \emph{respect} a given coarsening, whether it admits the strict theoretical requirements or not, as defined below.

\begin{Definition}[Strategy Respecting a Coarsening]
A defense (attack) strategy \emph{respects} the coarsening $\hat{V}$ when the following conditions are met:
\begin{enumerate}
    \item Every defense (attack) drone stays within its starting neighborhood throughout the game.
    \item Every neighborhood contains up to a single defense (attack) drone.
\end{enumerate}
$\calS^d_{\hat{V}},\calS^a_{\hat{V}}$ denote the sets of pure strategies that respect the coarsening $\hat{V}$, for the defender and the attacker, respectively.
\end{Definition}

\subsection{Single-Attacker Single-Defender Solution}
\label{subsec:single-drone-seq-ssg}
S2D2 approximates SSE for a single-attacker single-defender game within each neighborhood. \revision{Crucially, in the broader multi-drone, multi-neighborhood setting, the defender's presence in a given neighborhood is probabilistic. In large cities with limited defense resources, it is generally expected that neighborhoods are not protected indefinitely. This probability must be taken into account when considering the single drone game within a given neighborhood, and is therefore introduced as an additional input parameter, denoted by $\lambda$.}

\paragraph{Brute Force Solution}
Since the problem is NP-hard\footnote{The problem is NP-hard even for $\lambda=0$, i.e., when solving the optimization problem for the attacker facing no defender. For example, if $P=B=|\hat{v}|$, deciding whether the attacker has a strategy with utility $u=\sum_{\hat{v}\in\hat{V}}{R(\hat{v})}$ is equivalent to deciding whether a Hamiltonian path exists in graph $(\hat{v},E|_{\hat{v}})$.}, we use smart enumeration as $P,B$ are small.\footnote{This assumption is reasonable as most drone attacks take small amounts of time. \revision{For instance, \cite{deb2025drone} tracked all drone flights over The Hague over 8 months and found the average duration to be 298 seconds and the max duration to be 720 seconds.}
} We begin with a naive approach which linearizes the problem. We compute the matrices $U_\lambda^a$, $U_\lambda^d$ of the attacker and defender utility for each pair of pure strategies. \revision{Note that those values depend on $\lambda$, the defender's presence probability}. We then omit any dominated pure strategies, and find SSE $(\xx_d^*,s_a^*)$ in a similar manner to the single-attacker single-defender SSG (cf.~\cite{korzhyk2010complexity}), i.e., 
we enumerate the set of attacker pure strategies, and for each pure strategy $s_a'$, we then solve the following LP that maximizes the defender utility, under the constraint that $s_a'$ is the best response:
\begin{itemize}
    \item Maximize $u^d(\xx_d,s_a',\lambda)$, subject to:
    \begin{enumerate}
        \item $\xx_d \in \calC_D$ -- Now it is the set of combinations over all non-dominated defense strategies.
        \item For each $s_a \in \calS^a$, $u^a(\xx_d,s_a,\lambda) \le u^a(\xx_d,s_a',\lambda)$.
    \end{enumerate}
\end{itemize}
Note that $u^d(\xx_d,s_a,\lambda)$ is a linear combination of values from  $U_\lambda^d$, according to $\xx_d$, and the same holds for $u^a$ and $U_\lambda^a$.

Finally, we pick $\xx_d^*,s_a^*$ that maximizes the defender utility. The complexity is $|\calS^a|\times \LP(|\calS^d|,|\calS^a|+|\calS^d|)$. Namely, for each attacker strategy, we solve a linear program with $|\calS^d|$ variables and $|\calS^a|+|\calS^d|$ constraints. Next, we improve by reducing the relevant strategy space for both the attacker and the defender.

\paragraph{Reducing the Attacker Strategy Space}
By narrowing down the strategy space, we expect to move away from the optimal solution and trade-off run time vs. solution quality.

When $\lambda$ is small, we know that $s_a^*$ is more greedy, as the $(1-\lambda)$ term  dominates. Hence, $s_a^*$  largely ignores the defender. This may eliminate most of the attacker's possible strategies.  $\lambda$ should anyway be small when there are sufficiently many neighborhoods that are attractive to the attacker. When this is not true, the problem is smaller, and S2D2 takes a random sample of the strategy space, trading-off runtime and quality of the solution. So we may only enumerate a smaller space of possible attacker strategies. 
To some extent, this can be done without damaging performance. Suppose $s_a,s_a' \in \calS^a$ so that $u^a(\bot,s_a') \le (1-\lambda)\cdot u^a(\bot,s_a)$. Then for any strategy $\xx_d\in\Delta\calS^d$, $u^a(\xx_d,s_a',\lambda) \le u^a(\bot,s_a') \le (1-\lambda)\cdot u^a(\bot,s_a) \le u^a(\xx_d,s_a,\lambda)$. Therefore, if the attacker's utility from $s_a$ when facing a defender with probability $\lambda$ is at least the utility from playing $s_a'$ against no defender, we can strike out the strategy $s_a'$, as $s_a$ strictly dominates it.

When there is a small subset of crucial nodes in each neighborhood which are far apart so that an attacker drone must follow an almost optimal path in order to pass through a couple of them, the number of candidate attacker strategies drops significantly. When this is not the case though, S2D2 randomly samples from the large space of possible strategies. This is depicted in Alrogithm~\ref{alg:scan_attack_strategies}.

\begin{algorithm}[h!t]
\caption{\textsf{ScanAttackStrategies}}\label{alg:scan_attack_strategies}
\begin{small}
\begin{algorithmic}[1]
\revision{
\REQUIRE 
A weighted, undirected graph $(\hat{v},E|_{\hat{v}},R|_{\hat{v}})$;\\
\quad\ \ Defender presence probability $\lambda$;\\
\quad\ \   attacker drone battery capacity and payload $B,P \in \NN$;\\
\quad\ \ Threshold $\textsf{th}$ on the number of output attack strategies;
\ENSURE 
attacker drone possible strategies $S^a \subset \calS^a$.
\STATE Compute $u^a_{\max} = \max_{s_a \in \calS^a}{u^a_{\hat{v}}(s_a,\bot)}$, the maximal attacker utility at $\hat{v}$ when facing no defender;
\STATE Set $S^a := \{s_a \in \calS^a | u^a_{\hat{v}}(s_a,\bot) \ge (1-\lambda) u_{\max}^a\}$;
\IF{$|S^a| > \textsf{th}$}
    \RETURN A random sample of size $\textsf{th}$ from $S^a$;
\ENDIF
\RETURN $S^a$;
}
\end{algorithmic}
\end{small}
\end{algorithm}

\paragraph{Narrowing Down Defender Strategy Space}
As the attacker's set of best response pure strategies is now small, the dominating set of defense strategies is also expected to be small. Algorithm~\ref{alg:scan_defense_strategies}'s goal is to output a small subset of dominating defense strategies, as explained below.\footnote{Narrowing down the defender strategy space is complex: as there are multiple possible attack strategies, the defender might want to cover many of them with a single strategy, rather than considering the optimal strategy against every potential attack strategy.}

Suppose the defender and attacker drones' starting positions are $v_d,v_a$, respectively, and $S^a$ is the (narrowed) set of possible attack strategies starting from $v_a$. For each strategy $s_a\in S^a$, up to $P$ nodes are attacked, $v_1(s_a),\ldots,v_P(s_a)$, at times $t_1(s_a),\ldots,t_P(s_a)$. To further reduce runtime, we may only consider targets with a significant (i.e.\ less than $ -\delta$) defender penalty. 

\begin{algorithm}[h!t]
\caption{\revision{\textsf{catch}}}\label{alg:catch}
\begin{small}
\begin{algorithmic}[1]
\revision{
\REQUIRE 
An undirected graph $(\hat{v},E|_{\hat{v}})$;\\
\quad\quad~\ \  Attacker pure strategy $s_a$;\\
\quad\quad~\ \  Defender start position $v_d\in\hat{v}$;
\ENSURE \revision{$1\le i \le P+1$, the index of the first target the defender is able to protect};\\
\quad\quad\ \ \revision{($i=P+1$ indicates the defender is not in time to protect any target)}

\STATE Define $(v_1(s_a),\ldots,v_{P'}(s_a))$ as the ordered list of targeted nodes in $s_a$;
\STATE Remove nodes with an absolute penalty less than $\delta$;
\STATE Re-index the remaining nodes, and update $P'$;
\STATE Define $(t_1(s_a),\ldots,t_{P'}(s_a))$ as the planned time steps for each node to be attacked;
\FOR{$i$ \textbf{from} 1 \textbf{to} $P'$}
    \STATE Find shortest path $\pi_i$ from $v_d$ to $v_i(s_a)$;
    \STATE Denote its length by $t_i^d$;
    \IF{$t_i^d \le t_i(s_a)$}
        \RETURN $i$;
    \ENDIF
\ENDFOR
\RETURN $P+1$;
}
\end{algorithmic}
\end{small}
\end{algorithm}

\begin{algorithm}[h!t]
\caption{\textsf{ScanDefenseStrategies}}\label{alg:scan_defense_strategies}
\begin{small}
\begin{algorithmic}[1]
\REQUIRE 
A weighted, undirected graph $(\hat{v},E|_{\hat{v}},R|_{\hat{v}})$;\\
\quad\ \   attacker drone battery capacity and payload $B,P \in \NN$;\\
\quad\ \  
attacker drone start position $v_a$;\\
\quad\ \  defense drone start position $v_d$;\\
\quad\ \  attacker drone possible strategies $S^a \subset \calS^a$.
\ENSURE 
Defense drone possible strategies $S^d \subset \calS^d$.
\IF{$\abs{S^a}=1$}
\RETURN $\textsf{catch}(v_d,S^a)$; \COMMENT{Compute first strike feasible to prevent (and respective path).}
\ENDIF
\STATE $\textsf{\bf init}$ $T$;
\FOR{\textbf{each} $v_d' \in N(v_d)\cup\{v_d\}$ and $s_a'\in S^a$}
    \STATE $T[v_d',s_a']\leftarrow\textsf{catch}(v_d',s_a')$;
\ENDFOR
\STATE $\text{next\_step} \leftarrow \textsf{prune}(T)$; \COMMENT{Omit dominated neighbors}
\STATE $S^d \leftarrow \emptyset$;
\FOR{\textbf{each} $v_d'\in\text{next\_step}$}
    \STATE $\textsf{\bf init}$ $T_S$;
    \FOR{\textbf{each} $v_a'\in N(v_a)\cup\{v_a\}$} \STATE \COMMENT{DFS visit}
    \STATE $\textsf{update}(S^a)$; \COMMENT{Consider only strategies in $S^a$ that goes from $v_a$ to $v_a'$}
    \STATE $\tilde{S}^d \leftarrow \textsf{ScanDefenseStrategies}
    (v_d',v_a',B-1)$;
    \STATE $T_S[v_a']\leftarrow\tilde{S}^d$
    \ENDFOR
\STATE $S^d \leftarrow S^d \cup \textsf{lift\_strategies}(v_d',T_S)$; \COMMENT{Combine strategies from recursion}
\ENDFOR
\RETURN $S^d$.
\end{algorithmic}
\end{small}
\end{algorithm}

We can then compute for the defender, the minimal time to get to each such node $(t_1^d,\ldots,t_P^d)$, and let $1\le i \le P$ be the first target the defender can protect. This is the output of $\textsf{catch}(v_d,s_a)$ \revision{(Algorithm~\ref{alg:catch})} which corresponds to the best strategy when the attacker's \revision{pure} strategy is known.\footnote{Note that we only find the first node targeted by the attacker that is feasible to protect, not the first node we can catch the attacker at. This is because following a longer path may cover other potential paths the attacker may take, without losing utility from not following the shortest path, when considering the given attacker path.}

It should be observed that, at each time point in Algorithm~\ref{alg:scan_defense_strategies}, it suffices to decide the set of possible next steps for the defender. We can then explore these using DFS, and eventually return all non-dominated pure strategies. The more steps the attacker takes (recursion depth), the narrower its strategy space gets, so the search should converge relatively quickly.

The \textsf{ScanDefenseStrategies} algorithm has $3$ steps:
\begin{enumerate}
    \item For each possible next step $v_d'\in N(v_d)\cup\{v_d\}$, compute $\textsf{catch}(v_d,s_a)$ for each $s_a\in S^a$. Then prune any dominated strategy (where for any strategy of the attacker, it catches the attacker later or at the same targeted node).
    \item For each $v_d'$ that survived, and for each possible next attacker step $v_a'$, recursively call \textsf{ScanDefenseStrategies} and retrieve the set $T_S[v_a']$ of non-dominated pure strategies (with $B-1$, and updated $S^a$).
    \item Lastly, lift pure strategies from $(v_d',\cdot)$ to a strategy from $(v_d,v_a)$ of the form: ``go to $v_d'$, and for each possible attacker next step $v_a'$, pick a pure strategy from $T_S[v_a']$''.
\end{enumerate}
The recursion ends either either when $B=0$ or when the attacker strategy space is a singleton --- we then use \textsf{catch}. To save space, we leverage dynamic programming, and start by solving the problem for $B=0$ and increment the battery capacity by 1 at every step, solving each instance problem once. After this, we get a reduced matrix $U_\lambda$, which only considers a smaller subset of defense and attack strategies.

\subsection{The Meta Game: Multi-Drone Solution}
\label{subsec:seq-ssg-approx-solution}
\revision{The third step in the S2D2 algorithm is to solve the Meta Game, once we know the optimal defender strategy for each neighborhood. The MetaGame looks at the question of which neighborhoods to deploy a defense drone to. This is done via a mixed strategy.  The pseudo-code of the MetaGame is in Algorithm~\ref{alg:meta-game} and can be described at a high level as follows.

\begin{enumerate}
\item We translate the meta-game of allocating defense drones to neighborhoods into a multi-resource attacker defender SSG with nonlinear utilities.
\item An approximation of utilities is given as an input. This is a piece-wise linear approximation derived from solving the single-drone neighborhood game for different $\lambda$ values.
\item Next, we translate the SSG problem into a MIP.
\item We then build on past work~\cite{DeAngelis+1971+503+510} to translate the MIP into a MILP. Their technique allows to replace piecewise linear functions with a linear one by adding a linear number of continuous variables and a logarithmic number of binary variables.
\item Finally, we solve the above MILP and extract the attacker and defender solutions.
\end{enumerate}

\begin{algorithm}[t!]
\caption{\textsf{SolveMetaGame}}\label{alg:meta-game}
\begin{small}
\begin{algorithmic}[1]
\revision{
\REQUIRE 
A set of neighborhoods $\hat{V}$; \\
\quad\ \  numbers of attacker and defender drones $A,D\in \NN$; \\
\quad\ \ (Approximate) attacker and defender utility functions $\{u_{\hat{v}}^a(\lambda),u_{\hat{v}}^d(\lambda)\}_{\hat{v}\in\hat{V}}$; \\
\ENSURE  An SSE $\langle \hat{p}_d,\hat{f}_a \rangle$, where: \\
\quad\ \ $\hat{p}_d \in \calC_D^{\hat{V}}$, a coverage vector of $D$ drones over the neighborhoods of $\hat{V}$; \\
\quad\ \ $\hat{f}_a$ maps each attacker drone to a neighborhood of $\hat{V}$; \\
\STATE Compute piece-wise linear approximations of the attacker and defender utility functions $\tilde{u}_{\hat{v}}^a(\lambda),\tilde{u}_{\hat{v}}^d(\lambda)$;
\STATE Initialize a MIP with the objective of maximizing $\sum_{\hat{v}}{x_a(\hat{v}) \cdot \tilde{u}^d_{\hat{v}}(\xx_d(\hat{v}))}$;
\STATE Add constraints on attacker and defender resources: $\sum_{\hat{v}}{x_a(\hat{v})}=A, \sum_{\hat{v}}{\xx_d(\hat{v})}=D$;
\STATE Require variables $x_a(\hat{v}) \in \{0,1\}$ to be binary and limit continuous variables $0\le \xx_d(\hat{v})\le 1$;
\STATE Add a continuous variable $\theta_a$ for attacker threshold;
\STATE Add the following inequality constraints, forcing attacker best response:
\FOR{\textbf{each} neighborhood $\hat{v} \in \hat{V}$}
    \STATE (\romannumeral 1) $\tilde{u}^a_{\hat{v}}{\xx_d(\hat{v})} \ge x_a(\hat{v}) \cdot \theta_a$;
    \STATE (\romannumeral 2) $\tilde{u}^a_{\hat{v}}{\xx_d(\hat{v})} \le (1-x_a(\hat{v})) \cdot \theta_a + x_a(\hat{v}) \tilde{u}^a_{\hat{v}}(0)$;
\ENDFOR
\STATE Linearize the above MIP (using~\cite{DeAngelis+1971+503+510}).
\STATE Let $\langle \hat{p}_d, \hat{f}_a \rangle$ be MILP solution.
\RETURN $\langle \hat{p}_d, \hat{f}_a \rangle$;
}
\end{algorithmic}
\end{small}
\end{algorithm}

Next, we delve into the technical details of the high-level structure of the MetaGame algorithm described above.}

Recall that we only consider strategies that respect $\hat{V}$, i.e., drones stay within their starting neighborhood, and there is up to one attacker and one defender per neighborhood. Therefore, an attacker pure strategy naturally decomposes into an injection $\hat{f}_a: \{1,\ldots,A\} \mapsto \hat{V}$ mapping each attacker drone to a neighborhood which it will attack, and for each drone $1\le i \le A$, a pure strategy $\calS_i^a=\calT_{P,B} \times \calP_{B}^{\hat{f}_a(i)}$, where $\calP_{B}^{\hat{f}_a(i)}$ considers only paths within the neighborhood $\hat{f}_a(i)$.

Similarly, each defender strategy decomposes into a mapping of each defense drone to a neighborhood, and a strategy within this neighborhood.  Since all defense drones are identical, when considering mixed defense strategies, it suffices to specify (i) within each neighborhood $\hat{v}\in\hat{V}$ a mixed single-drone defense strategy $\hat{x}^d(\hat{v})\in\Delta(\calS_{\hat{v}}^d)$; (ii) for each neighborhood the probability of it being protected, as a \emph{coverage vector }$\hat{p}_d\in\calC_D^{\hat{V}}$, where $\calC_D^{\hat{V}} := \{\xx \in [0,1]^{|\hat{V}|}: \sum_{\hat{v}\in \hat{V}} x_{\hat{v}} \le D \}$. Therefore, the defender mixed strategy space decomposes to $\Delta(\calS^d) = \calC_D^{\hat{V}} \times \prod_{\hat{v}\in\hat{V}}{\Delta(\calS^d_{\hat{v}})}$.

When solving the single-attacker single-defender instance for a neighborhood $\hat{v}$, $\hat{p}_d(\hat{v})$ denotes the probability $\lambda_{\hat{v}}$ that ``a defender is in the hood''.  $\hat{p}_d$ is a coverage vector, representing the probability of presence of a defense drone in each neighborhood. Recall that any vector with entries in $[0,1]$ that sums up to $\le D$ is feasible to implement with some mixed strategy of assigning defense drones to neighborhoods.

Given defender (resp. attacker) strategy $(\hat{p}_d,\hat{x}_d)$ (resp. $(\hat{f}_a,\hat{s}_a)$), where $\hat{s}_a(\hat{f}_a(i))=(T_a^i,\pi_a^i)\in\calT_{P,B}\times\calP_{B}$, the expected utility is the sum of expected utilities from each neighborhood $\hat{f}_a(i)$ attacked, for $1\le i\le A$. The expected utility from neighborhood $\hat{v}$ is the average of the sum of the rewards over the attacker drone set of chosen targets, and the utility when facing a single defender with strategy $\hat{x}_d(\hat{v})$, weighted by $\hat{p}_d(\hat{v})$. That is, for $u\in\{u^a,u^d\}$:

$$u\left(\langle \hat{p}_d, \hat{x}_d \rangle, \langle \hat{f}_a, \hat{s}_a \rangle\right) = \sum\limits_{\substack{1\le i \le A\\ \hat{v}_i = \hat{f}_a(i)}} \biggl[\hat{p}_d(\hat{v}_i)\cdot u\left(\hat{x}_d(\hat{v}_i),\hat{s}_a(\hat{v}_i)\right) +\left(1-\hat{p}_d(\hat{v}_i)\right) 
u(\bot,\hat{s}_a(\hat{v}_i))
\biggr]$$

Thus, as the probability a defender is ``in the hood'' $\hat{p}_d(\hat{v})$ decreases, the attacker drone is better off taking a greedy action. This implies that it is not sufficient to compute SSE for single defender attacker game within each neighborhood to solve the overall multi-drone game. Focusing on a neighborhood, we can extend the utility definition $u(\xx_d,s_a,\lambda):=\lambda u(\xx_d,s_a) + (1-\lambda) u(\bot,s_a)$, to consider the probability
$\lambda$, denoting the probability a defender is in the hood.
This may remind the reader of the SSG model with penalties where, even when the attacker is caught, it gets a penalty $P>0$. We may effectively tune the parameters of the game so that rewards are scaled by $\lambda$, and the penalties are the rewards scaled by $(1-\lambda)$. Section~\ref{subsec:single-drone-seq-ssg} discusses how to approximate SSE in a single-attacker single-defender game with parameter~$\lambda$. 
We next focus on allocation to neighborhoods and assume an oracle returns (an approximation of) optimal $\hat{x}_d^*,\hat{s}_a^*$ strategies within each neighborhood given $\hat{p}_d,\hat{f}_a$. This is possible as Section~\ref{subsec:single-drone-seq-ssg} shows how to implement the oracle, and Lemma~\ref{lemma:sse-continuous} below shows that an approximation suffices. Therefore we get for $u\in\{u^a,u^d\}$:

$$u(\hat{p}_d,\hat{f}_a)=u\left(\langle \hat{p}_d, \hat{x}_d^* \rangle, \langle \hat{f}_a, \hat{s}_a^* \rangle\right)= \sum_{i=1}^{A} u\left(\hat{x}_d^*(\hat{f}_a(i)),\hat{s}_a^*(i),\hat{p}_d(\hat{f}_a(i))\right).$$

We next pick a distribution $\hat{p}_d$ which minimizes the utility above when $\hat{f}_a$ is the best response to $\hat{p}_d$. Hence, we get a typical SSG, with an attacker with multiple ($A$) resources, with one important detail: the utility of each neighborhood $\hat{v}$ is not necessarily linear with the coverage $\hat{p}^d(\hat{v})$, although it is monotonic decreasing.

\subsection{Generalization of Multi-Resource SSGs}
\label{sec:multi-drone-milp}
In this section, we show how to generalise the work of~\cite{korzhyk2011security} to handle a non-linear dependency of the attacker and utility functions on $\xx_d$, the defense probability on each target.

When both utilities are linear with $\xx_a$ \emph{and} $\xx_d$, there is a complete characterization of the Nash equilibrium of the game. Indeed, best-responding simply means attacking (defending) the $D$ ($A$) targets with the highest (marginal) utility for the defender (attacker). Therefore:

\begin{Lemma}
If $u^a,u^d$ are linear with $\xx_d$ and $\xx_a$, let $v^d(t,x_a(t))=a_t (R^d(t)-P^d(t))$ be the defender marginal utility from attacking target $t$. Then $(\xx_d,\xx_a)$ is a Nash equilibrium iff there exist thresholds $\theta_a,\theta_d$ such that:
\begin{itemize}
    \item $\xx_a \in \BR(\xx_d)$. Equivalently:
    \begin{itemize}
        \item $u^a(t,\xx_d(t)) < \theta_a \Rightarrow x_a(t)=0$.
        \item $u^a(t,\xx_d(t)) > \theta_a \Rightarrow x_a(t)=1$.
        \item $\sum_{t}{x_a(t)}=A$.
    \end{itemize}
    \item $\xx_d \in \BR(\xx_a)$. Equivalently:
    \begin{itemize}
        \item $v^d(t,x_a(t)) < \theta_d \Rightarrow \xx_d(t)=0$.
        \item $v^d(t,x_a(t)) > \theta_d \Rightarrow \xx_d(t)=1$.
        \item $\sum_{t}{\xx_d(t)}=D$.
    \end{itemize}
\end{itemize}
\end{Lemma}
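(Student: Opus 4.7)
The plan is to prove the two halves of the biconditional by treating each player's best-response problem separately, since the characterization decouples into an attacker condition parameterized by $\theta_a$ and a defender condition parameterized by $\theta_d$.

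First, I would show that $\xx_a \in \BR(\xx_d)$ is equivalent to the existence of $\theta_a$ satisfying the three bullet points for the attacker. Because $u^a$ is linear in $\xx_a$ (with target $t$ contributing marginal utility $u^a(t,\xx_d(t))$), the attacker's best-response problem is a bounded LP: maximize $\sum_t x_a(t)\, u^a(t,\xx_d(t))$ subject to $\sum_t x_a(t) = A$ and $0 \le x_a(t) \le 1$. By a standard greedy/exchange argument (or equivalently by LP duality, with $\theta_a$ being the dual variable associated with the cardinality constraint), an optimum must set $x_a(t)=1$ on targets whose marginal utility strictly exceeds some threshold $\theta_a$, and $x_a(t)=0$ on those strictly below; the mass at targets exactly at the threshold is free to be split, as long as the total equals $A$. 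Conversely, any $\xx_a$ admitting such a threshold $\theta_a$ can be verified feasible and optimal by the same exchange argument: swapping mass between a target above threshold and one below would strictly increase the objective, contradicting optimality of any alternative.

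Second, I would apply the fully symmetric argument to the defender. The defender's best-response problem is to maximize $\sum_t x_d(t)\, v^d(t,x_a(t))$ over $\sum_t x_d(t) = D$ and $0 \le x_d(t) \le 1$, where $v^d(t,x_a(t)) = a_t(R^d(t)-P^d(t))$ is the defender's marginal utility from covering $t$ (using linearity in $\xx_d$ and the fact that the coefficient of $x_d(t)$ depends on $x_a(t)$). The same LP argument yields a threshold $\theta_d$ with the stated inequality conditions, giving the characterization of $\xx_d\in\BR(\xx_a)$.

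Combining the two characterizations, $(\xx_d,\xx_a)$ is a Nash equilibrium if and only if both threshold conditions hold simultaneously, proving the lemma. The main subtlety to get right is the boundary case, where marginal utility equals $\theta_a$ (or $\theta_d$): the strict inequalities in the bullet points only constrain targets strictly above or below threshold, leaving the threshold targets free to absorb exactly the residual mass needed to satisfy $\sum_t x_a(t)=A$ (resp.\ $\sum_t x_d(t)=D$); I would need to verify that this freedom is consistent and that the equality resource constraints are without loss of generality under the implicit assumption that marginal utilities are non-negative so players exhaust their resources. The rest of the proof is routine LP reasoning, so the main obstacle is purely notational bookkeeping rather than a conceptual hurdle.
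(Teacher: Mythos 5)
Your proof is correct and uses exactly the standard exchange/LP-duality argument the paper relies on: the paper states this lemma without an explicit proof (justifying it only by the remark that best-responding means covering/attacking the highest-marginal-utility targets), but the very same mass-shifting argument appears explicitly in its proof of the subsequent Lemma~2, the attacker-side generalization to non-linear $u^d$. Your caveat about resource exhaustion is also on point --- the paper only introduces the hypotheses needed to guarantee that players use all $A$ (resp.\ $D$) resources, namely $|V|>A$ and $P^a(t)<R^a(t)$ for every target $t$, in Lemma~2 rather than in this lemma's statement.
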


When $u^d$ is linear with $\xx_d$, the defender's marginal utility $\frac{\partial u^d}{\partial \xx_d}$ is a constant, and in particular, is independent of $\xx_d$. Therefore, the utility the defender gets from protecting target $t$ with probability ``budget'' $\xx_d(t)$ is $\xx_d(t)\cdot v^d(t,x_a(t))$. Therefore, best responding means first covering the top $D$ targets, and when there are ties for the $D$\textsuperscript{th} place, any randomization over the corresponding targets will result in a valid best response.

However, when $\frac{\partial u^d}{\partial \xx_d}$ is a function of $\xx_d$, this is not the case any longer. Indeed, the above condition would be necessary, suggesting $\xx_d$ to be a \emph{local} maximum of $u^d$, as otherwise (assuming $u^d$ is continuously differentiable) one could make small changes and increase the defender's utility. Nevertheless, it will not ensure a \emph{global} maximum of $u^d$, meaning a best response. If $u^d$ was concave with $\xx_d$, any local maximum would also be global and therefore~\cite{korzhyk2011security}'s algorithm would still work. Unfortunately, we cannot make such an assumption in our game.

Nevertheless, we are not interested in computing a Nash equilibrium, but a SSE. Therefore, we first show that the criterion for the attacker to best respond remains intact:

\begin{Lemma}
\label{lemma:attacker-br-nonlinear-u-x-d}
Assume $u^a$ is linear with $\xx_a$, and that $|V| > A$, and that $P^a(t) < R^a(t)$ for every target $t$. Let $\xx_d$ be a defense mixed strategy. Then $\xx_a \in \BR^a(\xx_d)$ iff there exist a threshold $\theta_a$ such that:
\begin{itemize}
    \item $u^a(t,\xx_d(t)) < \theta_a \Rightarrow x_a(t)=0$.
    \item $u^a(t,\xx_d(t)) > \theta_a \Rightarrow x_a(t)=1$.
    \item $\sum_{t}{x_a(t)}=A$.
\end{itemize}
\end{Lemma}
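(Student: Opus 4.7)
}
The key observation is that, since $u^a$ is linear in $\xx_a$, the attacker's expected utility decomposes as
\[
u^a(\xx_d, \xx_a) \;=\; \sum_{t \in V} x_a(t)\, c_t, \qquad c_t := u^a(t, \xx_d(t)),
\]
so the best-response problem reduces to a cardinality-constrained linear selection: maximize this objective over $\xx_a \in \{0,1\}^{|V|}$ subject to $\sum_t x_a(t) \le A$. I would first establish that this constraint is tight in any best response. Writing $c_t = (1-\xx_d(t))R^a(t) + \xx_d(t)P^a(t)$, the assumption $P^a(t) < R^a(t)$ combined with the paper's convention of nonnegative rewards and penalties yields $c_t \ge 0$, so deploying every attacker drone is weakly dominant, and $|V| > A$ guarantees that enough distinct targets exist.

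For the forward direction, I would sort the coefficients as $c_{(1)} \ge \dots \ge c_{(|V|)}$ and use a standard exchange argument to show that any $\xx_a \in \BR^a(\xx_d)$ must assign $x_a(t)=1$ to exactly $A$ targets whose coefficients are among the top $A$ in this ordering; otherwise, swapping an attacked low-coefficient target with an unattacked higher-coefficient one strictly increases the objective. Setting $\theta_a := c_{(A)}$ then gives the claimed characterization: any $t$ with $c_t > \theta_a$ lies strictly above the cutoff and must be attacked, any $t$ with $c_t < \theta_a$ must not be, and the tied targets $\{t : c_t = \theta_a\}$ absorb the residual budget $A - |\{t : c_t > \theta_a\}|$.

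For the reverse direction, given any alternative strategy $\xx_a'$ with $\sum_t x'_a(t) \le A$, I would expand
\[
u^a(\xx_d,\xx_a') - u^a(\xx_d,\xx_a) = \sum_t (x'_a(t) - x_a(t))(c_t - \theta_a) + \theta_a \sum_t (x'_a(t) - x_a(t)).
\]
Case analysis on $c_t$ versus $\theta_a$ using the threshold conditions shows each summand of the first term is non-positive: for $c_t > \theta_a$, $x_a(t) = 1$ forces $x'_a(t) - x_a(t) \le 0$ while $c_t - \theta_a > 0$; for $c_t < \theta_a$, $x_a(t) = 0$ forces $x'_a(t) - x_a(t) \ge 0$ while $c_t - \theta_a < 0$; and the $c_t = \theta_a$ terms vanish. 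The second term equals $\theta_a(\sum_t x'_a(t) - A) \le 0$ because $\theta_a = c_{(A)} \ge 0$ and $\sum_t x'_a(t) \le A$. Combining, $\xx_a'$ cannot improve on $\xx_a$, so $\xx_a \in \BR^a(\xx_d)$.

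The main subtlety I expect is handling ties at the threshold, especially when the set $\{t : c_t = \theta_a\}$ is larger than the leftover budget: any partition of these tied targets into attacked and non-attacked subsets yields a valid best response, which is precisely why the lemma imposes no constraint on $x_a(t)$ when $c_t = \theta_a$. The two technical assumptions play complementary roles: $|V| > A$ ensures the threshold genuinely separates a nonempty ``attacked'' set from a nonempty ``non-attacked'' set so that $\theta_a$ is well-defined, while $P^a(t) < R^a(t)$ ensures the attacker strictly benefits from deploying every available drone, forcing $\sum_t x_a(t) = A$ rather than merely $\le A$.
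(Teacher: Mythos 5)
Your proposal is correct and follows essentially the same route as the paper's proof: decompose the linear utility into per-target marginal contributions $c_t = u^a(t,\xx_d(t))$, use a mass-shifting exchange argument for necessity, and verify the threshold conditions suffice for optimality (the paper states both steps in prose; your explicit expansion of $u^a(\xx_d,\xx_a')-u^a(\xx_d,\xx_a)$ is just a more detailed rendering of the same idea). The only caveat is that in the forward direction a best response may place fractional mass on targets tied at $\theta_a$, so "assigns $x_a(t)=1$ to exactly $A$ targets" should be softened --- but you already acknowledge this in your discussion of ties, so the argument stands.
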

\begin{proof}
\noindent ($\Leftarrow$) Suppose $\xx_a$ admits the above conditions. Then, the marginal attacker utility from attacking target $t$ is $u^a(t,\xx_d(t))$, therefore, independent of $\xx_a$. Hence, best responding would first protect the targets with the highest attacker utility given $\xx_d$, and any randomization over the $A$\textsuperscript{th} target will result with the same overall attacker utility.
($\Rightarrow$) Assume by way of contradiction that one of the above conditions doesn't hold. If there are two targets $t_1,t_2$ such that $u^a(t_1,\xx_d(t_1)) < u^a(t_2,\xx_d(t_2))$, and $0<x_a(t_1),x_a(t_2)<1$, the attacker's utility will increase by shifting attacker probability mass from $t_1$ to $t_2$ until either $x_a(t_1)$ gets to $0$ or $x_a(t_2)$ gets to $1$. Last, if not all of the attacker resources are utilized, we can increase the attack probability on all targets, and increase the attacker's overall utility as well. Note that this is why we need to assume $\abs{V} > A$ and $P^a(t) < R^a(t)$ on every target $t$.
\end{proof}

Next, we opt to transform the SSE computation into a mixed integer program, which is a well-studied problem. We start from the following optimization problem:

\begin{flalign}
\label{eqn:meta-game-mip}
\text{maximize: }& 
\sum\limits_{t} {x_a(t)\cdot u^d(t,\xx_d(t))} &&  \\
\text{subject to: }& \sum\limits_{t}{x_a(t)} = A, \sum\limits_{t}{\xx_d(t)} = D, && \nonumber\\
& x_a(t) \in \{0, 1\}, 0\le \xx_d(t) \le 1, && \nonumber\\
& u^a(t,\xx_d(t)) \ge x_a(t) \cdot \theta_a, && \nonumber\\
& u^a(t,\xx_d(t)) \le (1-x_a(t))\cdot \theta_a + x_a(t) R^a(t). && \nonumber
\end{flalign}

Evidently, a solution to the above MIP is SSE. Indeed, the objective is to maximize the defender's utility, over all possible coverage vectors $\xx_d$ of the defender. Demanding $\sum_{t}{\xx_d(t)}=D$ is okay because the utilities are monotonically increasing. Finally, in the SSE framework, we can assume that the attacker's strategy is pure, that is, $x_a(t)\in\{0,1\}$ which enables us to write the condition for the attacker to best respond (described in Lemma~\ref{lemma:attacker-br-nonlinear-u-x-d}) with linear inequalities over the variables $\xx_a(t)$.

The only problem is that $u^d(t,\xx_d(t))$ and $u^a(t,\xx_d(t))$ are non-linear w.r.t.\ $\xx_d(t)$ in general. However, this can be handled using standard techniques to approximate the utility functions with piece-wise linear approximations $\tilde{u}^d,\tilde{u}^a$. This is inevitable as we don't have closed form formulas for the utilities --- rather, they are derived from the algorithm for the single attacker/single defender drone problem in Step 2 of the S2D2 algorithm). We refer to~\cite{DeAngelis+1971+503+510} for an overview of the technique. In principle, we can add a logarithmic number of integer variables, and linear number of continuous variables, and replace the utilities with linear expressions using the new variables.

We can bound the error from approximating the utilities by the following lemma:

\begin{Lemma}
\label{lemma:sse-continuous}
Let $G=(\hat{V},A,D,u^a,u^d)$ be a (non-sequential) attacker SSG. Let $\epsilon >0$ and let $\tilde{u}^a,\tilde{u}^d$ be different attacker and defender utility functions, such that $\|(u^a,u^d)-(\tilde{u}^a,\tilde{u}^d)\|_\infty < \epsilon$. That is, on every pair of strategies $(\xx_d,s_a)$, the attacker and defender utility outputs differ by up to $\epsilon$, using the other utility functions. Then if $(\xx_d,s_a)$ is an $\epsilon$-SSE of $G$, it is also a $2\epsilon$-approximate SSE of $\tilde{G}$ where the utilities are replaced with $\tilde{u}^a,\tilde{u}^d$.
\end{Lemma}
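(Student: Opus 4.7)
The plan is to establish the two inequalities defining a $2\epsilon$-SSE of $\tilde{G}$ separately, in each case by translating the corresponding $\epsilon$-SSE inequality of $G$ through the pointwise bound $\|u-\tilde{u}\|_\infty < \epsilon$. The guiding principle is that each pointwise substitution of $u$ by $\tilde{u}$ costs at most an additive $\epsilon$ of slack, so in each inequality the worst case uses two such substitutions --- one on each side --- giving the extra $\epsilon$ that, together with the slack already encoded in the hypothesis, produces the claimed $2\epsilon$.

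For the attacker's near-best-response condition, I would start from the hypothesis $u^a(\xx_d, s_a) + \epsilon \ge u^a(\xx_d, s_a')$ for every $s_a' \in \calS^a$. Apply $\tilde{u}^a(\xx_d, s_a) \ge u^a(\xx_d, s_a) - \epsilon$ on the left-hand side and $u^a(\xx_d, s_a') \ge \tilde{u}^a(\xx_d, s_a') - \epsilon$ on the right-hand side, then take the max over $s_a'$. Chaining these inequalities directly yields the attacker's $2\epsilon$-best-response condition for $s_a$ in $\tilde{G}$.

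For the defender's near-optimality condition, fix an arbitrary deviation $\xx_d' \in \Delta(\calS^d)$ and a witness $\tilde{s} \in \tilde{\BR}^d_{2\epsilon}(\xx_d')$ attaining the right-hand-side max in the $2\epsilon$-SSE inequality for $\tilde{G}$. The key intermediate claim is that $\tilde{s}$ sits inside the defender-favorable $\epsilon$-best-response set of $G$ at $\xx_d'$: applying the same two pointwise substitutions used above, but now to the $2\epsilon$-best-response inequality defining $\tilde{\BR}^d_{2\epsilon}(\xx_d')$, transfers it (modulo slack) into an $\epsilon$-best-response inequality in $G$. The defender-favorable tie-breaking convention then gives $u^d(\xx_d', \tilde{s}) \le u^d(\xx_d', \BR^d_\epsilon(\xx_d'))$, at which point the defender side of the $\epsilon$-SSE of $G$ supplies $u^d(\xx_d, s_a) + \epsilon \ge u^d(\xx_d', \tilde{s})$. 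Converting both $u^d$ terms back to $\tilde{u}^d$ at a cost of $\epsilon$ on each side completes the estimate with the claimed $2\epsilon$ overall slack.

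The main obstacle is the intermediate step in the defender case: the set $\tilde{\BR}^d_{2\epsilon}(\xx_d')$ is defined through $\tilde{u}^a$ and uses a defender-favorable tie-breaker, so a priori it may admit attacker responses that are strictly better for the defender than anything in $\BR^d_\epsilon(\xx_d')$, inflating the deviation payoff in $\tilde{G}$. The saving is that any such ``extra'' strategy $\tilde{s}$ only gained admission into the enlarged best-response set through the perturbation itself, and therefore its defender-utility excess over a genuine $G$-best response is itself controlled by $\|u^d - \tilde{u}^d\|_\infty < \epsilon$. Bookkeeping these two pointwise bounds jointly --- one controlling membership in the enlarged best-response set through $\tilde{u}^a$ and one controlling the defender-objective value through $\tilde{u}^d$ --- is what yields the constant $2$ in the conclusion rather than something larger.
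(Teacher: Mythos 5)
Your high-level plan --- transferring each defining inequality of the $\epsilon$-SSE through the bound $\|(u^a,u^d)-(\tilde{u}^a,\tilde{u}^d)\|_\infty<\epsilon$ --- is the same basic idea the paper uses, and your attacker step matches the paper's. (Incidentally, your own bookkeeping there is off: the hypothesis slack $\epsilon$ plus \emph{two} substitutions each costing $\epsilon$ gives $\tilde{u}^a(\xx_d,s_a)+3\epsilon\ge\tilde{u}^a(\xx_d,s_a')$, not $2\epsilon$; the paper asserts the same $2\epsilon$, which is tight only when the input profile is an exact SSE of $G$, which is how the lemma is later applied.)

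The genuine gap is in your defender step. You claim that a witness $\tilde{s}\in\BR^d_{2\epsilon,\tilde{u}}(\xx_d')$ ``sits inside the defender-favorable $\epsilon$-best-response set of $G$,'' obtained by transferring the $2\epsilon$-best-response inequality from $\tilde{u}^a$ back to $u^a$. That transfer goes in the wrong direction: substitutions can only \emph{add} slack, so from $\tilde{u}^a(\xx_d',\tilde{s})+2\epsilon\ge\tilde{u}^a(\xx_d',s')$ you obtain $u^a(\xx_d',\tilde{s})+4\epsilon\ge u^a(\xx_d',s')$, i.e.\ $\tilde{s}\in\BR^a_{4\epsilon}(\xx_d')$, not $\BR^a_{\epsilon}(\xx_d')$. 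Your proposed repair --- that any strategy admitted ``through the perturbation itself'' has defender-utility excess bounded by $\|u^d-\tilde{u}^d\|_\infty$ --- is a non sequitur: membership in the best-response set is governed by the \emph{attacker} utilities, whereas the quantity you must bound is a gap in \emph{defender} utility between two \emph{different} attacker responses, which the sup-norm bound on a fixed profile does not control. Concretely, two responses whose attacker utilities under $u^a$ differ by $3\epsilon$ (so only one lies in $\BR^a_{\epsilon}(\xx_d')$, while after perturbation the other enters $\BR^a_{2\epsilon,\tilde{u}}(\xx_d')$) may have defender utilities differing by an arbitrary amount, so the deviation benchmark in $\tilde{G}$ is not controlled by your argument. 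The paper avoids this comparison altogether: instead of embedding $\BR^d_{2\epsilon,\tilde{u}}(\xx_d')$ into $\BR^d_{\epsilon,u}(\xx_d')$, it introduces an auxiliary $\epsilon$-SSE $(\tilde{\xx}_d,\tilde{s}_a)$ of $\tilde{G}$, applies the attacker-side containment symmetrically in both directions (yielding $s_a\in\BR^a_{2\epsilon,\tilde{u}}(\xx_d)$ and a defender-optimal $\tilde{s}_a'\in\BR^a_{2\epsilon,u}(\tilde{\xx}_d)$), and chains the two optimality statements against each other, paying the $2\epsilon$ substitution cost once. You would need to restructure your defender step along those lines, or otherwise bound $\max_{\xx_d'}\tilde{u}^d\bigl(\xx_d',\BR^d_{2\epsilon,\tilde{u}}(\xx_d')\bigr)$ without the false containment.
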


\begin{proof}
Indeed, assume that for any pair of strategies, $(\xx_d,\xx_a)$, we have that $|u^a(\xx_d,\xx_a) - \tilde{u}^a(\xx_d,\xx_a)| < \epsilon$ and $|u^d(\xx_d,\xx_a) - \tilde{u}^d(\xx_d,\xx_a)| < \epsilon$.

Let $(\xx_d,s_a)$ be an $\epsilon$-SSE with respect to $(u^a,u^d)$. Then $s_a \in \BR_{\epsilon,u^a,u^d}^d(\xx_d)$, and therefore, $s_a \in \BR_{2\epsilon,\tilde{u}^a}^a(\xx_d)$. Thus:
$$ \tilde{u}^d(\xx_d, \BR_{2\epsilon,\tilde{u}^a,\tilde{u}^d}^d(\xx_d)) \ge \tilde{u}^d(\xx_d,s_a).$$

Next, let $s_a' \in \BR_{2\epsilon,\tilde{u}^a,\tilde{u}^d}^d(\xx_d)$. Then, the above inequality says $\tilde{u}^d(\xx_d,s_a') \ge \tilde{u}^d(\xx_d,s_a)$. Analogously, let $(\tilde{\xx}_d,\tilde{s}_a)$ be an $\epsilon$-SSE with respect to $(\tilde{u}^a,\tilde{u}^d)$, and let $\tilde{s}_a' \in \BR_{2\epsilon, u^a,u^d}^d(\tilde{\xx}_d)$. Then $u^d(\tilde{\xx}_d,\tilde{s}_a') \ge u^d(\tilde{\xx}_d,\tilde{s}_a)$. Thus:

\begin{align*}
\tilde{u}^d(\xx_d,s_a') - \tilde{u}^d(\tilde{\xx}_d,\tilde{s}_a) \ge 
\tilde{u}^d(\xx_d,s_a) - \tilde{u}^d(\tilde{\xx}_d,\tilde{s}_a) \ge \\
u^d(\xx_d,s_a) - u^d(\tilde{\xx}_d,\tilde{s}_a) -2\epsilon \ge 
u^d(\xx_d,s_a) - u^d(\tilde{\xx}_d,\tilde{s}_a') -2\epsilon \ge 0-2\epsilon.
\end{align*}

Finally, since $s_a\in \BR_{\epsilon,u^a,u^d}^d(\xx_d)$, $s_a \in \BR_{2\epsilon,\tilde{u}^a}^a(\xx_d)$, as desired. Therefore, $(\xx_d,s_a)$ is a $2\epsilon$-approximate SSE with respect to $(\tilde{u}^a,\tilde{u}^d)$.
\end{proof}

Finally, we can use standard techniques, such as the one described in~\cite{oral1992linearization}, to linearize the resulted MIP, and solve a MILP.

\section{Theoretical Analysis: SSE Approximation}
\label{sec:theoretical-results}
In this section, we prove that if Algorithm~\ref{alg:coarsening} outputs a $\delta$-coarsening, then it is an $\epsilon(\delta)$-approximate SSE.

First, a formal definition of a $\delta$-coarsening is provided in Definition~\ref{def:delta-coarsening}. While this definition provides a precise framework, it is somewhat restrictive, \revision{and the choice of constants may impose limitations}. We stress that this definition is only needed for the rigorous correctness proof of S2D2 (Theorem~\ref{theorem:tight-coarsensing}). \emph{Nevertheless, it is important to note that S2D2 yields good results in practice on real-world large-scale cities, even if such a $\delta$-coarsening does not exist, as demonstrated via exhaustive experimentation described in Section~\ref{sec:experiments}.}

\begin{Definition}[$\delta$-Coarsening]
\label{def:delta-coarsening}
Let $G_{\hat{v}}=(\hat{v},E|_{\hat{v}})$ be some neighborhood. We denote by $u_{A,D}^{\hat{v},d}$ ($u_{A,D}^{\hat{v},a}$), the (maximal) utility of a defender (an attacker) at SSE in $G_{\hat{v}}$ given $A$ attacker drones and $D$ defense drones.
Let $\delta > 0$. A $\delta$-coarsening $\hat{V}$ is a coarsening that satisfies the following conditions:

\begin{enumerate}
    \item \label{cond:hatg-separated} For each $\hat{v}\in\hat{V}$, $v' \not\in \hat{v}$, and $v\in\hat{v}$ with $R^a(v)>\delta$, it is the case that $d(v,v') > B$, where $d$ is shortest path length.
    
    \item \begin{enumerate}
        \item \label{cond:hatg-many-nodes} Number of neighborhoods $|\hat{V}| > 4\max\{A,D\}$.
        \item \label{cond:hatg-remove-poor-neighborhoods}
    For each $\hat{v},\hat{v}'\in \hat{V}$: $\frac{4}{3}u_{1,0}^{\hat{v},a} >
    u_{1,0}^{\hat{v}',a} - \delta P$.
    \end{enumerate} 
    
    \item \label{cond:hatg-single-attack-drones} For each $\hat{v}\in\hat{V}$: $u_{A,0}^{\hat{v},a} < u_{1,0}^{\hat{v},a} + \delta AP$.
    
    \item \label{cond:hatg-single-defense-drones} For each $\hat{v}\in\hat{V}$: $\frac{3}{64|\hat{V}|} u_{1,0}^{\hat{v},a} > u_{1,1}^{\hat{v},a}$ and $\frac{3}{8|\hat{V}|} |u_{1,0}^{\hat{v},d}| > |u_{1,1}^{\hat{v},d}| + \delta P$.
\end{enumerate}
\end{Definition} 

Conditions~\ref{cond:hatg-separated}-\ref{cond:hatg-single-defense-drones} formalize conditions (\romannumeral 1)-(\romannumeral 4) in Section~\ref{subsec:coarsening} respectively.
Condition~\ref{cond:hatg-separated} suggests that it takes a prohibitive amount of time to move from any node outside a neighborhood into a valuable node within that neighborhood. 
\footnote{We don't require $R^a(v')> \delta$ because the goal of defense drones is to catch the attacker before it causes more damage. Therefore, if there is a node $v\in\hat{v}$ with $R^a(v)=0$, that is close to valuable nodes of multiple different neighborhoods, placing a defense drone at $v$ could be a good strategy. After the attacker places her drones, the defensive drone will decide which neighborhood to go to in order to catch the attacker.}
This condition incentivizes drones to stay within their starting neighborhoods throughout the game. It is also a practical political reality --- city security officials need to be \emph{seen} to be distributing defensive assets in a fair way across the city rather than appearing to give ``preference'' to certain places, even if they are high utility locations. 

Condition~\ref{cond:hatg-many-nodes} suggests that there are not enough defense/attack drones to protect/attack each neighborhood with probability $\ge 1/4$, as security resources are limited. If not, one may consider partitioning the neighborhoods further, though this may violate Condition~\ref{cond:hatg-separated}. This condition incentivizes the attacker to be more greedy, as neighborhoods with no defender with probability $\ge 3/4$ are sure to exist, and Condition~\ref{cond:hatg-separated} ensures defender drones will not reach an unprotected neighborhood in time. In turn, Condition~\ref{cond:hatg-remove-poor-neighborhoods} says that since there are many neighborhoods, the attacker will not go to a low value neighborhood regardless of the defense strategy, and therefore there is no reason for defending it either. As a result, we may ignore this neighborhood altogether, and simplify the graph.\footnote{For this reason, we do not require the coarsening $\hat{V}$ to be a partition of $V$ (i.e., $\bigcup_{i\in [1,k]}\hat{v}_j\neq\hat{V}$).}

As for Condition~\ref{cond:hatg-single-attack-drones}, note that $u_{A,0}^{\hat{v},a} \le A \cdot u_{1,0}^{\hat{v},a}$ always holds. However, when there is variability in the rewards and valuable rewards are sparse, we expect a smaller gap between the two, since one cannot exploit the same target twice. In particular, Condition~\ref{cond:hatg-single-attack-drones} holds if a single attacker can collect all rewards in $\hat{v}$ with $R^a(\cdot) > \delta$. This should be the case when valuable targets are sparse and lie in the interior of neighborhoods rather than near the periphery.
This condition incentivizes the attacker to spread her drones across different neighborhoods to increase the chance of attacking an unprotected neighborhood, as by Condition~\ref{cond:hatg-many-nodes} the chance of a neighborhood being unprotected is not negligible. At the same time, it incentivizes the defender to spread her drones across different neighborhoods to decrease the chance of a successful attack on an unprotected neighborhood. However, this argument holds only if neighborhoods are comparably valuable, which is captured by the following condition.

Condition~\ref{cond:hatg-single-defense-drones} suggests that the presence of a defense drone in a neighborhood makes a significant impact on defender and attacker drone utility. The constraints ensure that the damage done by the attacker facing an undefended neighborhood is significantly larger than the damage done when facing a single defender, where $u_{A,D}^{\hat{v},d}$ is defined analogously to $u_{A,D}^{\hat{v},a}$ (cf. Condition~\ref{cond:hatg-single-attack-drones}) for the defender. The intuition is that a defender can always start at the center of a neighborhood, and thus be able to catch the attacker relatively quickly, whereas by Condition~\ref{cond:hatg-single-attack-drones}, the attacker has enough battery and payload to destroy all crucial spots of a neighborhood when no defender is present.

\paragraph{Sufficiency}
When a $\delta$-coarsening exists, we will show that an $\epsilon$-SSE can be computed efficiently. The reason is that both the attacker and defender are incentivized to spread their drones out across different neighborhoods, which results in a decomposition of the multi-drone game into multiple single-attacker single-defender drone sub-games. To show this, we start with a definition:

\begin{Definition}[$\epsilon$-tight coarsening of a Graph]
We say that $\hat{V}$ is an $\epsilon$-\emph{tight} coarsening if there exist $\xx_d \in \Delta(\calS_{\hat{V}}^d)$ and $s_a \in \calS_{\hat{V}}^a$ such that $(\xx_d,s_a)$ is $\epsilon$-approximate SSE. 
\end{Definition}

We emphasize again that the S2D2 algorithm works even when a tight coarsening does not exist. The above definition is only needed for the formal proof that yields theoretical results on the quality of the output strategy. Specifically, we prove a theoretical bound on the loss of the defender and the attacker caused by restricting their strategies to respect a given coarsening $\hat{V}$, which is $\epsilon$ for an $\epsilon$-tight coarsening. Therefore, we will need to compute $\epsilon(\delta)$ for a $\delta$-coarsening.

Our restrictions on $\delta$-coarsening enable us to prove some nice properties, e.g.\ that a $\delta$-coarsening is always $\epsilon$-tight. To show this, we first analyze the loss of the attacker from respecting a coarsening $\hat{V}$.

\begin{Lemma}
\label{lemma:attacker-respect-coarsening}
Let $\xx_d\in\Delta(\calS^d)$, $s_a\in\calS^a$, $\delta>0$, and $\hat{V}$ be a $\delta$-coarsening. Then, there exists a strategy $s_a' \in\calS^a_{\hat{V}}$ such that $u^a(\xx_d,s_a') \ge u^a(\xx_d,s_a)-\epsilon$, and $u^d(\xx_d,s_a') \ge u^d(\xx_d,s_d)$, for $\epsilon=2\delta AP$.
\end{Lemma}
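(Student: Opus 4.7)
The plan is to construct $s_a'$ explicitly from $s_a$ by collapsing to one drone per neighborhood (to respect $\hat{V}$) and stripping out non-valuable attacks, then bound the resulting attacker loss by two independent $\delta AP$ terms coming from these two modifications. The defender inequality then follows from monotonicity in the set of executed attacks.

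First, I would invoke Condition~1 of the $\delta$-coarsening together with the battery bound $B$: since $d(v,v')>B$ for every valuable $v\in\hat{v}$ (i.e.\ $R^a(v)>\delta$) and every $v'\notin\hat{v}$, any attacker drone that ever leaves a neighborhood $\hat{v}$ cannot reach a valuable node of $\hat{v}$ within its remaining battery, so the valuable attacks of each drone in $s_a$ are confined to a single neighborhood $\hat{v}_i$ (mark ``unused'' for drones making no valuable attack). Grouping drones by $\hat{v}_i$, I designate one representative per attacked neighborhood and reassign all extras and unused drones to distinct otherwise-empty neighborhoods, which exist because $|\hat{V}|>4A$ by Condition~2(a). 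In $s_a'$, each representative inherits the start position, path, and valuable target set (with original time-steps) from its $s_a$ counterpart, while reassigned drones hover and attack nothing, ensuring $s_a'\in\calS^a_{\hat{V}}$.

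Next, I would decompose $u^a(\xx_d,s_a)-u^a(\xx_d,s_a')$ into two pieces. (i)~Dropped non-valuable attacks contribute at most $\delta AP$, since each of the $A$ drones makes at most $P$ attacks on targets worth $\le\delta$. (ii)~Dropped valuable attacks from silenced extras: in each attacked neighborhood $\hat{v}$, the combined valuable utility of its $n_{\hat{v}}$ drones in $s_a$ (against any defender, hence trivially against no defender) is at most $u_{n_{\hat{v}},0}^{\hat{v},a}\le u_{A,0}^{\hat{v},a}<u_{1,0}^{\hat{v},a}+\delta AP$ by Condition~3. Choosing the representative so its path in $s_a'$ reproduces one of the $s_a$ drones' paths, and using Condition~1 to decouple per-neighborhood defender reactions (since no defender drone starting outside $\hat{v}$ can interfere with a valuable node of $\hat{v}$ within $B$ steps), the representative's valuable contribution carries over pointwise in $s_d\sim\xx_d$, and the aggregated slack from Condition~3 collapses into a single $\delta AP$ over all neighborhoods. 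Adding (i) and (ii) yields the bound $2\delta AP=\epsilon$. Finally, pointwise in $s_d$ the successful attacks under $s_a'$ form a subset of those under $s_a$, so $u^d(\xx_d,s_a')\ge u^d(\xx_d,s_a)$ because every $P^d<0$.

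The main obstacle is the aggregation in~(ii): Condition~3 only bounds the no-defender gap $u_{A,0}^{\hat{v},a}-u_{1,0}^{\hat{v},a}$ neighborhood by neighborhood, so a naive summation would accumulate a factor equal to the number of attacked neighborhoods. The resolution relies on Condition~1 to argue a pointwise (per pure $s_d$) decoupling of in-$\hat{v}$ defender dynamics from out-of-$\hat{v}$ attacker observations, which lets the Condition~3 slack aggregate globally into a single $\delta AP$ term rather than compounding per neighborhood.
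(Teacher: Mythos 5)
Your first step (confining each drone's valuable attacks to one neighborhood via Condition~\ref{cond:hatg-separated}, at a cost of at most $\delta AP$ from discarded sub-$\delta$ rewards) matches the paper. The gap is in your second step. When $n>1$ drones of $s_a$ operate in the same neighborhood $\hat{v}$, you silence all but one representative, and the representative merely \emph{reproduces one of the original drones' paths}. The utility lost is then the combined valuable utility of the $n-1$ silenced drones, and nothing in Condition~\ref{cond:hatg-single-attack-drones} controls this. Condition~\ref{cond:hatg-single-attack-drones} bounds $u_{n,0}^{\hat{v},a}$ from \emph{above} by $u_{1,0}^{\hat{v},a}+\delta AP$; to convert that into a small loss you would need your single surviving drone to actually \emph{collect} close to $u_{1,0}^{\hat{v},a}$, but a drone that copies one of the $n$ original paths may collect only an $n$-th of the neighborhood's value (e.g.\ when the $n$ drones split the valuable targets evenly, which is exactly the regime Condition~\ref{cond:hatg-single-attack-drones} describes: one drone \emph{could} have collected everything, so each of the $n$ collects a fraction). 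In that case your $s_a'$ loses a constant fraction of $u_{1,0}^{\hat{v},a}$, not $O(\delta AP)$, so the construction does not satisfy the lemma. Your proposed fix for the aggregation problem --- ``decoupling'' defender dynamics across neighborhoods via Condition~\ref{cond:hatg-separated} --- is orthogonal to this: it says nothing about how the additive slack $\delta AP$ in Condition~\ref{cond:hatg-single-attack-drones}, which is stated \emph{per neighborhood}, could collapse into a single global term, and it certainly does not recover the utility of the drones you silenced.

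The paper's proof avoids this by never silencing drones: the surplus drones in an over-populated neighborhood are \emph{relocated} to the $A'$ least-protected unoccupied neighborhoods and play greedily there. Condition~\ref{cond:hatg-many-nodes} guarantees each such neighborhood is defended with probability at most $1/3$, Condition~\ref{cond:hatg-separated} guarantees that a defender entering from outside cannot intercept before the valuable nodes, and Condition~\ref{cond:hatg-remove-poor-neighborhoods} guarantees the recouped value $\tfrac{2}{3}u_{1,0}^{\hat{v}_{a'},a}$ is comparable to what was given up in the original neighborhood. It is this \emph{recouping} step --- absent from your construction --- that makes the total loss $O(\delta AP)$. (Two smaller remarks: your observation that the executed attacks under your $s_a'$ form a subset of those under $s_a$ does give the defender-utility inequality for \emph{your} construction, but that construction fails the attacker bound; and your use of Condition~\ref{cond:hatg-separated} to confine valuable attacks of a single drone to one neighborhood is correct and is the one part of your argument that transfers directly.)
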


\begin{proof}
Strategy $s_a'$ is constructed in two steps. First, in $s_a^1$ each drone stays within its starting neighborhoods. Then, in $s_a':=s_a^2$, in addition there is a single attacker drone in each neighborhood. The attacker loss is then bounded by the sum of the losses from the two steps.

First, consider the following strategy $s_a^1$. All attacker drones are placed as in $s_a$, and follow the same paths. Whenever an attacker drone in $s_a$ crosses a neighborhood, the corresponding drone in $s_a^1$ halts. Note that since attacker drones are not coordinated after initial allocation, $s_a^1$ is well defined. Specifically, the strategy of other attacker drones is not affected.
By Condition~\ref{cond:hatg-separated}, when an attacker drone moves across neighborhoods, it can only get negligible rewards. Therefore, following $s_a^1$ may have a utility loss of up to $\delta AP$ for the attacker drone compared to $s_a$. Indeed, for every attack drone and attack payload unit, it could be that in $s_a$ it picked a reward smaller than $\delta$, and in $s_a^1$ it doesn't collect this reward.

Next, assume $A' > 1$ attacker drones were assigned the same neighborhood $\hat{v}'$ in $s_a^1$. For each neighborhood $\hat{v}\in\hat{V}$, let $\lambda_{\hat{v}}$ be the probability that at least one defense drone is allocated to neighborhood $\hat{v}$ at time $t=0$, with respect to $\xx_d$.
Among all neighborhoods that are not occupied with any attacker drone, let $\hat{V}_{A'}$ be the $A'$ least protected neighborhoods with respect to $\xx_d$.
Then at $t=0$, each neighborhood $\hat{v}_{a'}\in \hat{V}_{A'}$ is protected with probability at most $\lambda_{\hat{v}_{a'}}\le \frac{D}{|\hat{V}|-A}$. Indeed, assume for purposes of contradiction that they are protected with probability $> \frac{D}{|\hat{V}|-A}$. Then since those are the least protected, all unoccupied neighborhoods are protected with probability $> \frac{D}{|\hat{V}|-A}$, and there are at least $|\hat{V}|-A$ such neighborhoods. However, even protecting $|\hat{V}|-A$ neighborhoods with probability $\frac{D}{|\hat{V}|-D}$ already requires $D$ defense resources, hence such a defense coverage vector is not feasible $\xx_d \not\in\calC_D$, a contradiction.

Now, by Condition~\ref{cond:hatg-many-nodes}, $\frac{D}{|\hat{V}|-A} \le \frac{D}{3D+A-A} = \frac{1}{3}$. Consider spreading the attacker drones from $\hat{v}$ to $\hat{V}_{A'}$, and play greedily, that is, maximize the attacker utility when facing no defender. Denote this strategy by $s_a^2$.

At worst, the utility of the attacker drones from $\hat{V}_{A'}$ is $\frac{2}{3}\sum_{\hat{v}_{a'}\in\hat{V}_{A'}}{u_{1,0}^{\hat{v}_{a'}}}-\delta PA'$. Indeed, with probability $\ge \frac{2}{3}$, there are no defenders in $\hat{v}_{a'}$ at $t=0$. Assume by way of contradiction that a defender catches an attacker drone in $\hat{v}_{a'}$ at $v_m$, before it reaches some valuable node $v\in\hat{v}_{a'}$ with reward $R^a(v)>\delta$. Then, let $v_a$ be the start node for the attacker and $v_d$ be the start node for the defender. Since they both begin at $t=0$ and meet at $v$, we know that $d(v_d,v_m)=d(v_a,v_m)$. By triangular inequality, $d(v_d,v) \le d(v_d,v_m) + d(v_m,v) = d(v_a,v_m)+d(v_m,v) \le B$, as the attacker moves from $v_a$ to $v_m$ and then to $v$ in less then $B$ units of battery. However, $d(v_d,v) \le B$ and $R^a(v) > \delta$ contradicts Condition~\ref{cond:hatg-separated}. Therefore, the defenders can cause a utility loss for each attacker of up to $P\delta$.

On the other hand, at best, the utility of the $A'$ drones in $s_a^1$ is $u_{A',0}^{\hat{v}'}$. Therefore, the utility loss of the attacker is at most:
$\delta PA' + u_{A',0}^{\hat{v}'} - \frac{2}{3}\sum_{\hat{v}_{a'}\in\hat{V}_{A'}}{ u_{1,0}^{\hat{v}_{a'}}}$. By Condition~\ref{cond:hatg-single-attack-drones}, this is less than $2\delta PA' + u_{1,0}^{\hat{v}',a} - \frac{2}{3}\cdot 2 \min_{\hat{v}_{a'}\in\hat{V}_{A'}}{ u_{1,0}^{\hat{v}_{a'}}}$. Then, by Condition~\ref{cond:hatg-remove-poor-neighborhoods}, the overall utility loss is bounded by $2\delta PA'$. Repeating the above for all neighborhoods in $s_a^1$ that were initially assigned with multiple attacker drones will result with a total attacker utility loss of up to $\epsilon=2\delta PA$.

Thus, a greedy strategy $s_a'=s_a^2\in\calS_{\hat{V}}^a$ of spreading the attacker drones unoccupied neighborhoods and playing greedily, ignoring the defender, results with a negligible loss in utility, regardless of the defense strategy.
\end{proof}

It follows that the attacker respects the coarsening. 

\begin{Corollary}
\label{corollarly:attacker-respects-coarsening}
Let $\hat{V}$ be a $\delta$-coarsening. Then:
\begin{eqnarray*}
\max_{\xx_d\in\Delta(\calS^d_{\hat{V}})}{u^d(\xx_d,\BR^d_{\epsilon}(\xx_d))} \ge \max_{\xx_d\in\Delta(\calS^d_{\hat{V}})}{u^d(\xx_d,\BR^d_{\epsilon,\hat{V}}(\xx_d))} \\
\max_{\xx_d\in\Delta(\calS^d)}{u^d(\xx_d,\BR^d_{\epsilon,\hat{V}}(\xx_d))} \ge \max_{\xx_d\in\Delta(\calS^d)}{u^d(\xx_d,\BR^d(\xx_d))}
\end{eqnarray*}
where $\BR^d_{\epsilon,\hat{V}}(\xx_d):=\argmax\limits_{s_a \in \BR_{\epsilon}(\xx_d) \cap \calS_{\hat{V}}^a}{u^d(\xx_d,s_a)}$ considers only strategies from $\BR_{\epsilon}^a(\xx_d)$ that respect the coarsening, and only then takes the strategy that favors the defender.
\end{Corollary}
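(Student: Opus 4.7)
The proof splits naturally along the two inequalities, and the work is quite asymmetric. The first inequality is essentially bookkeeping: by definition, $\BR^d_{\epsilon,\hat V}(\xx_d)$ is the defender-optimal attacker response chosen from $\BR_\epsilon^a(\xx_d) \cap \calS^a_{\hat V}$, whereas $\BR^d_\epsilon(\xx_d)$ is chosen from the larger set $\BR_\epsilon^a(\xx_d)$. Since the max of $u^d$ over a larger set dominates the max over a smaller one, pointwise for every $\xx_d \in \Delta(\calS^d_{\hat V})$ we have $u^d(\xx_d, \BR^d_\epsilon(\xx_d)) \ge u^d(\xx_d, \BR^d_{\epsilon,\hat V}(\xx_d))$, and the first inequality follows by taking $\max$ over $\xx_d$ on both sides.

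For the second inequality, the plan is to invoke Lemma~\ref{lemma:attacker-respect-coarsening} to exhibit, for every $\xx_d \in \Delta(\calS^d)$, a coarsening-respecting attacker strategy that is simultaneously an $\epsilon$-best response and no worse for the defender than the true best response. Concretely, I would fix $\xx_d$ and $s_a \in \BR^d(\xx_d)$, then apply the lemma to $(\xx_d, s_a)$ to obtain $s_a' \in \calS^a_{\hat V}$ satisfying $u^a(\xx_d, s_a') \ge u^a(\xx_d, s_a) - \epsilon$ and $u^d(\xx_d, s_a') \ge u^d(\xx_d, s_a)$. Because $s_a$ is an \emph{exact} attacker best response, $u^a(\xx_d, s_a) = \max_{s_a''} u^a(\xx_d, s_a'')$, so the first bound certifies $s_a' \in \BR_\epsilon^a(\xx_d) \cap \calS^a_{\hat V}$, placing it among the candidates that define $\BR^d_{\epsilon,\hat V}(\xx_d)$. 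The defender-optimal tie-break then gives $u^d(\xx_d, \BR^d_{\epsilon,\hat V}(\xx_d)) \ge u^d(\xx_d, s_a') \ge u^d(\xx_d, \BR^d(\xx_d))$, and taking $\max$ over $\xx_d \in \Delta(\calS^d)$ completes the step.

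The only real subtlety I anticipate is aligning the tie-breaking conventions: I must be careful that passing from $s_a$ to the coarsening-respecting $s_a'$ does not break membership in $\BR_\epsilon^a$, which is why the argument is initiated from an exact best response $s_a \in \BR^d(\xx_d)$ rather than an $\epsilon$-approximate one — this provides exactly the $\epsilon$ slack needed to absorb the lemma's $\epsilon$ utility loss. Beyond this, no new combinatorial analysis is required; the corollary is a direct repackaging of Lemma~\ref{lemma:attacker-respect-coarsening} into the language of Stackelberg values, with all the heavy lifting already absorbed into the two-step construction $s_a \to s_a^1 \to s_a^2$ of that lemma.
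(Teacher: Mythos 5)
Your proof is correct and follows essentially the same route as the paper's: the first inequality is the same set-inclusion/larger-max argument, and the second inequality invokes Lemma~\ref{lemma:attacker-respect-coarsening} on an exact best response $s_a\in\BR^d(\xx_d)$ to produce a coarsening-respecting $s_a'\in\BR^a_{\epsilon,\hat{V}}(\xx_d)$ that is at least as good for the defender. Your explicit remark about why starting from an exact best response guarantees $s_a'\in\BR^a_\epsilon(\xx_d)$ is a useful clarification that the paper leaves implicit.
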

\begin{proof}
Let $\xx_d\in\Delta(\calS^d)$ be any strategy. Then by definition, $\BR_{\epsilon,\hat{V}}^a(\xx_d) = \BR_{\epsilon}^a(\xx_d) \cap \calS_{\hat{V}}^a \subseteq \BR_{\epsilon}^a(\xx_d)$. Therefore, $\BR_{\epsilon}^d(\xx_d)$ can only increase defender utility:
\begin{align*}
u^d(\xx_d,\BR^d_{\epsilon}(\xx_d)) = \max_{s_a\in \BR_{\epsilon}^a(\xx_d)}{u^d(\xx_d,s_a)} \ge\\
\max_{s_a\in \BR_{\epsilon,\hat{V}}^a(\xx_d)}{u^d(\xx_d,s_a)}= u^d(\xx_d,\BR^d_{\epsilon,\hat{V}}(\xx_d)).
\end{align*}
Note that $\BR_{\epsilon,\hat{V}}^d(\xx_d) \neq \emptyset$ by Lemma~\ref{lemma:attacker-respect-coarsening}. Since the inequality above holds for every $\xx_d\in\calS^d$, it also holds when maximizing over $\xx_d\in\Delta(\calS_{\hat{V}}^d)$.
Conversely, let $s_a \in \BR^d(\xx_d)$. Then by Lemma~\ref{lemma:attacker-respect-coarsening}, there exist $s_a' \in \BR_{\epsilon,\hat{V}}^a(\xx_d)$ such that $u^d(\xx_d,s_a') \ge u^d(\xx_d,s_a)$. Therefore:
\begin{align*}
u^d(\xx_d,\BR^d_{\epsilon,\hat{V}}(\xx_d)) = \max_{s_a' \in \BR^a_{\epsilon,\hat{V}}(\xx_d)}{u^d(\xx_d,s_a')} \ge\\
\max_{s_a \in \BR^a(\xx_d)}{u^d(\xx_d,s_a)}= \max_{\xx_d\in\Delta(\calS^d)}{u^d(\xx_d,\BR^d(\xx_d))}.    
\end{align*}
Again, since the above holds for every $\xx_d$, it also holds when maximizing over $\xx_d \in \Delta(\calS^d)$.
\end{proof}

We now derive bounds on the utility loss (gain) of the attacker (defender) from increasing the protection of a neighborhood.

\begin{Lemma}
\label{lemma:utility-gain-lower-bound}
Consider any neighborhood $\hat{v}$, protected with some probability $\lambda$. Then, increasing the probability a defender is in the hood $\hat{v}$ by $\eta > 0$, will result in the following lower bounds on the attacker utility loss and defender utility gain:
\begin{enumerate}
    \item $u^{\lambda,a} - u^{\lambda+\eta,a} \ge \eta \cdot u_{1,0}^{a} - u_{1,1}^{a}$.
    \item $u^{\lambda+\eta,d} - u^{\lambda,d} \ge -\eta \cdot u_{1,0}^{d} + u_{1,1}^{d}$.
\end{enumerate}
Recall that $u_{A,D}^a$ is the maximal attacker utility from attacking a neighborhood with $A$ attack drones, facing $D$ defense drones. Analogously, $u_{A,D}^d$ is the maximal defender utility when protecting a neighborhood against $A$ attack drones using $D$ defense drones.
\end{Lemma}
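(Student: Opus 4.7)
My plan is to sandwich each of the two equilibrium utilities between expressions involving only the benchmark quantities $u_{1,0}^a, u_{1,1}^a, u_{1,0}^d, u_{1,1}^d$, using two canonical strategies: the attacker's \emph{greedy} pure strategy $s_a^0$ (the one maximizing $u^a(\bot,\cdot)$, so $u^a(\bot,s_a^0)=u_{1,0}^a$) and the defender's full-presence SSE commitment $\xx_d^1$ (whose SSE payoffs are $u_{1,1}^a$ and $u_{1,1}^d$).

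For the attacker claim, I will first show $u^{\lambda,a}\ge (1-\lambda)\,u_{1,0}^a$. The attacker can always commit to $s_a^0$; against any $\xx_d^\lambda$ at presence $\lambda$ this yields an expected payoff of at least $\lambda\cdot 0+(1-\lambda)\,u^a(\bot,s_a^0)=(1-\lambda)\,u_{1,0}^a$, using $u^a\ge 0$. Next, I will show $u^{\lambda+\eta,a}\le (\lambda+\eta)\,u_{1,1}^a+(1-\lambda-\eta)\,u_{1,0}^a$: the defender could commit to $\xx_d^1$ at presence $\lambda+\eta$, against which the attacker's best response has utility at most
\[
(\lambda+\eta)\max_{s_a} u^a(\xx_d^1,s_a)+(1-\lambda-\eta)\max_{s_a} u^a(\bot,s_a)=(\lambda+\eta)\,u_{1,1}^a+(1-\lambda-\eta)\,u_{1,0}^a,
\]
and this upper bound is inherited by the SSE attacker utility via the defender's Stackelberg optimality. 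Subtracting, and using $\lambda+\eta\le 1$ together with $u_{1,1}^a\ge 0$, gives
\[
u^{\lambda,a}-u^{\lambda+\eta,a}\ge \eta\,u_{1,0}^a-(\lambda+\eta)\,u_{1,1}^a\ge \eta\,u_{1,0}^a-u_{1,1}^a.
\]

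For the defender claim, my plan is entirely symmetric. I would upper-bound $u^{\lambda,d}$ by the linear combination $\lambda\,u_{1,1}^d+(1-\lambda)\,u_{1,0}^d$ (the in-hood and out-of-hood defender payoffs are capped by $u_{1,1}^d$ and $u_{1,0}^d$ respectively), lower-bound $u^{\lambda+\eta,d}$ using the defender's option to commit to $\xx_d^1$ together with Stackelberg optimality, and subtract. The arithmetic mirrors the attacker calculation and, using $\lambda+\eta\le 1$ and $u_{1,1}^d\le 0$, yields $u^{\lambda+\eta,d}-u^{\lambda,d}\ge -\eta\,u_{1,0}^d+u_{1,1}^d$.

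The main obstacle will be the ``inheritance'' step: arguing that the SSE attacker utility at $\lambda+\eta$ is no larger than what the attacker secures against the specific defender deviation $\xx_d^1$ (and symmetrically for the defender). In zero-sum games this is automatic since defender-utility maximization coincides with attacker-utility minimization; in our setting with $u^a\ge 0$ and $u^d\le 0$ coupled through shared attack events, one invokes that the defender's SSE commitment weakly dominates the alternative $\xx_d^1$ in defender utility, which forces the corresponding attacker payoff to the required side of the inequality.
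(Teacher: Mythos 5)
Your overall strategy is exactly the paper's: sandwich $u^{\lambda,a}$ and $u^{\lambda,d}$ between affine functions of the presence probability built from the benchmarks $u_{1,0}$ and $u_{1,1}$, subtract the bounds at $\lambda$ and $\lambda+\eta$, and clean up using $\lambda+\eta\le 1$, $u_{1,1}^a\ge 0$, $u_{1,1}^d\le 0$. The paper's sandwich is $(1-\lambda)u_{1,0}^{a}\le u^{\lambda,a}\le(1-\lambda)u_{1,0}^{a}+\lambda u_{1,1}^{a}$ and $\lambda u_{1,1}^{d}+(1-\lambda)u_{1,0}^{d}\le u^{\lambda,d}\le(1-\lambda)u_{1,0}^{d}$, and your attacker-side arithmetic reproduces it exactly.

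Two of your justifications, however, do not go through. First, the ``inheritance'' step you flag is a genuine gap, and your proposed fix is a non sequitur: in a general-sum game, the fact that the defender's SSE commitment weakly dominates the deviation $\xx_d^1$ in \emph{defender} utility implies nothing about which side of an inequality the \emph{attacker's} SSE payoff lands on --- there is no monotone coupling between the two payoffs, and the paper deliberately avoids the zero-sum assumption (cf.\ Section~\ref{sec:discussion}). The intended route is more direct: for the SSE profile $(\xx_d,s_a)$ at presence level $\lambda+\eta$, expand $u^a(\xx_d,s_a,\lambda+\eta)=(\lambda+\eta)\,u^a(\xx_d,s_a)+(1-\lambda-\eta)\,u^a(\bot,s_a)$ and bound the two terms separately by $u_{1,1}^a$ and $u_{1,0}^a$, with no reference to a defender deviation at all. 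Second, your defender upper bound $u^{\lambda,d}\le\lambda u_{1,1}^{d}+(1-\lambda)u_{1,0}^{d}$ is not valid as justified: the in-hood payoff is \emph{not} capped by $u_{1,1}^{d}$, because the attacker in the $\lambda$-game best responds to the mixture rather than to the conditional in-hood strategy. For instance, a near-greedy attacker that the in-hood defender intercepts immediately gives the defender a conditional payoff close to $0>u_{1,1}^{d}$. The paper instead uses the weaker but safe cap $u^{\lambda,d}\le(1-\lambda)u_{1,0}^{d}$, i.e., it bounds the in-hood term by $0$ using $u^d\le 0$; combined with the lower bound $u^{\lambda+\eta,d}\ge(\lambda+\eta)u_{1,1}^{d}+(1-\lambda-\eta)u_{1,0}^{d}$, the subtraction still yields the claimed inequality.
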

\begin{proof}
The attacker utility is bounded as follows:
$$ (1-\lambda) u_{1,0}^{a} \le u^{\lambda,a} \le (1-\lambda) u_{1,0}^{a} + \lambda u_{1,1}^{a}.$$
The defender utility is bounded as follows:
$$ \lambda \cdot u_{1,1}^{d} + (1-\lambda) u_{1,0}^{d} \le u^{\lambda,d} \le (1-\lambda) u_{1,0}^{d}.$$
Subtracting the lower and upper bounds appropriately yields the above lower bounds.
\end{proof}

\begin{Theorem}
\label{theorem:tight-coarsensing}
A $\delta$-coarsening is $\epsilon$-tight, for $\epsilon=2\delta AP$.
\end{Theorem}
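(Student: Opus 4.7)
To prove that a $\delta$-coarsening $\hat V$ admits an $\epsilon$-SSE within $\Delta(\calS^d_{\hat V})\times\calS^a_{\hat V}$ for $\epsilon=2\delta A P$, the plan is to define
\[
\xx_d^{*} \in \argmax_{\xx_d \in \Delta(\calS^d_{\hat V})} u^d(\xx_d,\BR^d_{\epsilon,\hat V}(\xx_d)),\qquad s_a^{*}\in\BR^d_{\epsilon,\hat V}(\xx_d^{*}),
\]
which are well defined by Lemma~\ref{lemma:attacker-respect-coarsening} (giving $\BR^a_{\epsilon,\hat V}(\xx_d)\neq\emptyset$ for every $\xx_d$) and Condition~\ref{cond:hatg-many-nodes} (giving $\Delta(\calS^d_{\hat V})\neq\emptyset$). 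The attacker's $\epsilon$-best-response condition in the SSE definition is then immediate, since $s_a^{*}\in\BR^a_{\epsilon,\hat V}(\xx_d^{*})\subseteq\BR^a_\epsilon(\xx_d^{*})$.

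The core of the proof is the defender inequality
\[
u^d(\xx_d^{*},s_a^{*})+\epsilon\ \ge\ V^{*}:=\max_{\xx_d'\in\Delta(\calS^d)}u^d(\xx_d',\BR^d_\epsilon(\xx_d')).
\]
To establish this, I would start with any $\xx_d^{**}$ attaining $V^{*}$ and any $s_a^{**}\in\BR^d_\epsilon(\xx_d^{**})$, and mirror the two-step reshuffling used in Lemma~\ref{lemma:attacker-respect-coarsening}, now applied on the defender side: modify every pure strategy in the support of $\xx_d^{**}$ so that (i) each defense drone is pinned to a single neighborhood for the whole horizon, with drones that start outside $\bigcup_{\hat v}\hat v$ or that wander between neighborhoods relocated to empty ones; (ii) if several drones share a neighborhood, the extras are dispatched to empty neighborhoods. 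Condition~\ref{cond:hatg-many-nodes} ($|\hat V|>4\max\{A,D\}$) supplies enough free slots, and Condition~\ref{cond:hatg-separated} implies that every defender drone relocated by (i) could only have prevented attacks on nodes of reward $\le\delta$ against any $s_a\in\calS^a_{\hat V}$; summing over the $A$ attacker drones and $P$ payload units, the resulting projection $\tilde{\xx}_d\in\Delta(\calS^d_{\hat V})$ satisfies $u^d(\tilde{\xx}_d,s_a)\ge u^d(\xx_d^{**},s_a)-\delta A P$ uniformly for $s_a\in\calS^a_{\hat V}$. Applying Lemma~\ref{lemma:attacker-respect-coarsening} to $(\xx_d^{**},s_a^{**})$ produces a coarsening-respecting $\hat s_a\in\calS^a_{\hat V}$ with $u^d(\xx_d^{**},\hat s_a)\ge V^{*}$; chaining the two bounds gives $u^d(\tilde{\xx}_d,\hat s_a)\ge V^{*}-\delta AP$, and by the argmax definition of $\xx_d^{*}$,
\[
u^d(\xx_d^{*},s_a^{*}) \ge u^d(\tilde{\xx}_d,\BR^d_{\epsilon,\hat V}(\tilde{\xx}_d)) \ge u^d(\tilde{\xx}_d,\hat s_a) \ge V^{*}-\epsilon.
\]

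The main obstacle is the penultimate inequality of the chain, which requires $\hat s_a\in \BR^a_{\epsilon,\hat V}(\tilde{\xx}_d)$. The projection $\tilde{\xx}_d$ redistributes per-neighborhood presence probabilities, which can shift the attacker's best-response threshold away from $\hat s_a$. To control this shift, I would invoke Conditions~\ref{cond:hatg-single-attack-drones} and~\ref{cond:hatg-single-defense-drones}, which quantify the effect of a single defender's presence on attacker utility inside a neighborhood, together with Condition~\ref{cond:hatg-remove-poor-neighborhoods}, which guarantees that relocating attacker effort between surviving neighborhoods is essentially value-neutral up to $\delta P$ per drone. The same $\delta P$-per-drone bookkeeping that powered Lemma~\ref{lemma:attacker-respect-coarsening} then absorbs the total shift into the $\epsilon=2\delta AP$ budget, completing the verification.
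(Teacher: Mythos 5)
Your proposal follows the same architecture as the paper's proof: take the unrestricted defender optimum $\xx_d^{**}$, project it onto $\Delta(\calS^d_{\hat V})$ in two steps (pin each drone to one neighborhood, then deduplicate), use Lemma~\ref{lemma:attacker-respect-coarsening} and Corollary~\ref{corollarly:attacker-respects-coarsening} to move the attacker into $\calS^a_{\hat V}$, and chain the resulting inequalities. The pinning step and its $\delta AP$ accounting via Condition~\ref{cond:hatg-separated} (a defender that crosses into a neighborhood can only prevent attacks on nodes of reward at most $\delta$) are essentially the paper's argument, and your handling of the attacker side is correct.

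However, there is a genuine gap at exactly the step you label ``the main obstacle,'' and your proposed resolution does not close it. Establishing that $\hat s_a$ remains an $\epsilon$-best response to the projected strategy $\tilde{\xx}_d$ is the technical heart of the theorem, and it is \emph{not} an additive ``$\delta P$-per-drone bookkeeping'' matter: the risk is not that the defender loses a small amount of utility, but that the redistribution changes the attacker's preference ordering over neighborhoods so that $\hat s_a$ drops out of $\BR^a_\epsilon(\tilde{\xx}_d)$ entirely --- in which case the inequality $u^d(\tilde{\xx}_d,\BR^d_{\epsilon,\hat V}(\tilde{\xx}_d))\ge u^d(\tilde{\xx}_d,\hat s_a)$ simply fails, and no amount of slack in $\epsilon$ repairs it. The paper \emph{prevents} (rather than bounds) this failure with a calibrated, asymmetric coverage redistribution: the freed defense unit is spread as an extra $5r$ of coverage on each neighborhood outside the attacked set $\hat{\BR}^d$ and only $3r$ on each attacked neighborhood, with $r=\tfrac{1}{8|\hat V|}$; Lemma~\ref{lemma:utility-gain-lower-bound} then shows the attacker's utility on non-attacked neighborhoods drops by at least $5r\,u_{1,0}^{\hat v,a}-u_{1,1}^{\hat v,a}$ while on attacked ones it drops by at most $3r\,u_{1,0}^{\hat v,a}+u_{1,1}^{\hat v,a}$, and the specific constants $\tfrac{4}{3}$, $\tfrac{3}{64|\hat V|}$, $\tfrac{3}{8|\hat V|}$ in Conditions~\ref{cond:hatg-remove-poor-neighborhoods} and~\ref{cond:hatg-single-defense-drones} are chosen precisely so that this gap preserves the attacker's ranking while costing the defender at most $\delta AP$. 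Your recipe of ``dispatching extras to empty neighborhoods'' does not produce this asymmetry --- pushing all the freed probability mass onto unattacked neighborhoods gives the defender nothing on the attacked set, so the $\delta AP$ bound on the defender's loss in your step (ii) is also unjustified. Without reconstructing this quantitative argument (or an equivalent one), the central inequality of your chain remains unproved.
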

\begin{proof}
First, by Corollary~\ref{corollarly:attacker-respects-coarsening}, we know that:
$$\max_{\xx_d\in\Delta(\calS^d_{\hat{V}})}{u^d(\xx_d,\BR_{\epsilon}^d(\xx_d))} \ge \max_{\xx_d\in\Delta(\calS^d_{\hat{V}})}{u(\xx_d,\BR_{\epsilon,\hat{V}}^d(\xx_d))}.$$

Therefore, we will assume w.l.o.g that the attacker respects the coarsening $\hat{V}$, and compare the RHS with: $$\max\limits_{\xx_d\in\Delta(\calS^d)}{u^d(\xx_d,\BR_{\epsilon,\hat{V}}^d(\xx_d))} \ge \max\limits_{\xx_d\in\Delta(\calS^d)}{u^d(\xx_d,\BR^d(\xx_d))} $$
where the last inequality is again by Corollary~\ref{corollarly:attacker-respects-coarsening}.

That is, it is enough to bound the loss of the defender from respecting the coarsening $\hat{V}$, when assuming that the attacker respects the coarsening. Let $\xx_d\in\Delta\calS^d$.

Similarly to the attacker, consider strategy $\xx_d^1$ where defense drones stop before crossing a neighborhood. Suppose that with probability $>0$, playing $\xx_d$, defense drone $1\le i_D \le D$ catches an attacker drone $1\le i_A \le A$. Let $v_D$ be the start position of defense drone $i_D$, $v_M$ be the meeting point, at time $1 < t \le B$, and consider any path $\pi_A$ of length $B-t$ starting from $v_M$ within the attacker's starting neighborhood, ending at some node $v_A$, for attacker drone $i_A$. Then the path from $v_D$ to $v_A$ is of length $B$, and therefore, by Condition~\ref{cond:hatg-separated}, all of the nodes along $\pi_A$ yield a reward $\le \delta$. Hence, playing $\xx_d^1$ results in an attacker utility gain of up to $\delta AP$. Therefore, to this end, drones stay within their neighborhoods. Again, we stress that $\xx_d^1$ is only defined in case the defense drones' movement is not coordinated after initial allocation.

Next, we claim that the defender should not allocate two defense drones to the same neighborhood. Let $p<1$ be the probability that each neighborhood is protected with a single drone, in $\xx_d^1$. We can consider a strategy $\xx_d^2$ that coincides with $\xx_d^1$ when the coarsening is respected (which happens with probability $p$), and then the utility loss will be bounded by a factor of $1-p<1$. Thus, w.l.o.g, assume $p=0$.

Denote by $\cc \in \calC_{D-1}$ the coverage of the neighborhoods, with respect to $\xx_d^1$. That is, $c_{\hat{v}}$ is the probability that neighborhood $\hat{v}$ is protected with at least a single defender. Denote by $\hat{\BR}^d$ the set of $A$ targets the attacker is attacking when facing $\xx_d^1$ (this is well defined since we may now safely assume the attacker respects the coarsening).

Next, we want to introduce a strategy $\xx_d^2$ that respects the coarsening and has comparable defender utility. Denote by $\cc'\in\calC_D$ the coverage vector for $\xx_d^2$. We are interested in increasing the coverage of $\hat{\BR}^d$, while maintaining the condition that these targets are in $\BR^d$. We have an extra defense unit to allocate, since in $\xx_d^2$, only $D-1$ neighborhoods are covered in each pure strategy.

Let $0<r<1$. By Lemma~\ref{lemma:utility-gain-lower-bound}, using $5r$ defense resources on each target outside of $\hat{\BR}^d$, we can decrease the attacker's utility by at-least $5r \cdot u_{1,0}^{\hat{v},a} - u_{1,1}^{\hat{v}}$.

We will do the same for targets in $\hat{\BR}^d$, increasing their coverage by $3r$. By Lemma~\ref{lemma:utility-gain-lower-bound}, the attacker's utility will decrease by at most $3r \cdot u_{1,0}^{\hat{v},a} + u_{1,1}^{\hat{v}}$. By Condition~\ref{cond:hatg-remove-poor-neighborhoods}, the set $\hat{\BR}^d$ remains the attacker's best response, if $r u_{1,0}^{\hat{v}} > 2\max_{\hat{v}\in\hat{V}}{u_{1,1}^{\hat{v},a}}$. Meanwhile, again by Lemma~\ref{lemma:utility-gain-lower-bound}, the defender's utility on every $\hat{v}\in \hat{\BR}^d$ increases by at least $-3r\cdot u_{1,0}^{\hat{v},d} + u_{1,1}^{\hat{v},d}$.

We therefore take $r=\frac{1}{8|\hat{V}|}$, and require the following:
\begin{enumerate}
    \item $\frac{1}{8|\hat{V}|} \frac{3}{4} \max_{\hat{v}\in\hat{V}}{u_{1,0}^{\hat{v},a}} > 2\max_{\hat{v}\in\hat{V}}{u_{1,1}^{\hat{v},a}}$.
    \item $\frac{3}{8|\hat{V}|} |u_{1,0}^{\hat{v},d}| > |u_{1,1}^{\hat{v},d}| + \delta P$.
\end{enumerate}
The first condition ensures that the attacker's best response set is maintained, and the second condition ensures the defender utility is decreased by up to $\delta A P$ in total. Both of these hold by Condition~\ref{cond:hatg-single-defense-drones}, which states that (and quantifies how) the presence of a defense drone in a neighborhood significantly affects the attacker and defender utilities.

Therefore, strategy $\xx_d^2$ results with up to an additional $\delta AP$ utility loss for the defender. This completes the proof, as:
\begin{align*}
\max_{\xx_d\in\Delta(\calS^d_{\hat{V}})}{u^d(\xx_d,\BR_{\epsilon}^d(\xx_d))} \ge \max_{\xx_d\in\Delta(\calS^d_{\hat{V}})}{u(\xx_d,\BR_{\epsilon,\hat{V}}^d(\xx_d))} \ge \\
\max_{\xx_d\in\Delta(\calS^d)}{u^d(\xx_d,\BR_{\epsilon,\hat{V}}^d(\xx_d))} - \epsilon \ge \max_{\xx_d\in\Delta(\calS^d)}{u(\xx_d,\BR^d(\xx_d))} - \epsilon
\end{align*}
\end{proof}

We now explain how to bound the error of S2D2 algorithm, assuming a $\delta$-coarsening exists. 

\begin{Theorem}
\label{theorem:approximate-sse}
Let $\delta>0$ and assume $\hat{V}$ is a $\delta$-coarsening. Then S2D2 outputs an $\epsilon$-SSE for $\epsilon=2AP\delta+2\epsilon'$, where $\epsilon'$ is an upper bound on the error of the single-attacker single-defender oracle.
\end{Theorem}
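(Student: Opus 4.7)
The plan is to combine the structural result of Theorem~\ref{theorem:tight-coarsensing} with the stability lemma for SSE under utility perturbation (Lemma~\ref{lemma:sse-continuous}), so as to compound two distinct sources of error: the loss from restricting both players to strategies that respect the coarsening $\hat{V}$, and the loss from using the approximate single-drone oracle when building the meta-game.

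First, I would invoke Theorem~\ref{theorem:tight-coarsensing}: because $\hat{V}$ is a $\delta$-coarsening, it is $(2AP\delta)$-tight, so there exists a profile in $\Delta(\calS_{\hat{V}}^d) \times \calS_{\hat{V}}^a$ whose defender utility lies within $2AP\delta$ of the unrestricted SSE value, and Corollary~\ref{corollarly:attacker-respects-coarsening} shows that restricting the attacker to $\calS_{\hat{V}}^a$ does not increase the defender's objective. Hence it suffices to exhibit an (approximate) SSE of the game restricted to coarsening-respecting strategies, and the $2AP\delta$ term then accounts for the restriction.

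Second, I would use the decomposition of the restricted game established in Section~\ref{subsec:seq-ssg-approx-solution}: on $\Delta(\calS_{\hat{V}}^d) \times \calS_{\hat{V}}^a$, the utility of any profile $(\langle\hat{p}_d,\hat{x}_d\rangle,\langle\hat{f}_a,\hat{s}_a\rangle)$ is a sum over attacked neighborhoods of terms depending only on the single-drone SSE value within each neighborhood at the presence probability $\lambda_{\hat{v}}=\hat{p}_d(\hat{v})$. S2D2 replaces each such term with the piece-wise linear approximation returned by the single-drone oracle, and by assumption this oracle incurs at most $\epsilon'$ error, so the attacker/defender utilities $(\tilde{u}^a,\tilde{u}^d)$ of the meta-game $\tilde{G}$ solved by S2D2 satisfy $\|(u^a,u^d)-(\tilde{u}^a,\tilde{u}^d)\|_\infty \le \epsilon'$ on the restricted strategy space. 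Since S2D2 then solves $\tilde{G}$ exactly via the MILP reduction of Section~\ref{sec:multi-drone-milp}, its output is a genuine SSE of $\tilde{G}$, i.e.\ a $0$-SSE.

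Third, I would apply Lemma~\ref{lemma:sse-continuous} to the restricted SSG with true utilities $(u^a,u^d)$ and approximations $(\tilde{u}^a,\tilde{u}^d)$: since the S2D2 output is a $0$-SSE of $\tilde{G}$ and the two utility pairs differ by at most $\epsilon'$, it is a $2\epsilon'$-SSE of the true restricted game. Combining this with the $2AP\delta$ slack absorbed from the coarsening restriction yields the claimed bound $\epsilon = 2AP\delta + 2\epsilon'$.

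The main obstacle I anticipate is the careful bookkeeping of the optimistic tie-breaking rule across the two reductions, namely from the original game to the coarsening-respecting game via Corollary~\ref{corollarly:attacker-respects-coarsening}, and from the true restricted game to the approximate meta-game via Lemma~\ref{lemma:sse-continuous}, since the definition of $\BR^d_\epsilon$ interleaves the two approximation tolerances. A secondary subtlety is ensuring that the per-neighborhood oracle error aggregates as $\epsilon'$ rather than $A\epsilon'$ across the (up to) $A$ attacked neighborhoods in the decomposed utility; this requires either interpreting $\epsilon'$ as a sup-norm bound on the meta-game utilities themselves, or equivalently rescaling the per-neighborhood tolerance so that the sum telescopes correctly before Lemma~\ref{lemma:sse-continuous} is applied.
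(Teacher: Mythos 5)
Your proposal is correct and follows essentially the same route as the paper: Theorem~\ref{theorem:tight-coarsensing} absorbs the $2AP\delta$ loss from restricting to coarsening-respecting strategies, and Lemma~\ref{lemma:sse-continuous} converts the $\epsilon'$ sup-norm error of the oracle-built meta-game utilities into a $2\epsilon'$ degradation of the SSE. Your closing caveat about whether the per-neighborhood oracle error aggregates to $\epsilon'$ or $A\epsilon'$ is a legitimate point the paper glosses over --- the intended reading, as you note, is that $\epsilon'$ bounds the error of the meta-game utility functions themselves.
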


Recall that our proposed S2D2 algorithm consists of 3 steps. In the coarsening step, the algorithm outputs, along with the coarsening $\hat{V}$, a parameter $\delta$. By Theorem~\ref{theorem:tight-coarsensing}, the optimal strategy that respects the coarsening is an $\epsilon_1$-SSE, for $\epsilon_1=2AP\delta$.

In the second step, we solve the single-attacker single-defender game on each neighborhood. We can bound the utility of the attacker by $(1-\lambda)u_{1,0}^{\hat{v},a}\le u_\lambda^{\hat{v},a}\le u_{1,0}^{\hat{v},a}$. The lower bound is reached by setting the attacker strategy as $s_a\in\argmax_{s_a'}{u^{\hat{v}}(\bot,s_a')}$ a greedy strategy, regardless of the defense strategy. The upper bound is reached by setting the defense strategy to be $\bot$. Similarly, the utility of the defender can be bounded by $$ u_{1,0}^{\hat{v},d} \le u_\lambda^{\hat{v},d} \le (1-\lambda)\cdot u_{1,0}^{\hat{v},d}.$$
Indeed, the lower bound is when $\xx_d=\bot$, and the upper bound is reached if we assume that a defender in the neighborhood protects all targets completely (and $\lambda$ is small, so that a greedy strategy is approximately $\BR$).

As a consequence, the error of the utility estimation from step 2 can be bounded by $\epsilon_2=\lambda_{\max} P \max_{v\in V}{R^a(v)}$.

In the third step, the algorithm solves the multi-drone meta-game, using the approximated utility function from step 2 as an input. By Lemma~\ref{lemma:sse-continuous}, given the error for the second step, the third step outputs a $2\epsilon_2$-SSE.
The final solution is thus ensured to be an $(\epsilon_1+\epsilon_3)$-SSE. Nevertheless, to make sure the error is small, the error $\delta_2$ must be small as well.

Whenever $\lambda_{\hat{v}}>\lambda_c$ for some cutoff $\lambda_c$, in order to get a meaningful bound for the error, we must approximate the utility more accurately. As suggested in Section~\ref{subsec:single-drone-seq-ssg}, we should consider all strategies $s_a$ for the attacker in $\hat{v}$, such that $u^{\hat{v}}(\bot,s_a) \ge (1-\lambda_{\hat{v}}) u_{1,0}^{\hat{v},a}$, which may consist of more than all greedy strategies. 
In particular, let $r_1 \ge r_2 \ge \ldots \ge r_P > \delta$ be the top $P$ rewarding nodes. Then the attacker strategy space consists of all paths that pass through enough of these nodes so that the overall utility when there is no defender is more than $(1-\lambda_{\hat{v}}) u_{1,0}^{\hat{v},a}$. Doing so will result with an exact solution, and will allow us to replace $\lambda_{\max}$ with $\lambda_c$, as desired.

\section{Experiments}
\label{sec:experiments}

Our experiments were aimed at assessing the efficacy of S2D2 by comparing runtime and defender utilities with a baseline.
We first synthetically generated utilities for nodes in 80 world cities from all continents (except Africa and Antarctica), including several capitals. 
Then, we used manual annotations for different facilities through a survey of 7 security and defense experts. 
Finally, we did a detailed case study that qualitatively assessed the defenses recommended for a single city.

All the experiments were run on an Intel(R) Core(TM) i9-10980XE CPU with 256 GB RAM.

Implementation of S2D2, baseline defense strategy, and the code for the experiments presented below, are all publicly available on Github~\url{https://github.com/tonmoay/S2D2-Experiments}.

\subsection{Setting} 
\paragraph{Dataset and Parameters} We created a dataset of 80 cities, ranging from a few thousands nodes up to a few hundreds of thousands of nodes. For each city, the street networks were sourced from the \emph{OpenStreetMap} platform via the OSMnx library~\cite{boeing2017osmnx}.  

The number of neighborhoods was fixed to $|\hat{V}|=8 D$, that is, proportional to the number of defense drones. Taking a large constant will result in a graph that is mostly unprotected, and neighborhoods that are too small. Taking a small constant would mean that there are enough defense drones to cover all neighborhoods with probability 1, yet those neighborhoods will be too big to protect. 

As the dataset lacked rewards/penalties for nodes, we assigned those parameters in two ways: (i) sampling them independently from a distribution (log-normal/Zipf); (ii) using security experts to manually annotate 6 cities. Defender penalties were then assigned by randomly perturbing the rewards. This maintains some degree of correlation while circumventing a zero-sum game scenario. 

For the synthetic data, we assigned rewards to city nodes by sampling independently from a distribution over the $[0,\infty)$ interval. The reason is that (i) rewards should be non-negative and (ii) we expect the set of nodes with high rewards to be sparse. Otherwise, the game essentially becomes an evasion game where the goal is to catch the attacker as soon as possible. Specifically, we sampled from the log-normal distribution with $\mu=0,\sigma=4$, as well as a Zipf distribution with $s=2$.

For the manually annotated data, we asked 7 senior defense and security officials from the US, EU, Asia, and the Middle East to rate the importance of different facilities in city neighborhoods. The 6 cities included three major U.S. cities, an Asian megacity with a population of over 20M people, and two smaller cities in the Middle East. In all, the cities included 3 world capitals. We asked the experts to imagine a city that they knew well when filling out the survey without telling them which city to look at. We asked questions related to the following types of facilities: Local/Municipality Buildings (e.g., the office of the mayor or city administration), National Government Leadership Buildings (e.g., the White House in Washington DC or 10 Downing Street in London), National Government Operational Buildings (e.g., the office of a Ministry), Security Installation Buildings (e.g., Ministry of Defense or Europol Headquarters), Hospitals, Electricity/Natural Gas Plants, Sanitation and Water Plants, Industrial and Hazardous Materials Areas, Transportation Hubs (e.g., airports, train stations, etc.), Tourist Sites, Financial Districts, Shopping and Entertainment Areas, Sports Arenas, and High Density Areas. Each type of facility was to be ranked on a 1 to 5 scale with 1 meaning it was of very low importance and 5 meaning it was of critical importance. The median values obtained are summarized in Table~\ref{tab:experts-utilities}.
\begin{table}[h!t]
\begin{center}
\begin{small}
\begin{tabular}{cc}
\toprule
\bf \multirow{2}{*}{Facility} & \bf Median \\
\bf  & \bf Utility \\
\midrule
National Government Leadership Buildings & 5 \\
Security Installation Buildings  & 5 \\
Electricity/Natural Gas Plants  & 5 \\
National Government Operational Buildings  & 4 \\
Hospitals  & 4 \\
Sanitation and Water Plants  & 4 \\
Industrial and Hazardous Materials Areas  & 4 \\
Transportation Hubs  & 4 \\
High Density Areas  & 4 \\
Local/Municipality Buildings & 3 \\
Tourist Sites  & 3 \\
Financial Districts  & 3 \\
Shopping and Entertainment Areas  & 3 \\
Sports Arenas  & 3 \\
\bottomrule
\end{tabular}
\end{small}
\caption{Median utilities assigned to various types of facilities by security experts.}\label{tab:experts-utilities}
\end{center}
\end{table}
All the experts agreed that security installations and major national government buildings would have top priority followed by utilities (e.g., power, water). The vast majority of the cities (e.g., residential areas) would have much lower rewards.
A specifically designed annotation interface (Figure ~\ref{fig:chicago_blur}) was used by the experts to draw rectangles and/or polygons and provide a utility value for each polygon, i.e., value of the region. \revision{To avoid risks to real cities, the figure has been intentionally blurred to show the overall use of the interface without making it possible to identify the specific city.}
\begin{figure}[h!t]
\centering
\includegraphics[width=0.8\textwidth]{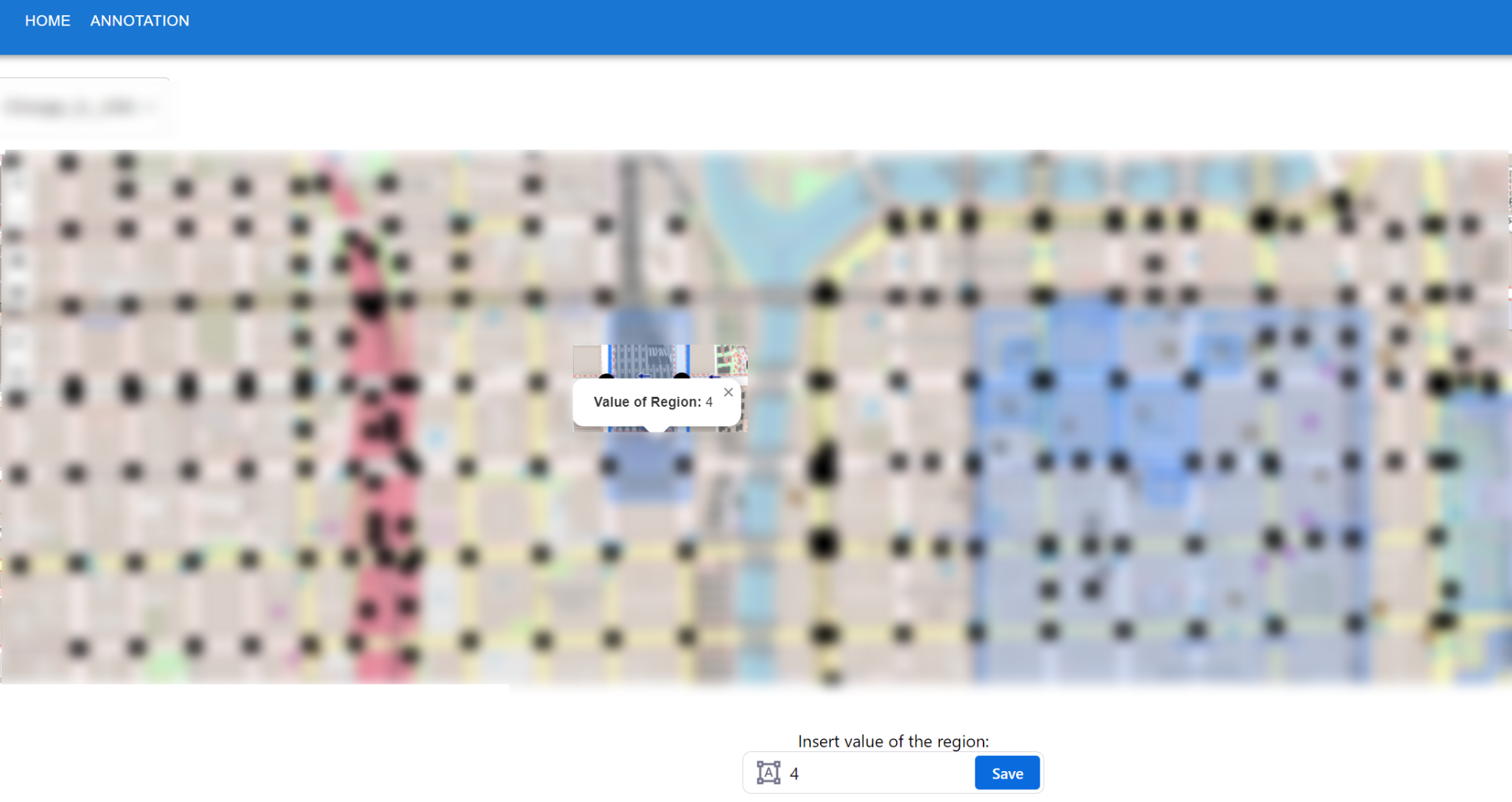}
\caption{Annotation interface \revision{(intentionally blurred)}.}
\label{fig:chicago_blur}
\end{figure}

We fixed the number of defender and attacker drones to be $D=A=4$, the battery capacity $B=6$ and the payload $P=4$. However, in some experiments, we also varied $A,D,B$, and $P$. 
The outcomes presented pertain to 100 iterations in all reported results.

\paragraph{Baseline} 
We compared S2D2's runtime and expected utility with a greedy baseline (Algorithm~\ref{alg:greedy-baseline}) that employs a drone-swarm defense mechanism.
\begin{algorithm}[h!t]
\caption{\textsf{Greedy Baseline}}\label{alg:greedy-baseline}
\begin{small}
\begin{algorithmic}[1]
\REQUIRE 
Undirected graph $G = (V, E)$; \\
\quad\ \  coarsening $\hat{V}$ of $G$, and $\delta$; \\
\quad\ \  numbers of attacker and defender drones $A,D\in \NN$; \\
\quad\ \  attacker drone's  payload $P \in \NN$; \\
\quad\ \ drone's battery capacity $B \in \NN$; \\
\quad\ \ attacker rewards $R^a \in \NN^{|V|}$; \\
\quad\ \ defender penalties $P^d \in \ZZ_{<0}^{|V|}$. \\
\ENSURE 
Defense mixed strategy $(\hat{p}_d,\hat{\xx}_d) \in \Delta(\calS_{\hat{V}}^d)$. \\
\STATE weights\_d $\leftarrow$ []; 
\FOR{{\bf each} $\hat{v}\in\hat{V}$}
    \STATE $\hat{v}_a\leftarrow\hat{v}$.\textsf{get\_top\_attack\_rewards}($P$,$R^a$,$P^d$); \COMMENT{Breaks ties in favor of defender}
    \STATE weights\_d[$\hat{v}$] $\leftarrow\hat{v}_A$.\textsf{sum\_abs\_penalties}($P^d$);
\ENDFOR
\STATE $\hat{p}_d\leftarrow$ D * weights\_d / weights\_d.sum(); \label{line:greedy-def-dist}
\FOR{{\bf each} $\hat{v} \in \hat{f}_a$}
    \STATE $s_1^d\leftarrow$ \textsf{goto\_random\_node}($\hat{v}$);
    \FOR{{\bf each} $2 \le t \le B$}
        \STATE $s_t^d\leftarrow$ \textsf{move\_towards\_closest\_node}$(\hat{v},\delta)$;
    \ENDFOR
    \STATE $\hat{x}_d[\hat{v}] \leftarrow (s_1^d,\ldots,s_t^d)$; \COMMENT{a pure strategy}
\ENDFOR
\RETURN $(\hat{p}_d\hat{x}^d)$;
\end{algorithmic}
\end{small}
\end{algorithm}
The baseline allocates protection to each neighborhood with a drone, doing so proportionally to the cumulative absolute penalties of the top $P$ attacker rewarding nodes. This is done by letting each defense drone sample its starting neighborhood independently from the distribution $\hat{p}_d$ (Line~\ref{line:greedy-def-dist}).

Within each neighborhood, the baseline drone starts at a random node. The output of $\textsf{move\_towards\_closest\_node}(\hat{v})$ is a function that assigns a random start node from $\hat{v}$ in each execution. It then follows a greedy next-step function, defined for each time step $1\le t \le B$ as follows. The defender looks for the target $v \in \hat{V}$ that is closest to the attacker drone position, is of interest to the attacker (amongst the top $P$), is valuable to the defender (penalty $<-\delta$), and the defender can reach there before the attacker, and moves one step along the shortest respective path. If there are multiple attackers in sight (in $\hat{v}$), it picks the closest one to hunt down, breaking ties randomly. Importantly, the algorithm returns a strategy, so the output of $\textsf{move\_towards\_closest\_node}$ is actually a function, that determines the next step for the defender given the defense current position, attacker last observed position, updated penalties and rewards, and the neighborhood graph structure.
This procedure is reiterated based on the attacker's updated position. The above baseline is an adaptation of~\cite{chen2016multiplayer} to our setting. As in~\cite{chen2016multiplayer}, each defender is paired with the closest observed attacker drone. In addition, chasing a drone is done by predicting its projectile. While~\cite{chen2016multiplayer} assumes a straight-line projectile, our heuristic takes into account the attacker rewards, and also rather than straight-line the defender takes the shortest path along the graph.

Notably, the defense strategy above can be implemented by using a defense drone swarm. As the experiments will demonstrate, such a defense strategy is more scalable in terms of run-time, but the expected defender utility is going to be smaller.

\paragraph{Attacker} 
Both S2D2 and baseline defenders were paired with the S2D2 attacker, as we are interested in handling a strong attacker approximating the best response.

\subsection{Results} 

Figure~\ref{fig:main_figure} reports the results regarding runtime and defender utility (expressed as ratios between the baseline and S2D2).
\begin{figure}[h!t]  
    \centering
    \begin{tabular}{cc}
         \includegraphics[width=0.48\textwidth]{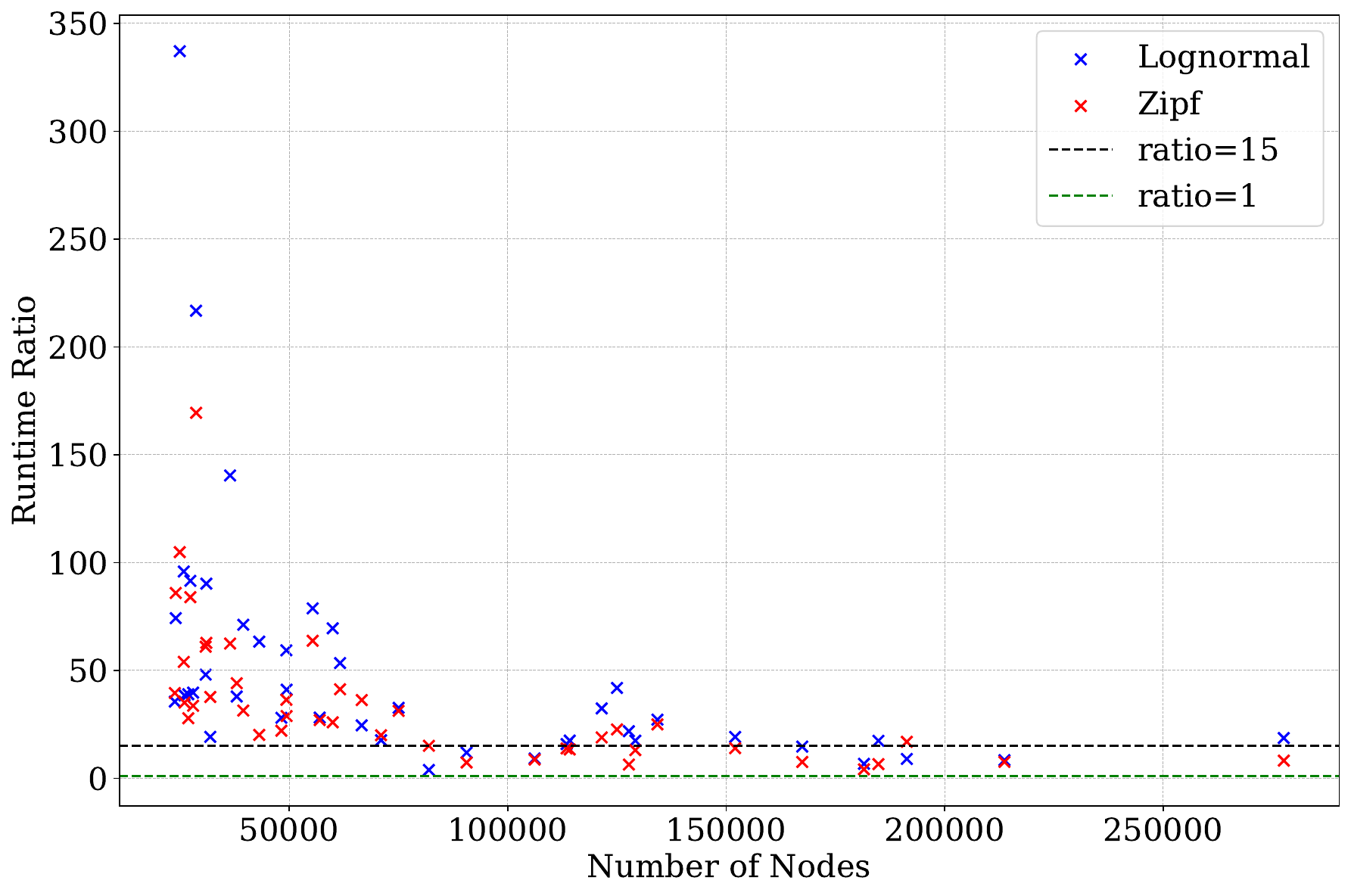} &  
         \includegraphics[width=0.48\textwidth]{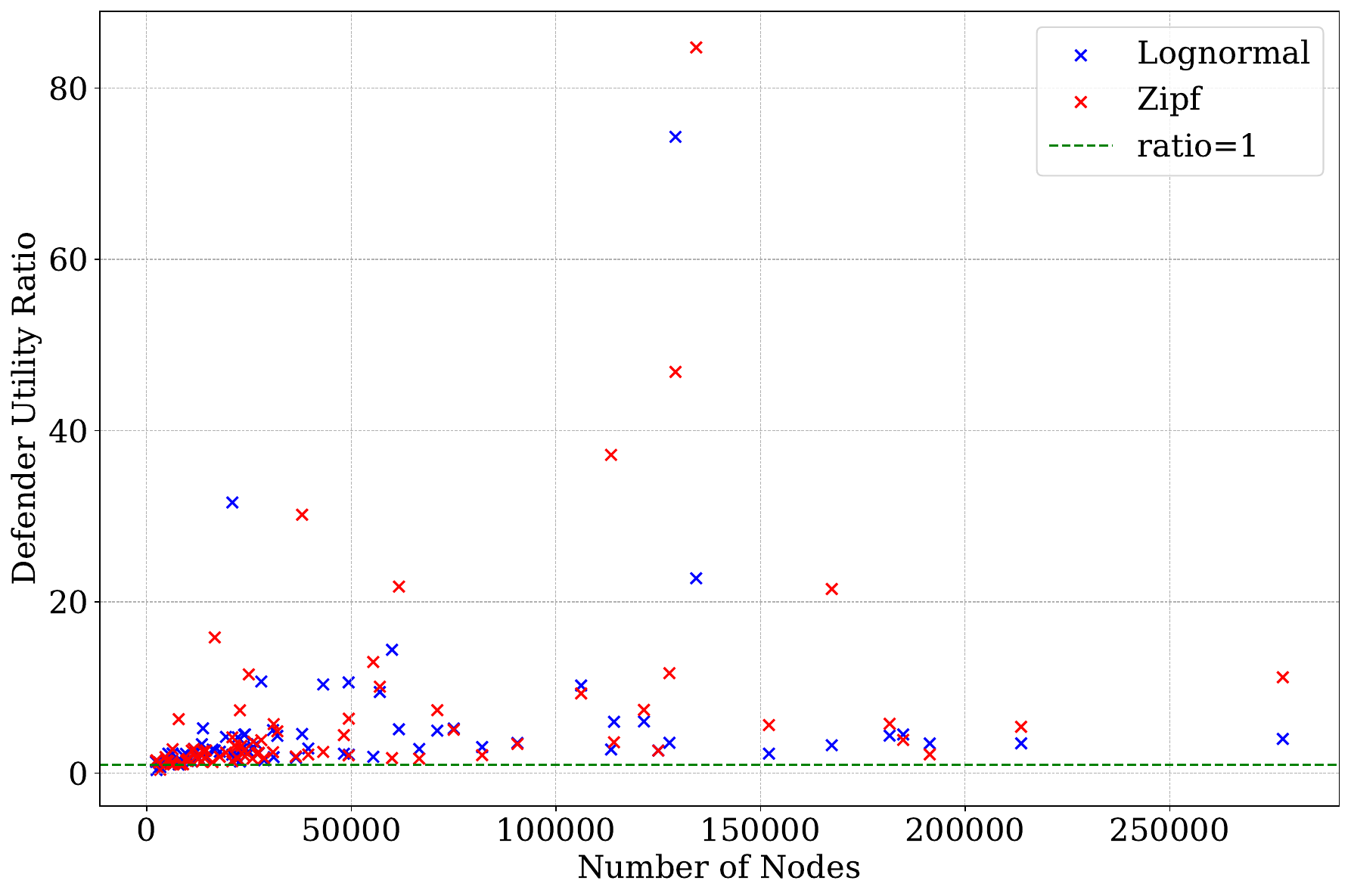} \\
         (a) Runtime ratio &(b) Defender utility ratio\\
         &(synthesized utilities)\\
         & \\
    \multicolumn{2}{c}{\includegraphics[width=0.48\textwidth]{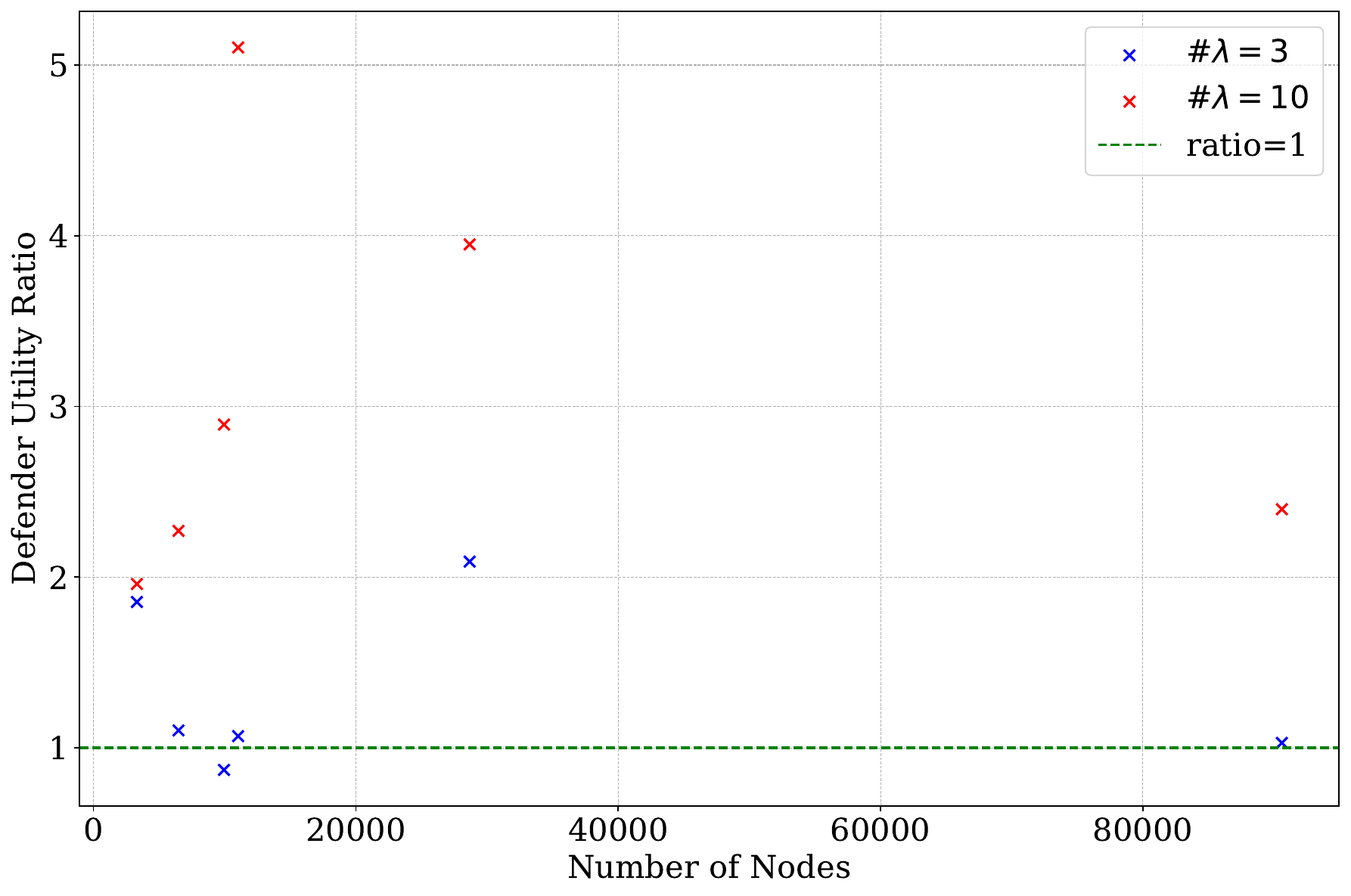}} \\
    \multicolumn{2}{c}{(c) Defender utility ratio (manual annotation)} \\
    \end{tabular}
    \caption{Comparison of S2D2 with the baseline.}
    \label{fig:main_figure}
\end{figure}

Figure~\ref{fig:main_figure}(a) shows that the runtime ratio between S2D2 and the baseline approaches a constant factor of about 15. Indeed, for small cities S2D2 runtime is dominated by the multi-drone stage, but this cost becomes negligible as neighborhood size grows. Accordingly, S2D2 can be run within times that are reasonable even for the largest cities (3.5 hours with one CPU for the largest one).
The more expensive runtime is amply rewarded by the fact that S2D2 decreases the expected defender loss and the attacker's expected utility by an average of 2.84 times that of the baseline, as shown in Figure~\ref{fig:main_figure}(b). 
Interestingly, the utility is less dependent on graph size and more on the structure of the graph and distribution of rewards.

As for manually annotated cities, Figure~\ref{fig:main_figure}(c) shows that the baseline is better for one city. This could be due to the piece-wise linear approximation of single-drone neighborhood utility function, where we only used $\#\lambda=3$ points $\lambda_i \in \{0,0.5,1\}$ (which performed good enough for the synthetic data). We therefore repeated the experiment with S2D2 using $\#\lambda=10$ points. Indeed, S2D2 convincingly outperforms the baseline in all manually annotated cities.

Figure~\ref{fig:suppl_figure} reports results regarding the dependency of runtime and defender utility on the ratio between the number of attacker to defender drones (synthetic utilities, log-normal distribution). The number of attacker drones is $A= \lfloor D \times \text{ADR} \rfloor $ where ADR is the \emph{attacker-defender ratio}.
\begin{figure}[h!t] 
    \centering
    \begin{tabular}{cc}
         \includegraphics[width=0.48\textwidth]{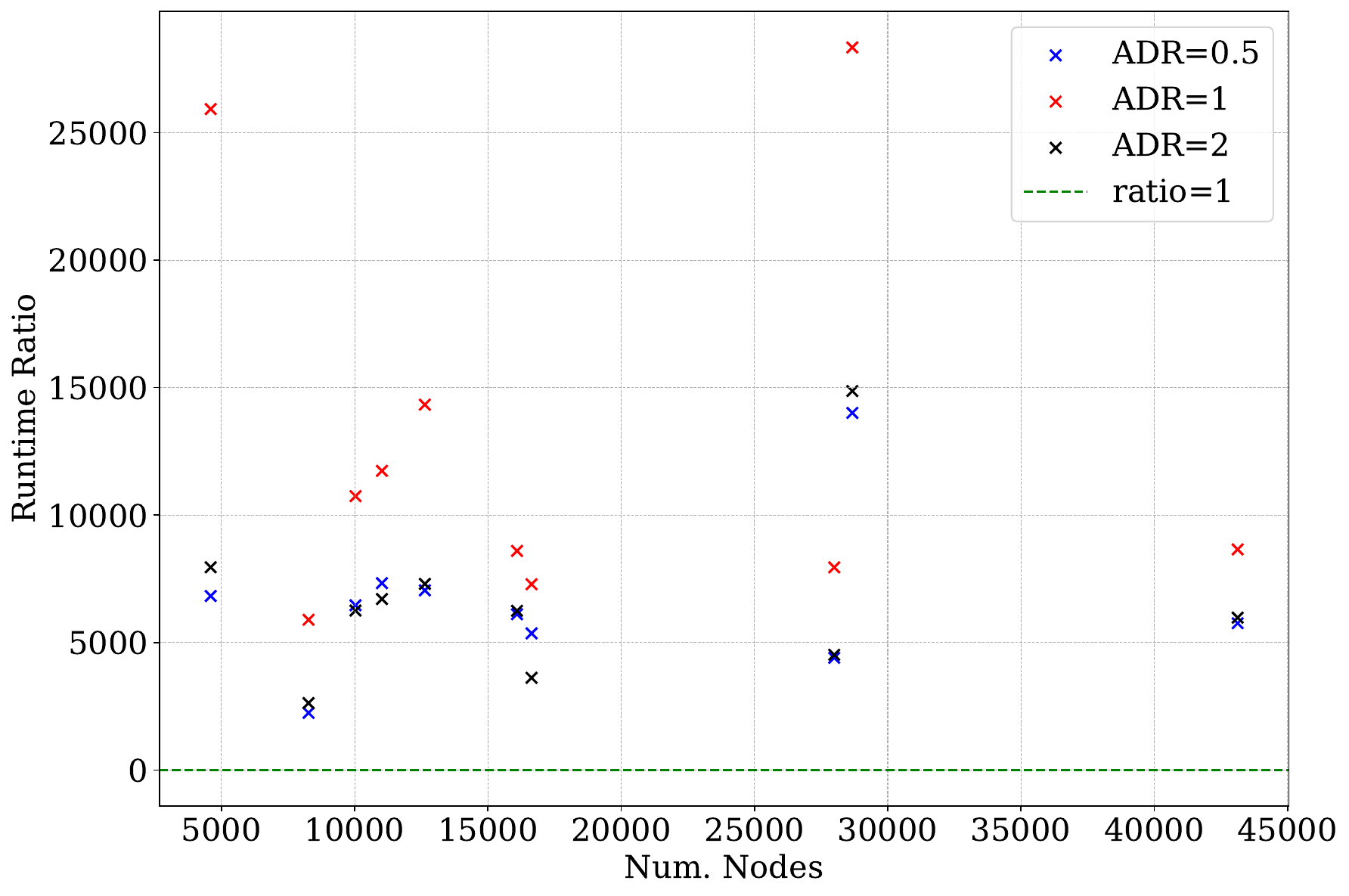} &  
         \includegraphics[width=0.48\textwidth]{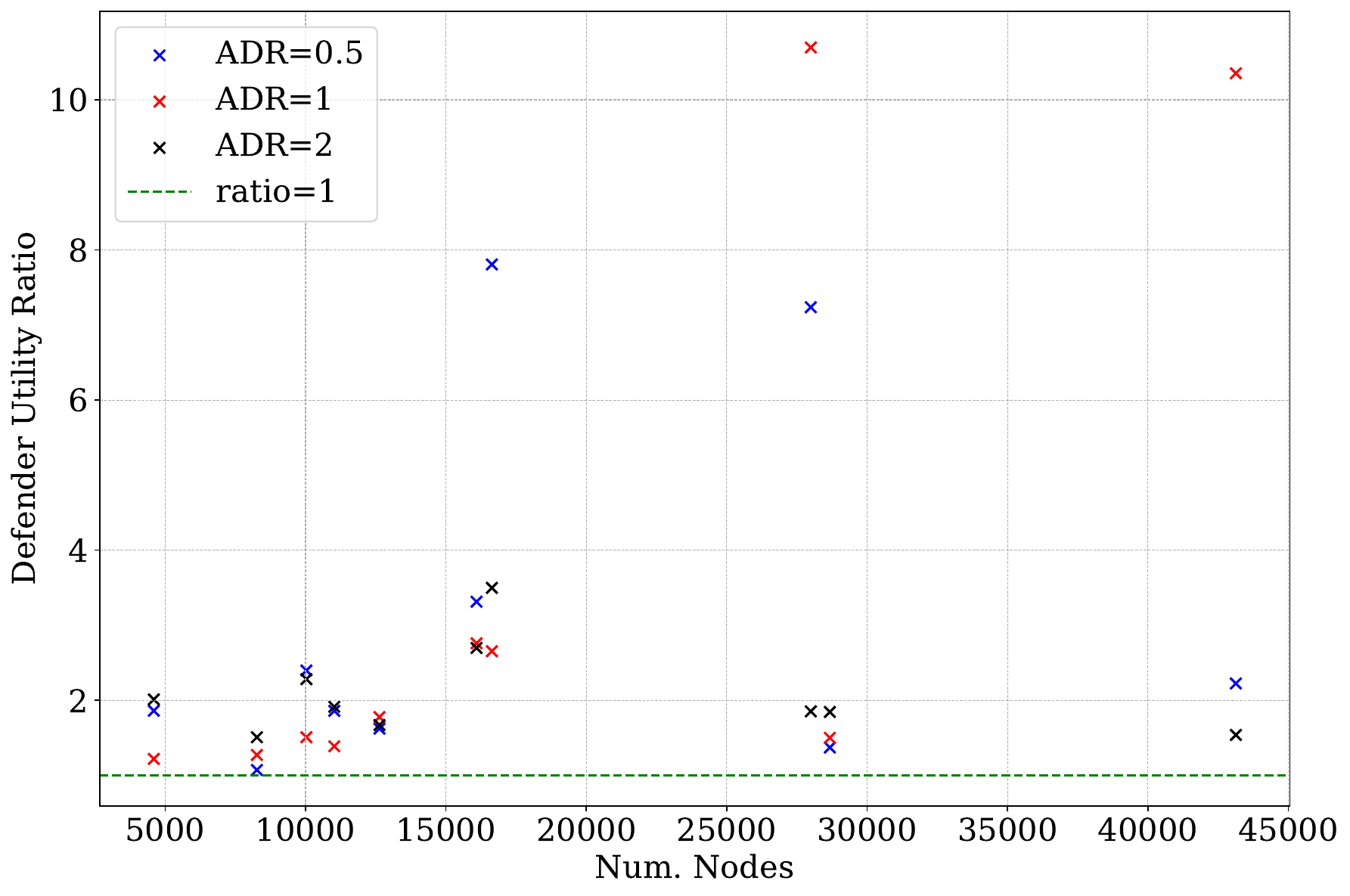} \\
         (a) Runtime ratio & (b) Defender utility ratio\\
    \end{tabular}
    \caption{Comparison of S2D2 with the baseline, with ADR $\in\{0.5,1,2\}$.}
    \label{fig:suppl_figure}
\end{figure}
S2D2 yields a higher defender utility at the expense of increased run-time compared to the baseline. The runtime ratio is relatively big since the cities are relatively small (up to 45,000 nodes), and so the multi-drone phase of the solution takes a significant portion of time. However, we do not observe a strong correlation between ADR and runtime, so handling more attackers does not incur a higher computational cost. We also do not observe a correlation between ADR and defender utility ratio. This is probably because both the baseline and the S2D2 defender utilities are similarly affected from the change of ADR.

Figure~\ref{fig:manually_annotated_def_utility_ratio} shows 
how defender utility varied when we perturbed manually annotated utilities. The latter were perturbed by $\pm$10\%, i.e.\ they were fixed to $\pm$10\% of the true value.
\begin{figure}[h!t]  
    \centering
    \begin{tabular}{cc}
         \includegraphics[width=0.48\textwidth]{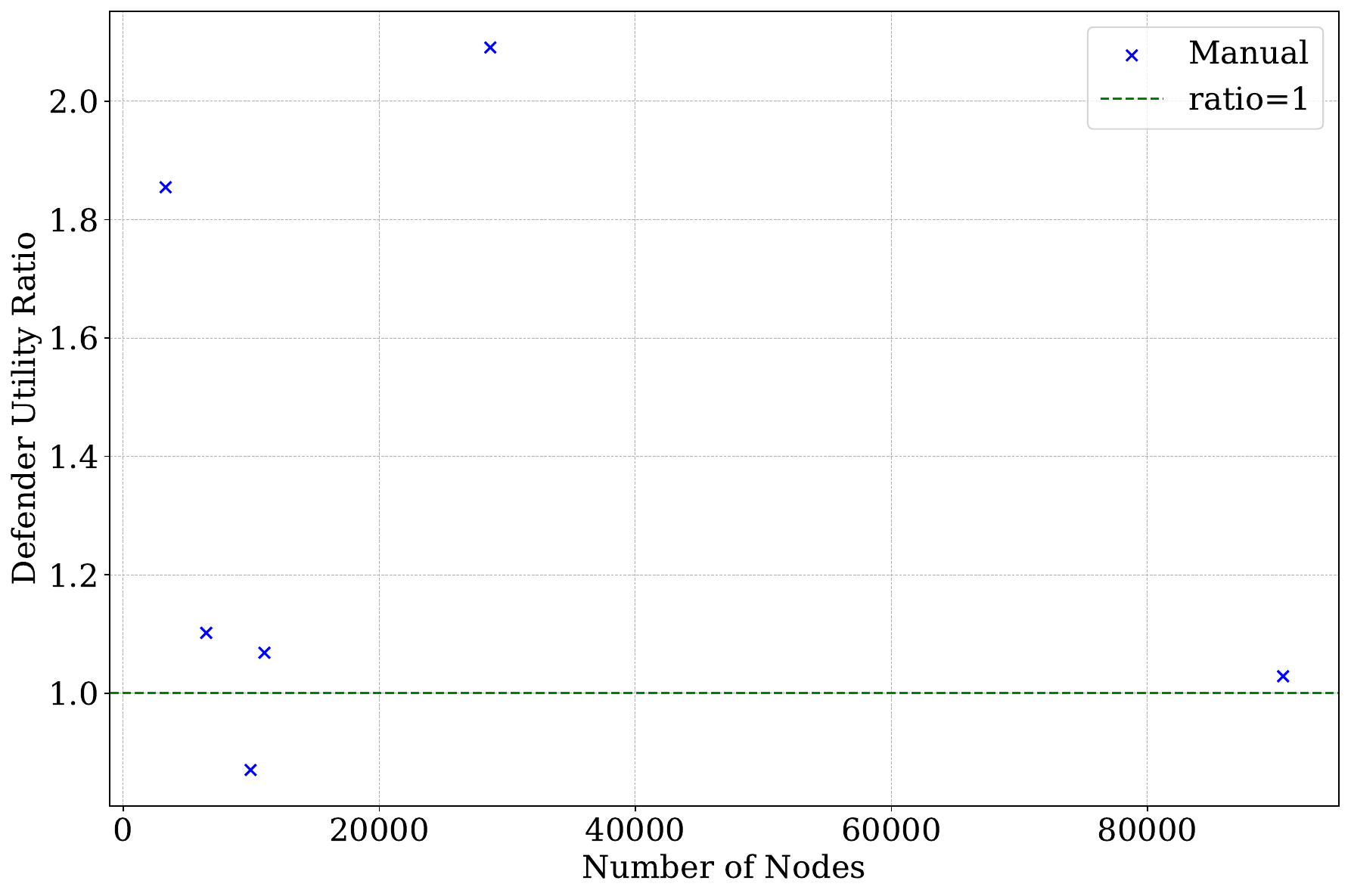} &  
         \includegraphics[width=0.48\linewidth]{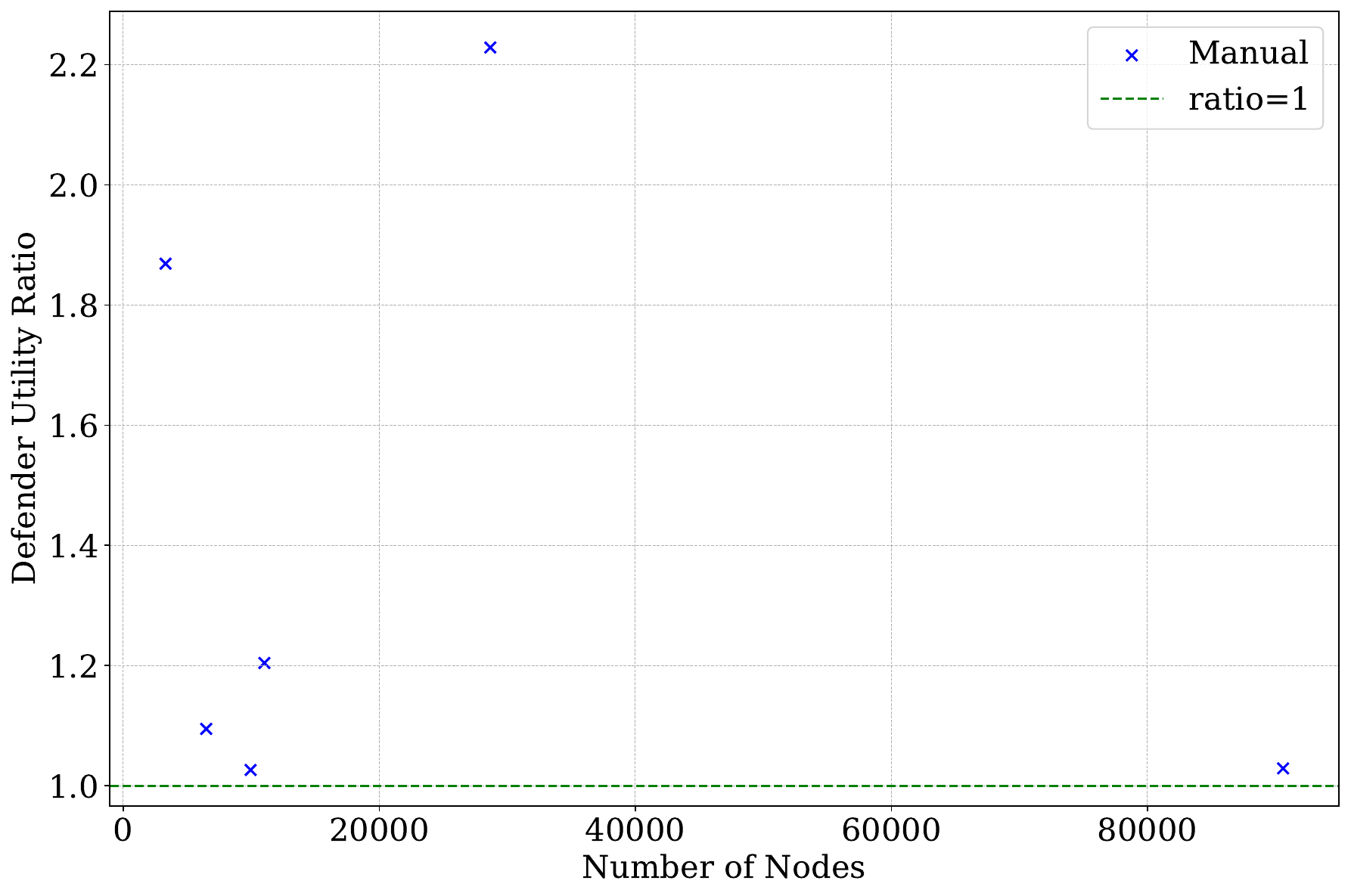} \\
         (a) Defender utility ratio & (b) Defender utility ratio\\
         (No perturbation) & ($10\%$ perturbation)
    \end{tabular}
    \caption{Defender utility ratios when the defender's estimate of the attacker's utility is off by 0 (a) or by $\pm 10$\% (b).}
    \label{fig:manually_annotated_def_utility_ratio}
\end{figure}
In the case of one city, the defender utility ratio is less than 1, so the baseline outperforms S2D2. This is explained by the coarse approximation of the single-drone utility function as a piece-wise linear function, which can be improved if necessary at the expense of runtime. In all the other 5 cases, the utility ratio ranges from around 1.05 to 2.15. In two cases, S2D2 yields almost double the utility of the baseline, and in 3 other cases, it outperforms the baseline by a smaller margin. 
It should be noted that, if we average the defender's utility ratio across the 6 cities, the average is 1.34. Thus, on average, S2D2 provides 34\% improvement over the baseline algorithm in terms of the utility to the defender. Even though that comes at the cost of an increased runtime, most cities would be happy to make this tradeoff: saving 34\% more of the utility of the city (lives and property damage).

Figure~\ref{fig:manually_annotated_def_utility_ratio}(b) shows that in every single case, the defender utility ratio is over 1, so the S2D2 algorithm outperforms the baseline. 
Even when the defender's assumption about the attacker's utility is slightly incorrect, the average improvement over the baseline is 41\%. Most cities would prefer to save an additional 41\% of the utility of the city compared to saving some runtime.

Finally, Figure~\ref{fig:fixed-att} reports the results we obtained when we fixed $A=5$ attacker drones and selected 5 cities based on a balanced size distribution, from 2,283 nodes to 125,013 nodes (synthetic utilities, Zipf distribution). Battery capacity and payloads were fixed to $B=6$ and $P=4$. 

Figure~\ref{fig:fixed-att}(a) reports the impact on defender utility of varying the number of defender drones from 2 to 10 against 5 attacker drones.

\begin{figure}[h!t]  
    \centering
    \begin{tabular}{cc}
         \includegraphics[width=0.48\textwidth]{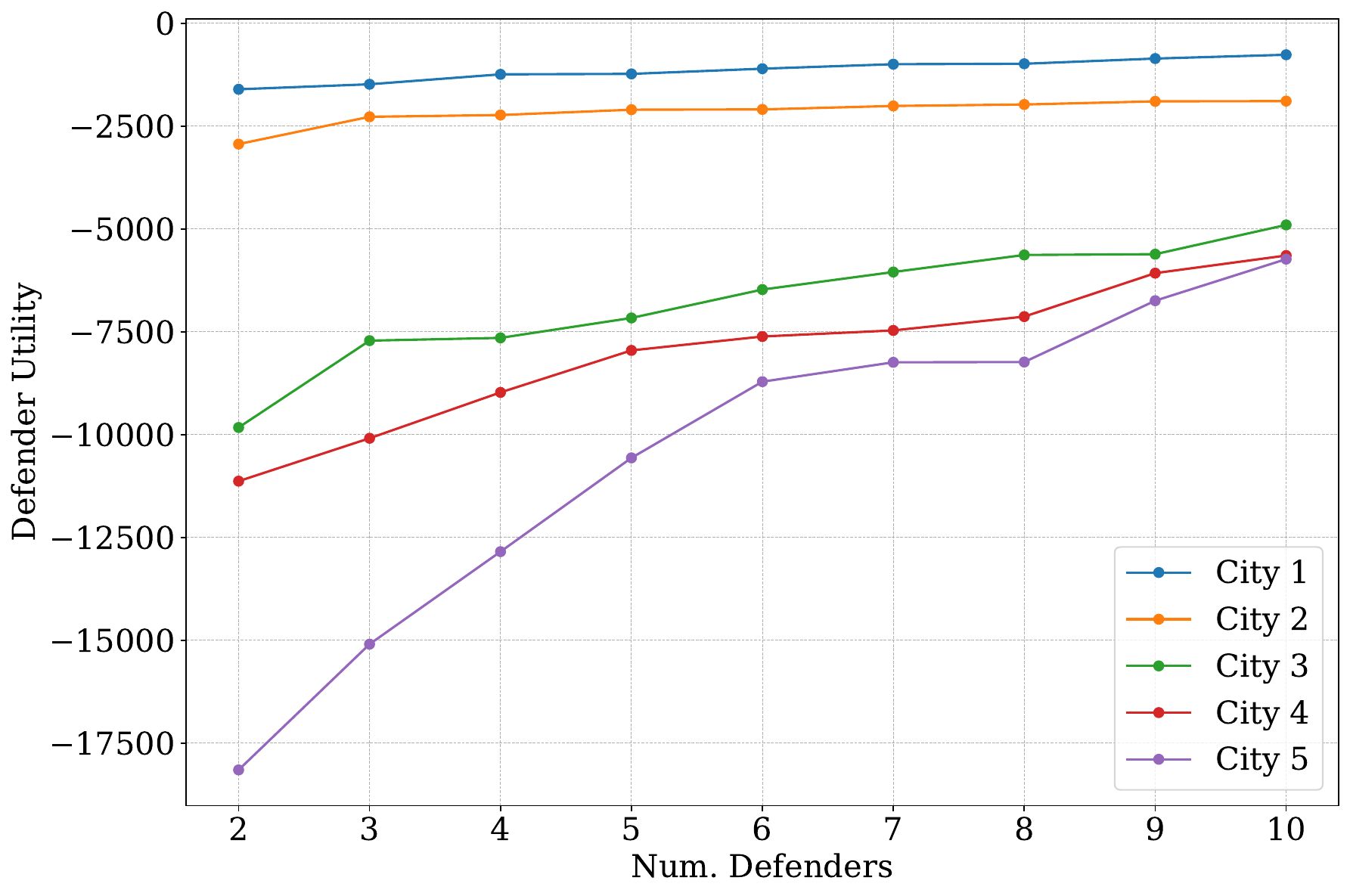} &  
         \includegraphics[width=0.48\textwidth]{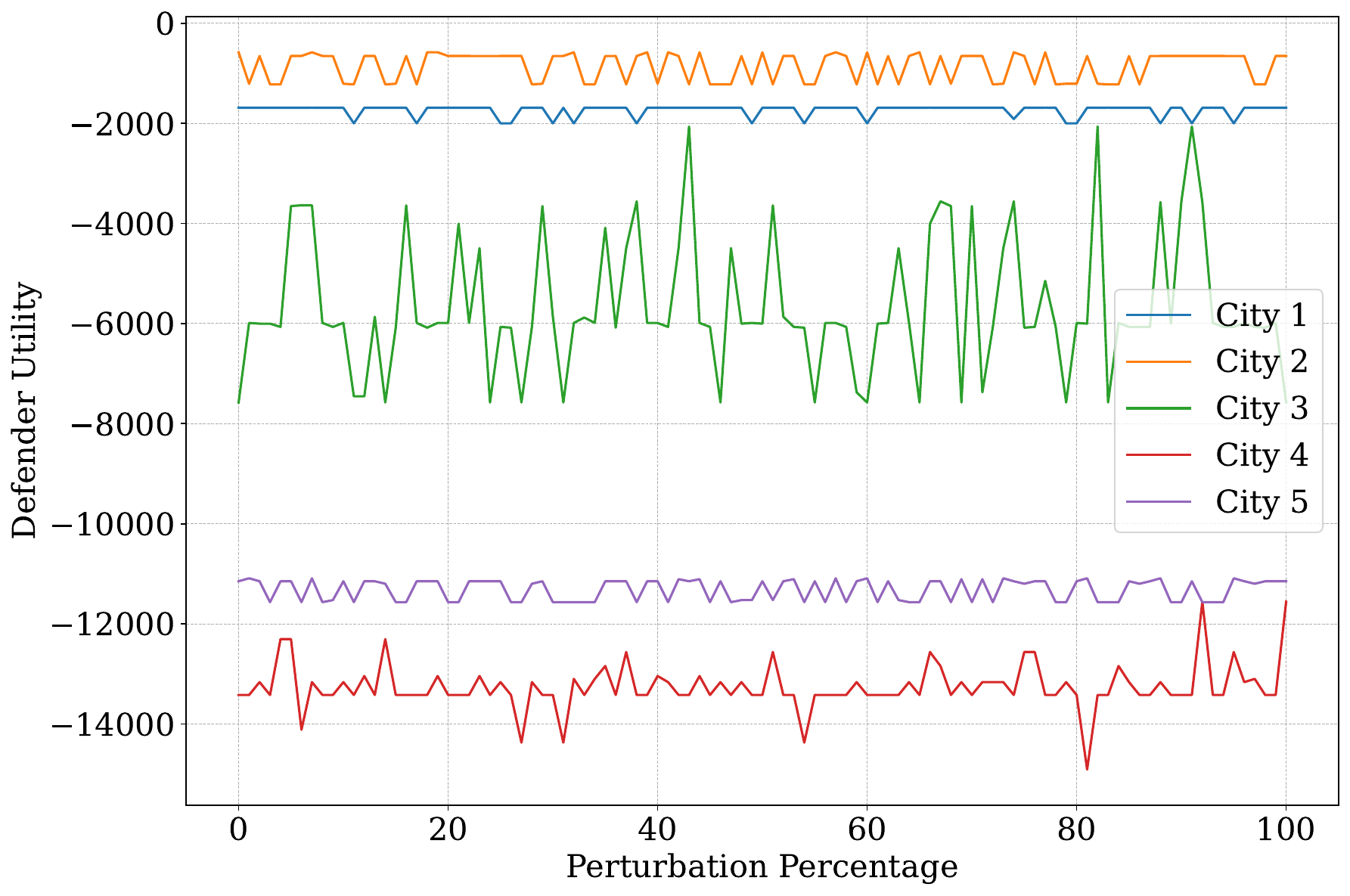}
         \\
         (a) Defender utility & (b) Defender utility \\
         vs. number of defender drones & vs. perturbation percentage\\
    \end{tabular}
    \caption{Defender utilities with 5 attacker drones.}
    \label{fig:fixed-att}
\end{figure}
As expected, increasing the number of defenders consistently enhances defender utility.
Figure~\ref{fig:fixed-att}(b) shows the impact of perturbing utilities from 0\% to 100\% in 1\% increments, with penalties unchanged, when we fixed $D=5$ defender drones.
We sampled a defender strategy from the mixed strategy set, perturbed the rewards, and then evaluated the attacker's response and the utilities of both sides across all perturbation levels.
The results show that, despite the perturbations, the defender strategy remains effective, demonstrating robustness against the attacker's adaptations. 
Although some noise was observed, the overall utility trends were stable.

\paragraph{Case Study of One Major City}
We now describe a detailed case study of one major city from the Americas with a population of over 2M. The city had 28,671 nodes and was manually annotated. Unless stated otherwise, the experimental parameters described previously were used in this case study.

Figure~\ref{fig:city_casestudy_figure} shows the runtimes and utility ratios we obtained when varying the number of defender drones, the payload, and the battery capacity. 
\begin{figure}[h!t]  
    \centering
    \begin{tabular}{cc}
         \includegraphics[width=0.48\textwidth]{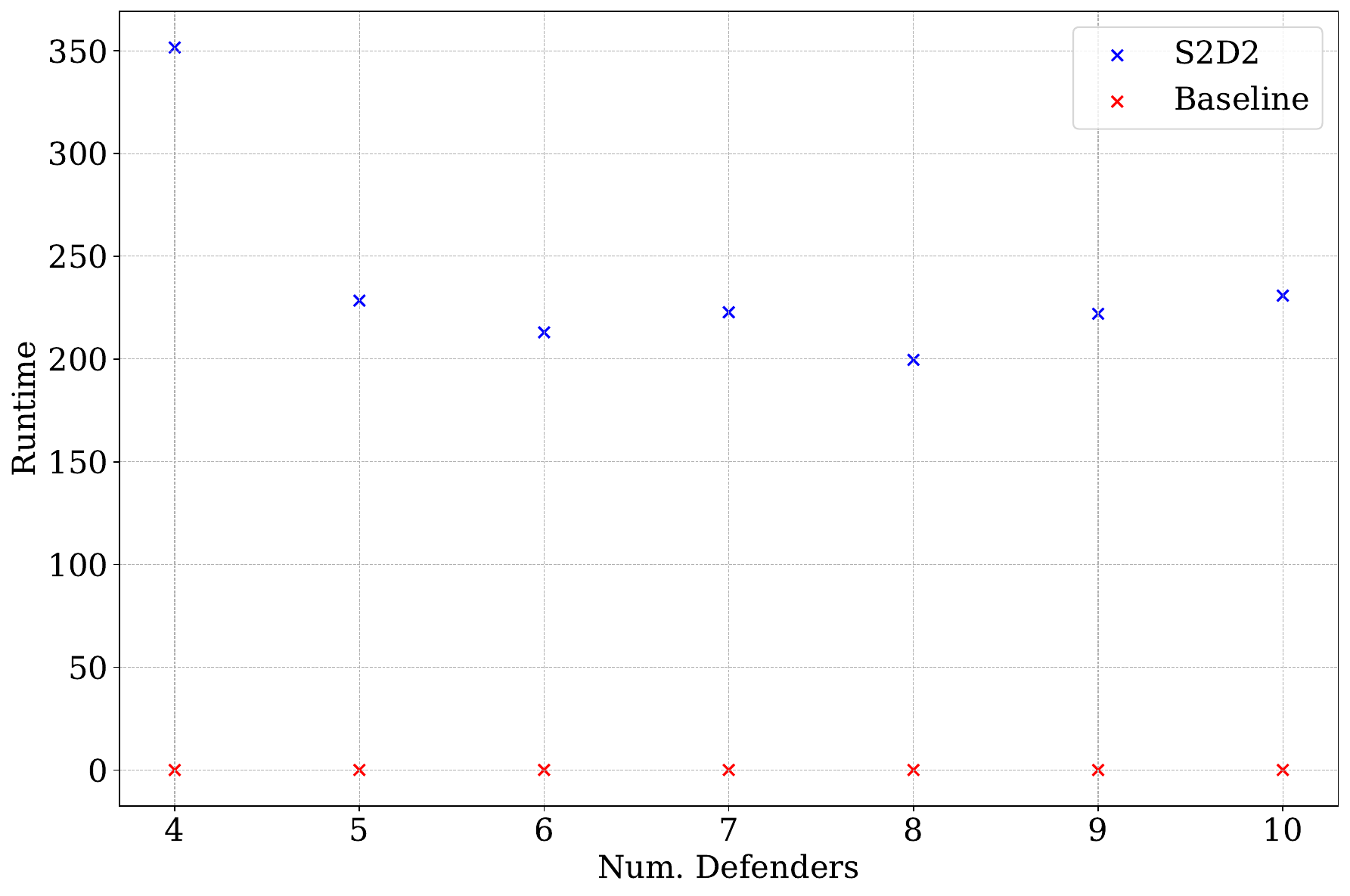} &  
         \includegraphics[width=0.48\textwidth]{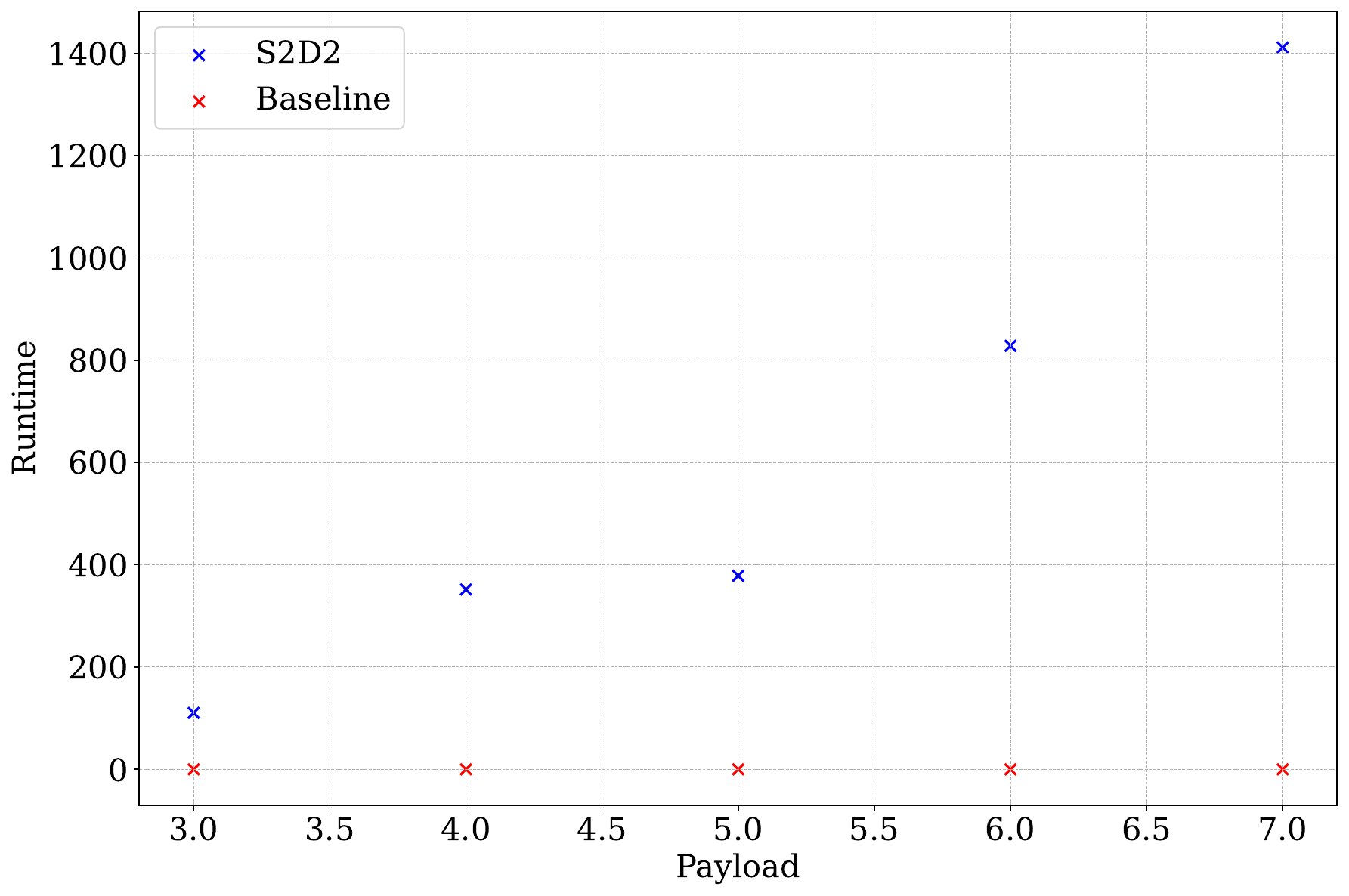} \\
         Runtime (s) & 
         Runtime (s) \\
         vs. number of defender drones & vs. payload \\
         & \\
         \includegraphics[width=0.48\textwidth]{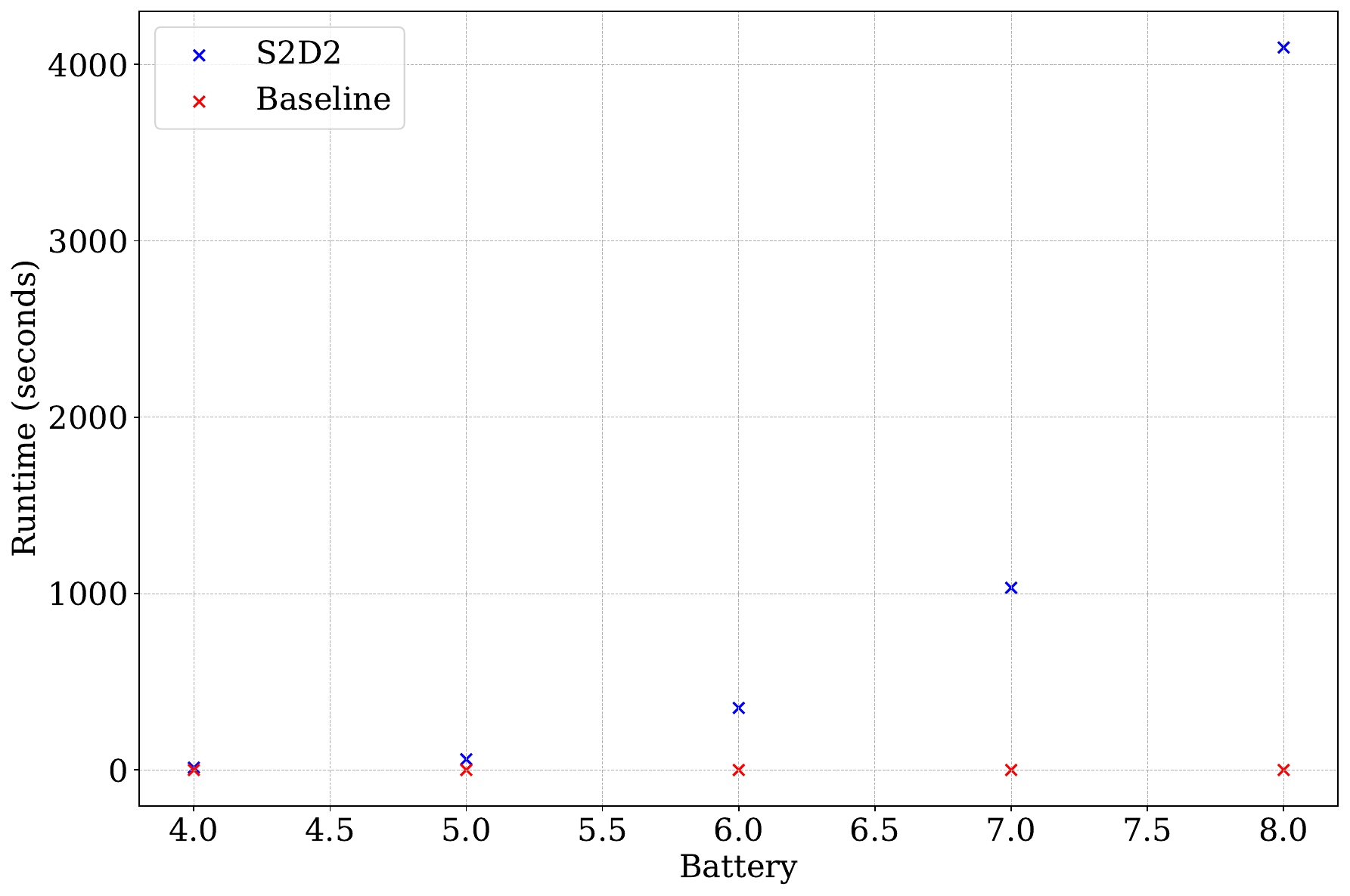} &
         \includegraphics[width=0.48\textwidth]{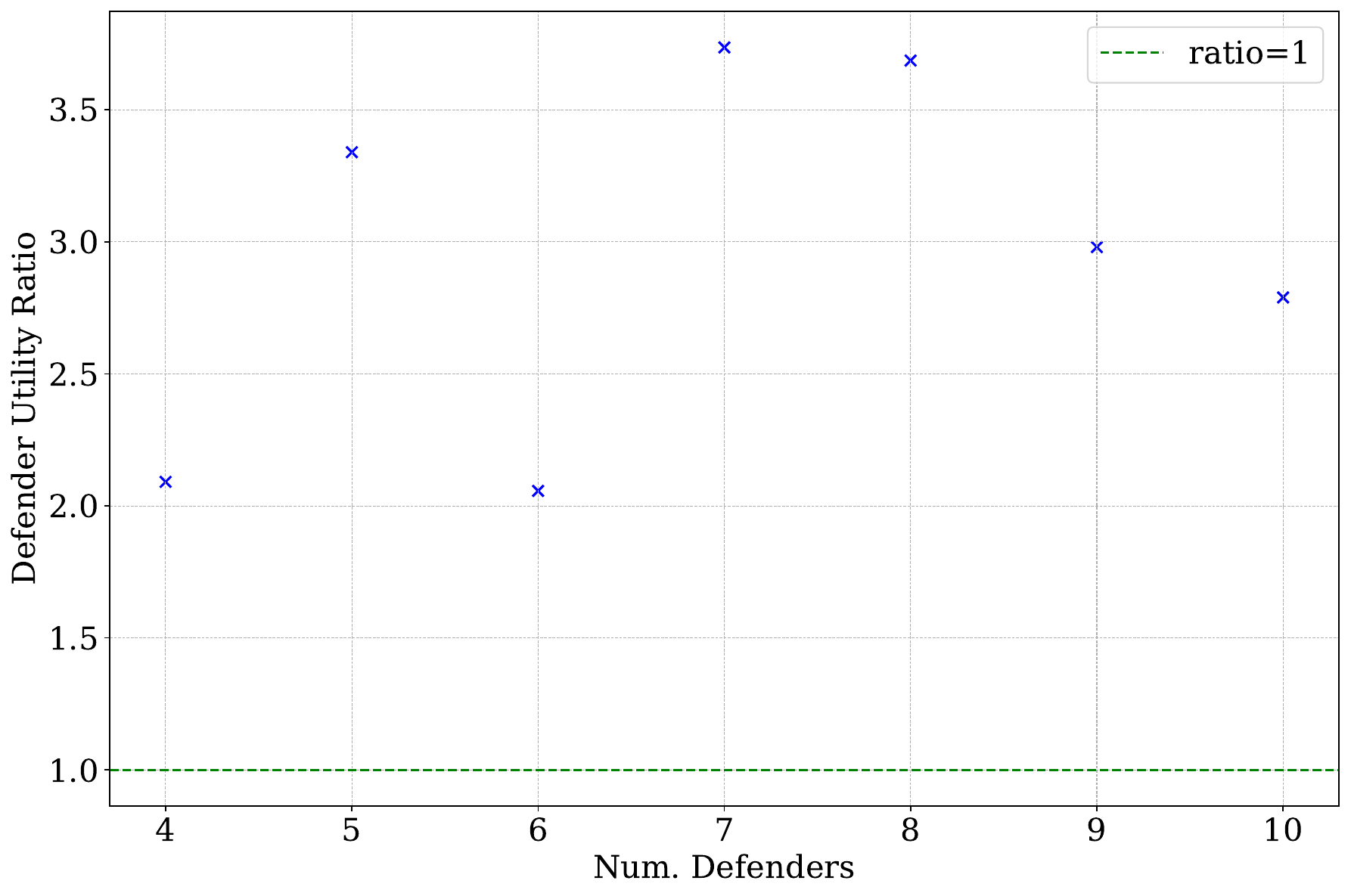} \\
         Runtime (s) &
         Defender utility ratio \\
         vs. battery capacity & vs. number of defender drones \\
         & \\
         \includegraphics[width=0.48\textwidth]{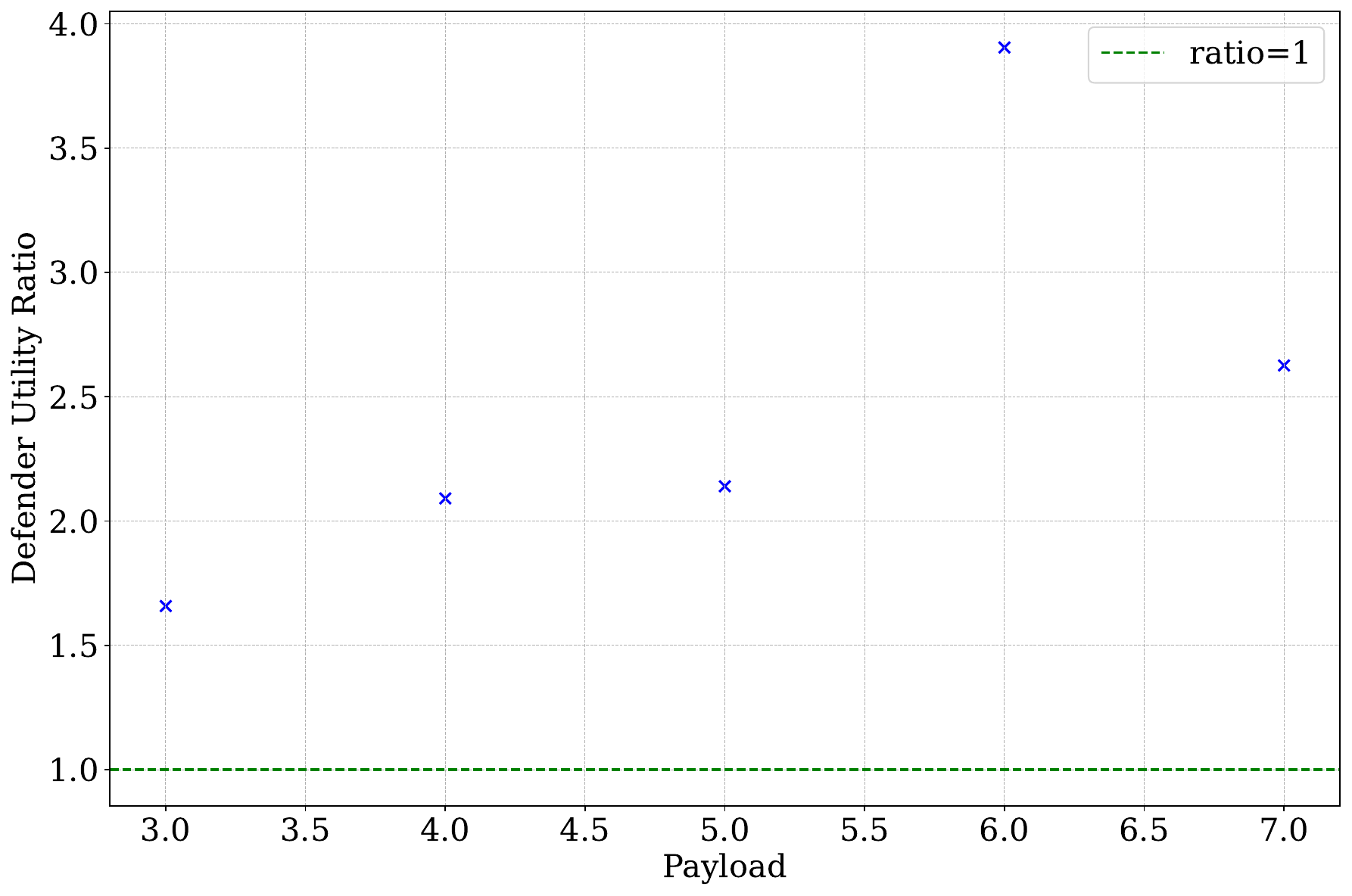} &
         \includegraphics[width=0.48\textwidth]{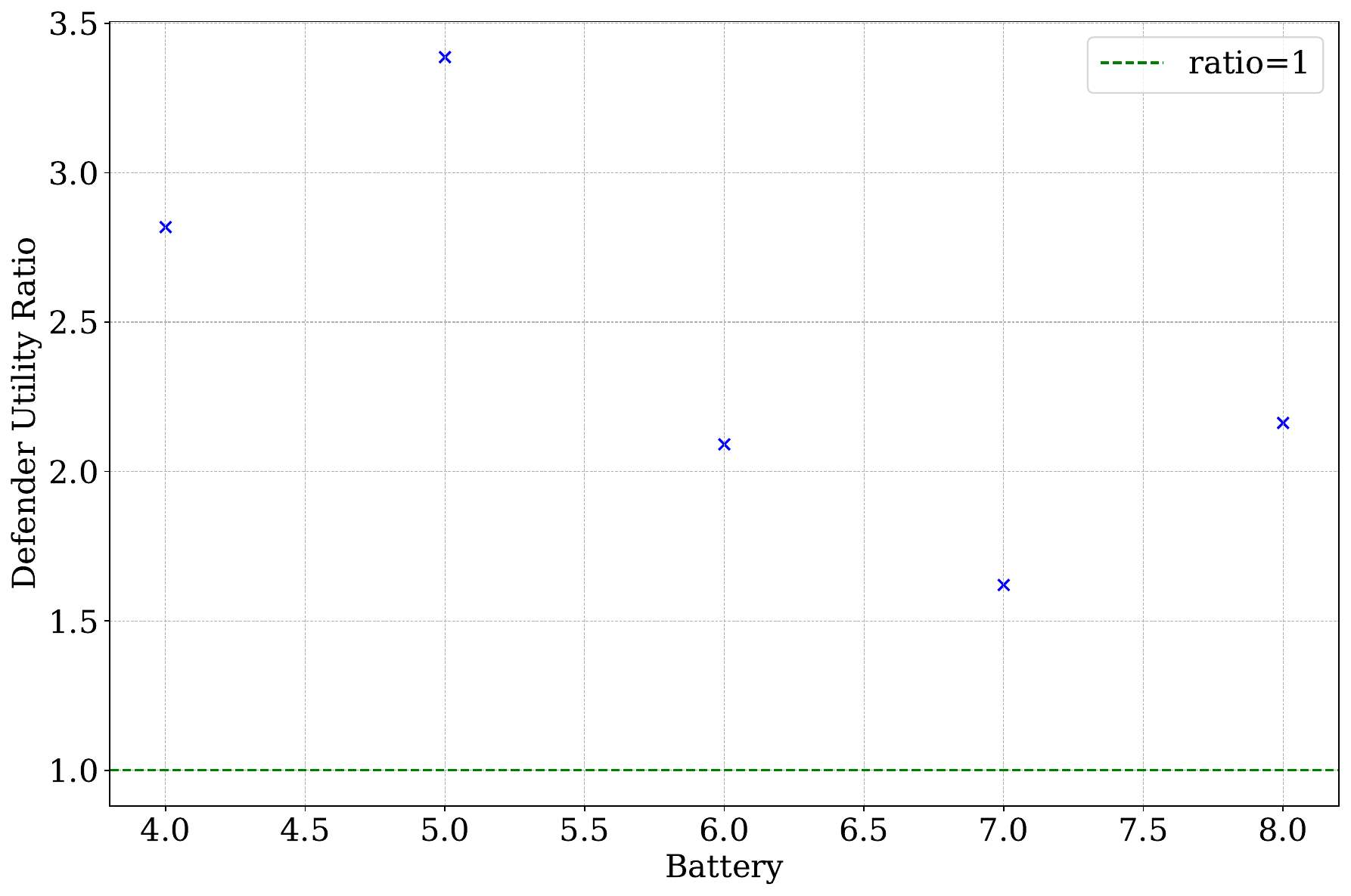} \\
         Defender utility ratio &
         Defender utility ratio \\
         vs. payload &
         vs. battery capacity \\
    \end{tabular}
    \caption{Results of a case study for one major city.}
    \label{fig:city_casestudy_figure}
\end{figure}
Not surprisingly, our very simple baseline is significantly faster than the S2D2 algorithm. Nevertheless, 
Figure~\ref{fig:city_casestudy_figure}(a) shows that S2D2 takes a reasonable amount of runtime (about 3--4 minutes) when the number of defensive drones is varied from 4 to 12 --- moreover, the time seems more or less constant. 
When the payload is varied (see Figure~\ref{fig:city_casestudy_figure}(b)), we see that S2D2's runtime increases linearly --- but still stays in a matter of minutes (13 minutes when payload is 7). 
When the battery capacity is varied (see Figure~\ref{fig:city_casestudy_figure}(c)), S2D2's runtime increases exponentially --- when battery capacity is 7 or less, it takes about 16-17 minutes, but once it goes up to 8, the runtime increases significantly to about 66 minutes. This is not surprisingly because the number of possible paths grows exponentially with the battery capacity --- as battery capacity increases, the attacker can travel further.

Figure~\ref{fig:city_casestudy_figure}(d) shows that when the number of defensive drones is varied from 4 to 12, S2D2 delivers 2 to 4 times the utility provided by the baseline. However, there is no consistent increase in this ratio. While both S2D2 and baseline defender utilities are expected to increase when the number of defenders goes up, we do not see a reason for the ratio to increase nor decrease.
Figure~\ref{fig:city_casestudy_figure}(e) shows that when the payload is varied from 3 to 7, S2D2 delivers 1.7 to 2.7 times the utility provided by the baseline. Again, there is some fluctuation in the ratio. Finally, Figure~\ref{fig:city_casestudy_figure}(f) shows that when the battery capacity of the drones is varied from 4 to 8, S2D2 delivers 1.7 to 4 times the utility provided by the baseline. 
All of these numbers suggest that S2D2 is a definitive improvement over the baseline as far as defender utility is concerned --- this comes at the cost of runtime, even though the latter is still reasonable.

Figure~\ref{fig:histogram_plots_case_study} shows what percentage of the nodes were destroyed for each of the 5 utility values, with $B=6$, $P=4$, $\textsf{ADR}=1$, and $D=4$.
\begin{figure}[h!t]  
    \centering
    \begin{tabular}{ccc}
         \includegraphics[width=.31\textwidth]{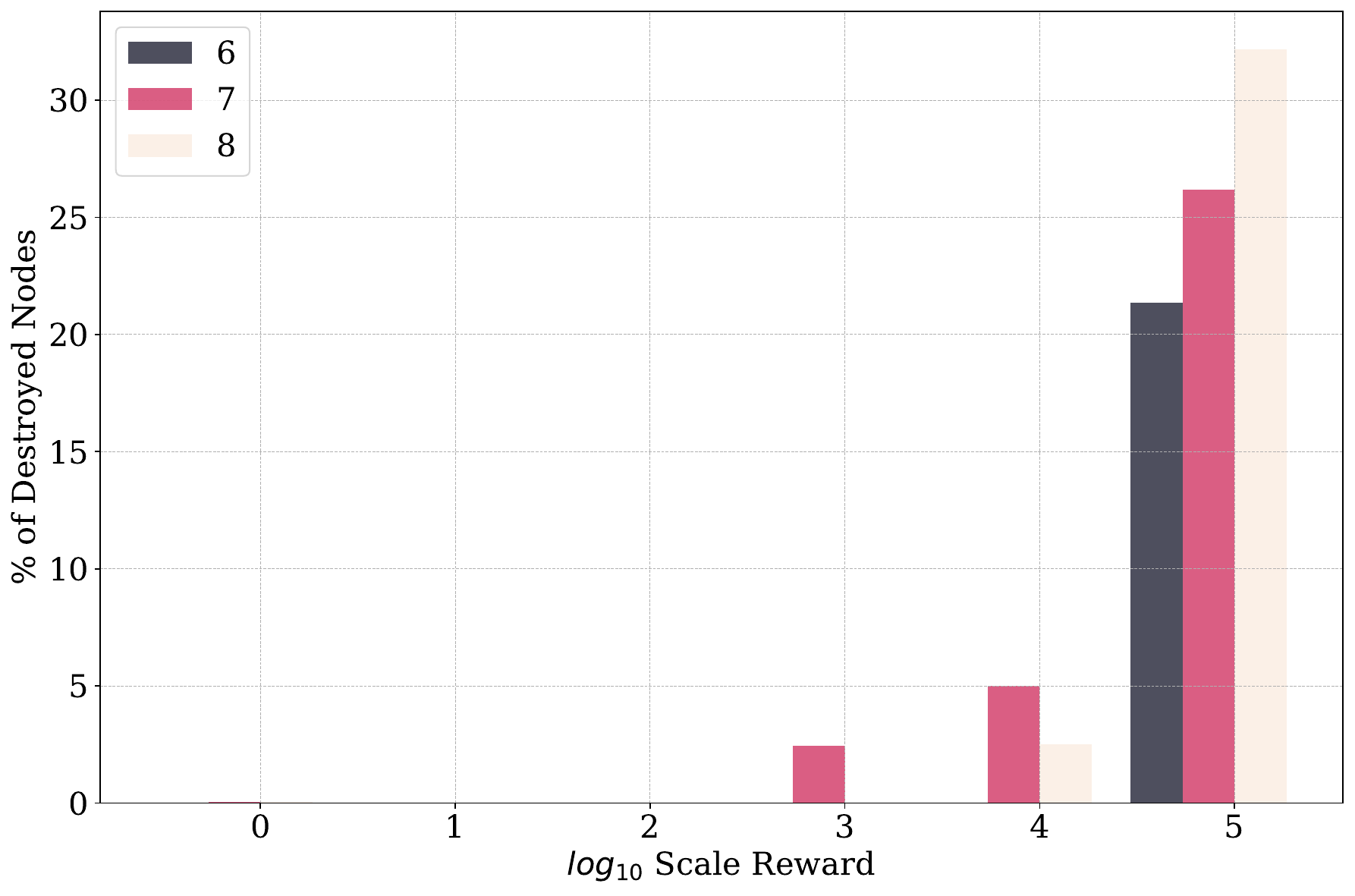} &  
         \includegraphics[width=.31\textwidth]{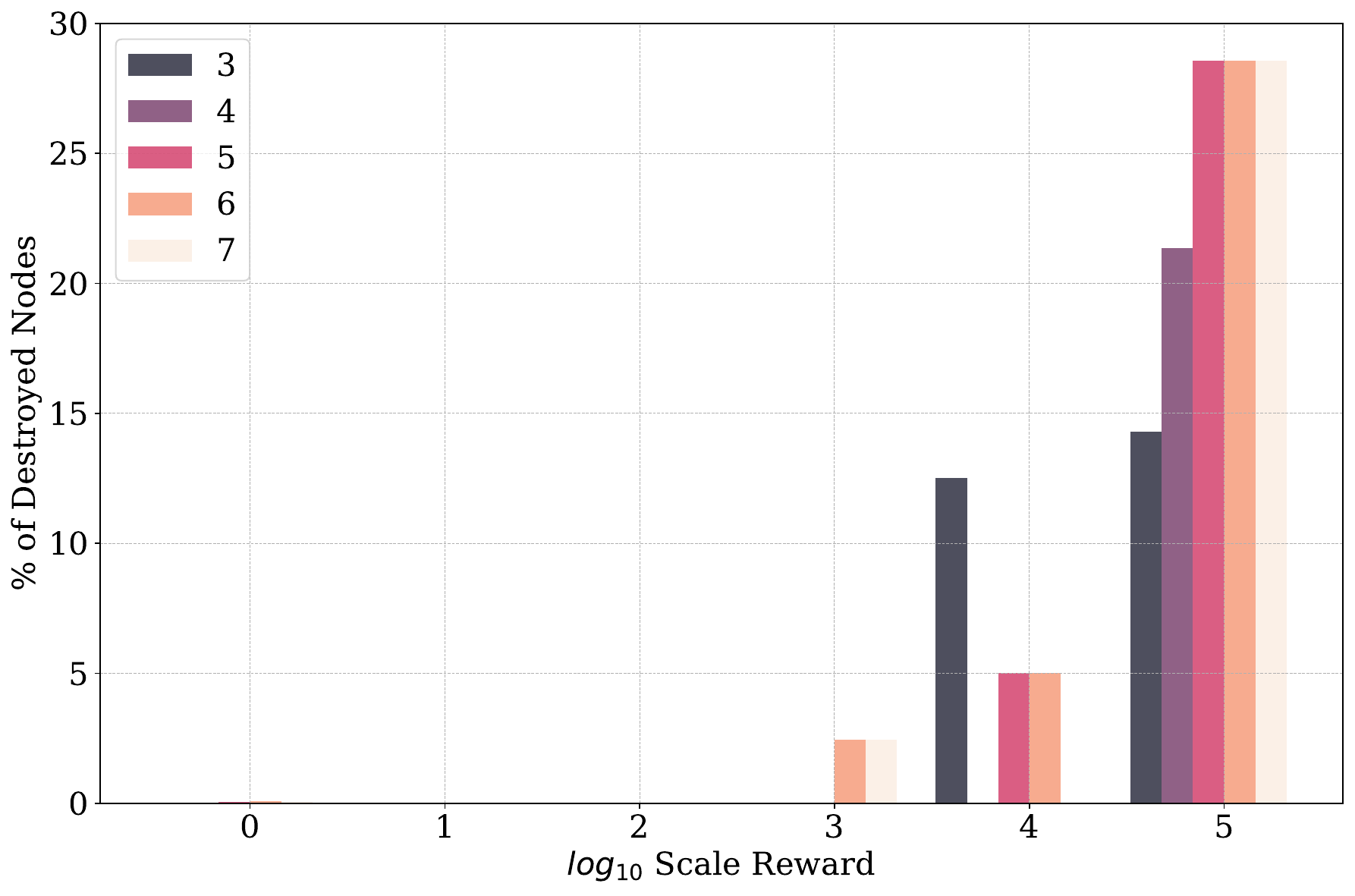} &
         \includegraphics[width=.31\textwidth]{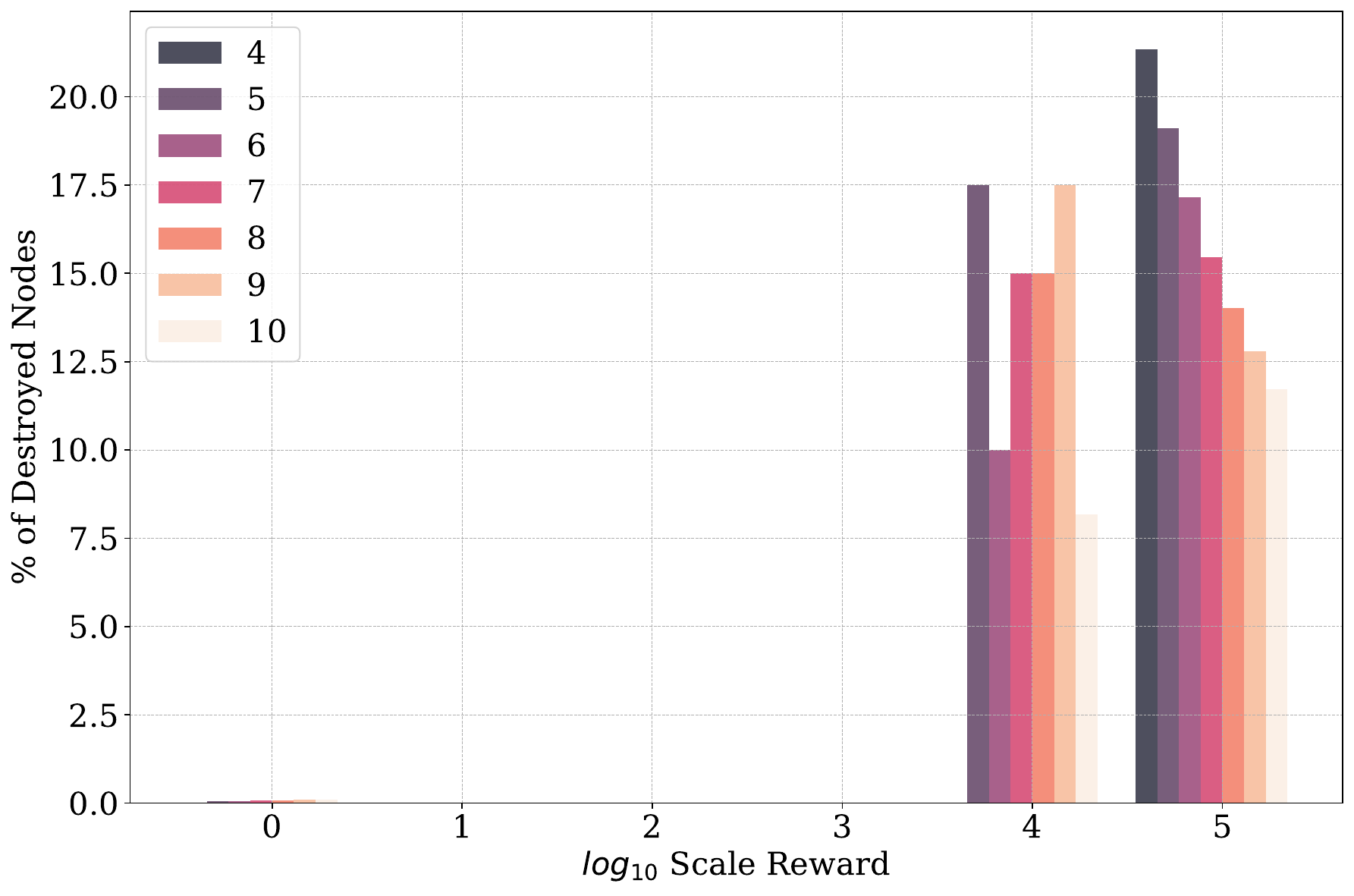}
         \\
         (a) Varying battery& 
         (b) Varying payload &
         (c) Varying number\\
         capacity& 
         &
         of defender drones \\
    \end{tabular}
    \caption{Percentage of destroyed nodes for each utility value.}
    \label{fig:histogram_plots_case_study}
\end{figure}
Specifically, Figure~\ref{fig:histogram_plots_case_study}(a) looks at what happens when varying battery capacity. We observe that as battery capacity increases, the percentage of destroyed nodes having utility 5 nodes increases --- this is probably because the attacker has enough battery to attack multiple 5-ranked nodes with a single drone. 
Figure~\ref{fig:histogram_plots_case_study}(b) shows what happens when we vary the drones' payload. Again, we see that as payload increases, the percentage of destroyed nodes with utility 5 increases --- as the attacker has more payload, it may prefer to attack more of those nodes. The saturation at payload of 5 is probably because there are no more 5-ranked nodes that a single drone can reach with its battery capacity. 
Figure~\ref{fig:histogram_plots_case_study}(c) shows the situation when we vary the number of defender drones. As expected, we see that as the number increases, the percentage of nodes with utility 5 being destroyed decreases. This demonstrates that S2D2 utilizes each added defender drone to cover more of the high ranked nodes. We see a similar correlation for utility 4 nodes, but it is weaker. This can be explained as a side effect where the attacker, taking into consideration that the utility 5 nodes are more protected, now prefers to strike against utility 4 nodes where it is less likely to get caught.

\section{\revision{Discussion}}
\label{sec:discussion}
\revision{\paragraph{Coarsening} The first step of S2D2 is to  coarsen the input graph. Specifically, the neighborhoods should be densely connected inside and relatively isolated from one another. 
However, in urban environments, target locations may not exhibit this kind of clean structure.
First, we emphasize that the output coarsening is not restricted to respect jurisdictional divisions or any type of man-made partition of the city. Second, our intuition was that important areas of interest typically come in clusters, e.g., a dense neighborhood, an industrial area, or a group of government buildings, and so, we expect a good coarsening to exist. For instance, Wall Street in New York City is densely clustered. Likewise, the major government buildings in Washington DC are also densely clustered. We then tested our intuition via two approaches. First, we analyzed real-world large-scale cities, annotated according to the knowledge of security experts. Second, we synthetically annotated nodes with respect to two distributions. In both cases, we observed that the output coarsening was not always ideal. Nevertheless, it did effectively separate the city into neighborhoods and in most of the cities, our experimental results where reasonable even when the coarsening was not ideal. From both theoretical and experimental perspectives, we identify the coarsening approach to be promising, and believe that improvements in coarsening have the potential to significantly improve defense strategies.

\paragraph{Graph Structure} Aerial drones need not be subject to any ground-based constraints of the underlying city, and can seamlessly reach any location through direct aerial traversal. However, we note that this does not suggest that all targets are fully connected. Cities can be quite large, and it takes time to go from one point to another. Also, the defender is also using drones and is therefore moving at a comparable speed as the attacker.\footnote{\revision{We believe that in cases when the drones move so fast that the targets are fully connected, the problem should be modeled as a non-sequential SSG where the attacker has $A \times P$ resources, and the defender has $D$ resources. For that matter, one may use~\cite{korzhyk2011security} to get an efficient exact solution.}} We also emphasize that our choice of modeling a city as a general graph, enables applicability beyond drone swarm defense. Land-based vehicles are more subject to the topographical structure of the city, and so our approach can be even more effective in this setting. Also, not using a perfect grid allows our model to capture various obstacles, e.g., a skyscraper cluster, or an area that is protected with GPS jamming devices.

\paragraph{Attacker adaptivity} While our model allows the defender to apply an adaptive strategy that changes as a function of the state, the attacker is modeled to be static. This simplification seems to contrast with some pursuit-evasion games of similar type, where both players are adaptive. In real-world scenarios, attackers are likely to observe the state, at least partially, and adapt their strategies in accordance. This may raise concerns about whether the model's validity is compromised by this simplification. To this end, we emphasize that while acquiring aerial drones is relatively easy, tracking drones is still imperfect, and as we cover in the related work, is an independent area of interest. For instance, \cite{deb2025drone} analyzes 8 months of drone flights over The Hague, but it is clear that some drone flights may have been missed due to imperfections in tracking. 
Therefore, we assume that both the attacker and the defender are not aware of the locations of the opposing drones at the beginning. The asymmetry stems from the assumption that after significant damage is caused by an attacker drone, it can be tracked. We believe this assumption to be more realistic compared to prior works, at least in the context of drones that can be small, silent, fast, and may not communicate.

In the case the attacker is aware of the defensive drone's location, it could utilize this additional information to evade the defender more effectively, and therefore is expected to cause more damage. This framework could be an interesting future work. However, if there is concern that the defender utility is significantly smaller, a conclusion could be that more research and effort should be put into preventing the attacker from gaining this information to begin with.

\paragraph{Knowledge of $\xx_d$} In Stackelberg games, the attacker knows the defender's committed mixed strategy and best responds to it. This is often justified by the claim that the attacker can survey the defender's strategies before launching an attack. In this paper, we follow this approach. However, this assumption may seem less feasible in the context of state-dependent strategies, as the defender's strategy space is overwhelmingly large. Without sufficient real attacks, many states may not even occur, making it harder for an attacker to accurately observe and infer the defender's strategies. While we acknowledge that surveillance in our case is probably not sufficient for the attacker to unravel the defender's mixed strategy, assuming a stronger attacker may result in a more robust system. The alternative approach, of attempting to model the limitation of an attacker, come at a high risk. If the defense relies on an attacker with certain capabilities which do not hold in reality, it can lead to severe consequences. In contrast, overestimating the attacker capabilities could result in a less effective defense overall, but the caused damage will not exceed model expectations. Indeed, we point out that at least the division of the defender into neighborhoods, and the allocation strategy into neighborhoods, can be observed. The single-drone defensive strategy within a neighborhood is then already more compact. If S2D2 is the chosen implementation, the attacker may be able to compute it by itself. Additionally, the attacker can gain information in other manners. For instance, it may conduct a cyber attack or procure a captured defense drone to get the defender's code. It may also get human intelligence from people who work at the companies that manufactures the drones, or from an employee who developed the defense mechanism for the drones. With all of these considerations in mind, we opted to assume full knowledge of the adversary with regards to the mixed defender strategy.

\paragraph{Zero-sum games} Another alternative approach would be to assume the game is zero sum. In this case, Nash equilibrium is sufficient, and one need not assume knowledge of the defender's mixed strategy. On the other hand, this reduces the model's generality compared to our proposed general-sum setup. This is crucial, as in many cases, the objectives of the attacker and defender may not be aligned. For instance, the attacker may care more about damaging infrastructure, while the defender may care more about civilian casualties, or vice versa. In such scenarios, either a zero-sum based model will not be deployed at all, or alternatively, one would approximate it to be such. Similarly to the previous point, this may result in an unrealistic model of the adversary, and in turn could cause damage that exceeds model expectations.

\paragraph{Scalability} Since the defender's strategy space appears extremely large, questions about scalability may arise. While we claim the runtime to be polynomial with $|\calS^d|$, the strategy space itself is exponential with respect to the input problem size. In this paper, we took a three step approach to address the problem at hand. (\romannumeral 1) First, we provided theoretical analysis and explored a theoretically proven algorithm which is inevitably impractical. (\romannumeral 2) After identifying the bottlenecks, we introduced heuristics to make the algorithm practical. These include narrowing down the strategy space $|\calS^d|$ of the defender, as well as relaxing the conditions for our coarsening. (\romannumeral 3) Finally, we extensively tested the performance of the heuristic algorithm. We acknowledge that S2D2 can be improved both in scale and performance in future works. However, we believe that following the blueprint outlined above is a vital cornerstone. Therefore, down the line, our theoretical results may turn out to be of greater importance than any of the three building blocks of S2D2(coarsening, single drone sequential sub-games, multi-drone meta-game) as instantiated in this work.
}

\section{Conclusions}\label{sec:conclusions}
Multi-drone strikes are increasingly likely to be used to target cities. The threat actors using such techniques will include both nation state actors and terrorist groups.

In this paper, we have developed a realistic model to defend cities against multi-drone attacks via 4 contributions: (i) We extend sequential SSGs involving multiple attack/defense
drones with payload and battery constraints. (ii) We propose the Sequential Stackelberg Drone Defense (S2D2) paradigm to solve the problem of minimizing damage to the city by the attacker and show detailed theoretical results that show that under some conditions related to a novel concept called $\delta$-coarsening, S2D2 provides a strong approximation algorithm for a computationally difficult problem. We prove that
S2D2 outputs an approximate SSE and an upper bound on the error, under such conditions.
(iii) Experiments on a dataset of 80 famous cities compare S2D2 with a heuristic swarm-defense algorithm, demonstrating a trade-off between runtime and defender utility. Importantly, the experiments show that even when the $\delta$-coarsening conditions do not hold, S2D2 still works effectively.
(iv) Our experiments were run with real data on 80 cities using randomly assigned utilities. But in addition, we assigned utilities to locations in 6 cities using general guidelines provided by security experts from the US, EU, Asia, and Middle East. Our experiments also included these 6 cities. Finally, we did a detailed case study of one large North American city with expert input.
To the best of our knowledge, past work on game-theoretic defenses of cities against multi-drone attacks have not done that.

\bibliographystyle{elsarticle-num} 
\bibliography{references}

\end{document}